\documentclass[11pt]{article}
\usepackage{amsfonts,amsthm,amssymb}
\usepackage[fleqn]{amsmath}
\usepackage{hyperref}
\usepackage[pdftex]{graphicx}
\usepackage{color}
\usepackage[leftcaption]{sidecap}
\textheight24.5cm
\textwidth16.5cm
\evensidemargin-0.cm
\oddsidemargin-0.cm
\topmargin-2.5cm

\newcommand{\be}{\begin{eqnarray}}
\newcommand{\ee}{\end{eqnarray}}
\newcommand{\bez}{\begin{eqnarray*}}
\newcommand{\eez}{\end{eqnarray*}}

\newcommand{\pa}{\partial}
\newcommand{\cA}{\mathcal{A}}
\newcommand{\cB}{\mathcal{B}}

\newcommand{\bA}{\boldsymbol{A}}
\newcommand{\balpha}{\boldsymbol{\alpha}}
\newcommand{\bB}{\boldsymbol{B}}
\newcommand{\bbeta}{\boldsymbol{\beta}}
\newcommand{\bC}{\boldsymbol{C}}
\newcommand{\bF}{\boldsymbol{F}}
\newcommand{\bI}{\boldsymbol{I}}
\newcommand{\bJ}{\boldsymbol{J}}
\newcommand{\bLa}{\boldsymbol{\Lambda}}
\newcommand{\bP}{\boldsymbol{P}}
\newcommand{\bPhi}{\boldsymbol{\Phi}}
\newcommand{\bPsi}{\boldsymbol{\Psi}}
\newcommand{\bQ}{\boldsymbol{Q}}
\newcommand{\bTh}{\boldsymbol{\Theta}}
\newcommand{\bU}{\boldsymbol{U}}
\newcommand{\bV}{\boldsymbol{V}}
\newcommand{\bX}{\boldsymbol{X}}

\newcommand{\bbC}{\mathbb{C}}
\newcommand{\bbE}{\mathbb{E}}
\newcommand{\bbN}{\mathbb{N}}
\newcommand{\bbR}{\mathbb{R}}
\newcommand{\bbS}{\mathbb{S}}
\newcommand{\bbZ}{\mathbb{Z}}

\newcommand{\tbP}{\tilde{\boldsymbol{P}}}
\newcommand{\tbQ}{\tilde{\boldsymbol{Q}}}
\newcommand{\tbX}{\tilde{\boldsymbol{X}}}

\newcommand{\bd}{\bar{\mathrm{d}}}

\newcommand{\imag}{\mathrm{i}}
\renewcommand{\Re}{\mathrm{Re}}
\renewcommand{\Im}{\mathrm{Im}}

\numberwithin{equation}{section}

\theoremstyle{plain}
\newtheorem{Theorem}{Theorem}[section]

\newtheorem{Proposition}[Theorem]{Proposition}
\newtheorem{corollary}[Theorem]{Corollary}

\theoremstyle{definition}

\newtheorem{remark}[Theorem]{Remark}
\newtheorem{example}[Theorem]{Example}


\date{}

\begin{document}

\title{\huge \bf `Riemann Equations' \\ in Bidifferential Calculus}

\author{ 
\sc{O. Chvartatskyi}$^{a,b}$, 
\sc{F. M\"uller-Hoissen}$^a$ 
and \sc{N. Stoilov}$^{a,b}$ \\ 
 $^a$ \small Max Planck Institute for Dynamics and Self-Organization, 
         37077 G\"ottingen, Germany \\
 \small e-mail: alex.chvartatskyy@gmail.com, 
 folkert.mueller-hoissen@ds.mpg.de  \\
 $^b$ \small Mathematisches Institut,
 Georg-August-Universit\"at G\"ottingen, 37073 G\"ottingen, Germany \\ 
 \small e-mail: nstoilo@uni-math.gwdg.de  
}

\maketitle

\begin{abstract} 
We consider equations that formally resemble a matrix Riemann (or Hopf) equation in the framework of 
bidifferential calculus. With different choices of a first-order bidifferential calculus, 
we obtain a variety of equations, including a semi-discrete and a fully discrete version of the matrix 
Riemann equation. A corresponding universal solution-generating method then either yields a 
(continuous or discrete) Cole-Hopf transformation, or leaves us with the problem of solving Riemann 
equations (hence an application of the hodograph method). 
If the bidifferential calculus extends to second order, 
solutions of a system of `Riemann equations' are also solutions of an equation that arises, on 
the universal level of bidifferential calculus, as an integrability condition. 
Depending on the choice of bidifferential calculus, the latter can represent a 
number of prominent integrable equations, like self-dual Yang-Mills, as well as matrix versions of 
the two-dimensional Toda lattice, Hirota's bilinear difference equation, (2+1)-dimensional NLS,
KP and Davey-Stewartson equations. For all of them, a recent (non-isospectral) binary Darboux transformation 
result in bidifferential calculus applies, which can be specialized to generate solutions of the 
associated `Riemann equations'. For the latter, we clarify the relation between these specialized 
binary Darboux transformations and the aforementioned solution-generating method. 
From (arbitrary size) matrix versions  of the `Riemann equations' associated with an integrable 
equation, possessing a bidifferential calculus formulation, 
multi-soliton-type solutions of the latter can be generated. This includes `breaking'  
multi-soliton-type solutions of the self-dual Yang-Mills and the (2+1)-dimensional NLS equation, which 
are parametrized by solutions of Riemann equations. 
\end{abstract}

\noindent
\begin{center}
\small
\parbox{15cm}{
\emph{Keywords}: bidifferential calculus, breaking soliton, Burgers equation, chiral model, Cole-Hopf transformation, 
    Darboux transformation, Davey-Stewartson equation, Hirota bilinear difference equation, Hopf equation, 
    hierarchy, integrable discretization, kink, KP, Riemann equation, self-dual Yang-Mills, soliton, Toda lattice. }
\end{center}

\section{Introduction}
Given an associative algebra $\cA$ and two derivations $\mathrm{d} , \bd : \cA \rightarrow \Omega^1$ 
into an $\cA$-bimodule, the two equations 
\be
     \bd \phi - (\mathrm{d} \phi) \, \phi = 0     \label{Riemann_1}
\ee
and
\be
     \bd \phi + \phi \, \mathrm{d} \phi = 0 \, ,    \label{Riemann_2}
\ee
for $\phi \in \cA$, resemble a \emph{Riemann equation} (also known as Hopf, 
inviscid Burgers, dispersionless KdV, or nonlinear transport equation). For a simple choice 
of the \emph{first-order bidifferential calculus}, given by 
$\cA, \Omega^1,\mathrm{d},\bd$, these are indeed matrix Riemann 
equations (see in particular 
\cite{LRB83,Bogo92V,Fera93,Fera94,GMA94,Jose+Vasu01,Arri+Hick02,Sant+Zenc07,Zenc07a,Zenc08sdYM}
for appearances in the literature). 
For other choices, (\ref{Riemann_1}) and (\ref{Riemann_2}) turn out to be very different 
equations, however. This is so because we allow $\phi$ to be  
an operator (e.g., a differential or difference operator), a familiar step   
in the theory of integrable systems, where one considers a `zero curvature' (Zakharov-Shabat, 
or Lax) equation in general for operator expressions. 
Accordingly, $\cA$ will then be an algebra involving operators. 
Several examples will be presented in this work. The most basic ones are semi and a full 
discretizations of matrix Riemann equations. The following 
table shows that they are obtained from the bidifferential calculus formulation for the continuous 
Riemann equation essentially by replacing operators $\pa_x,\pa_t$ of taking the partial derivatives 
with respect to $x$, respectively $t$, by shift operators $\bbS_0$, respectively $\bbS_1$, acting on 
corresponding discrete variables. 
\begin{center}
\def\arraystretch{1.5}
\begin{tabular}[t]{l|l|l|l}
   $\mathrm{d}$ & $\bd$ & & $\bd \phi - (\mathrm{d} \phi) \, \phi = 0$ \\
   \hline
   $[\pa_x , \cdot ]$ &  $[\pa_t , \cdot ]$ & $\varphi = \phi$ & $\varphi_t - \varphi_x \, \varphi = 0$ \\
   \hline
   $[\bbS_0 , \cdot ]$ &  $[\pa_t , \cdot ]$ & $\varphi = \phi \, \bbS_0$ 
         & $\varphi_t - (\varphi_{,0} - \varphi) \, \varphi = 0$ \\
   \hline
   $ [ \bbS_1^{-1} \bbS_0 , \cdot]$ & $[ \bbS_1^{-1} , \cdot]$ & $\varphi = \phi \, \bbS_0$ 
         & $\varphi_{,1} - \varphi - (\varphi_{,1} - \varphi_{,0}) \, \varphi =0 $ 
\end{tabular}
\end{center}
In these examples, $\cA$ is the algebra of $m \times m$ matrices of functions, extended by shift operators 
in the last two cases.
$\Omega^1$ is simply given by $\cA$, $\varphi$ is a matrix of functions, 
and we set $\varphi_{,0} := \bbS_0 \varphi \bbS_0^{-1}$, $\varphi_{,1} := \bbS_1 \varphi \bbS_1^{-1}$. 
Since $\mathrm{d}$ and $\bd$ are given by commutators, they are obviously derivations. 
In all examples considered in this work, (\ref{Riemann_1}) and (\ref{Riemann_2}) are related by 
a transpose or adjoint operation. It is therefore sufficient to concentrate on (\ref{Riemann_1}).

Suppose there is an extension of the derivation $\mathrm{d}$ to a map 
$\cA \stackrel{\mathrm{d}}{\rightarrow} \Omega^1 \stackrel{\mathrm{d}}{\rightarrow} \Omega^2$, with another 
$\cA$-bimodule $\Omega^2$, and correspondingly for $\bd$, such that 
\be
    \mathrm{d}^2 = \bd^2 = \mathrm{d} \bd + \bd \mathrm{d} = 0  \, .   \label{bidiff_conds}
\ee
In this case we have a \emph{second-order bidifferential calculus},  
$(\Omega,\mathrm{d},\bd)$, with $\Omega = \bigoplus_{r=0}^2 \Omega^r$, $\Omega^0 := \cA$ 
\cite{DMH00a,DMH08bidiff}. 
Then, acting with $\mathrm{d}$ on (\ref{Riemann_1}) or (\ref{Riemann_2}) yields 
\be
     \mathrm{d} \bd \phi + \mathrm{d} \phi \, \mathrm{d} \phi = 0 \, .   \label{phi_eq}
\ee
By choosing appropriate bidifferential calculi, this equation 
leads to various integrable partial differential and/or difference equations (PDDEs) 
(cf. \cite{DMH08bidiff} and references therein). A particular example is another 
semi-discretization of the Riemann equation, the (Lotka-) Volterra lattice equation (see Appendix~\ref{app:LV}), 
which is \emph{not} obtained from (\ref{Riemann_1}).
Equations like (\ref{Riemann_1}), (\ref{Riemann_2}) and (\ref{phi_eq}) are of a universal nature 
and integrable PDDEs, derived from them, can be thought of as realizations.  
Equation (\ref{phi_eq}) originally arose from replacing $\mathrm{d}$ and $\bd$ by flat 
anticommuting `covariant derivatives', as explained, e.g., in \cite{DMH08bidiff}. The use of a calculus 
similar to the calculus of differential forms on a manifold reduces otherwise lengthy and often rather 
intransparent computations to a few lines, simply by exploiting the Leibniz rule for $\mathrm{d}$ and $\bd$, 
and (\ref{bidiff_conds}). Why are there \emph{two} maps, $\mathrm{d}$ and $\bd$, instead of a single  
analog of the familiar exterior derivative? In this way the integrable structure underlying 
the self-dual Yang-Mills equation is expressed most concisely \cite{DMH08bidiff} and drastically generalized. 
Moreover, it generalizes the situation in Fr\"olicher-Nijenhuis theory (cf. Remark 2.5 in \cite{DMH13SIGMA}). 

Some efficient solution generating methods can be easily derived for the 
universal equations. By choosing a bidifferential calculus in such a way that one of these 
equations becomes equivalent to some PDDE, the method applies to the latter, and in this way 
one typically obtains a method for that equation. 
In particular, this shows that solution generating methods 
for various equations have a surprisingly simple origin and a simple universal proof. 

For some realizations of (\ref{Riemann_1}) (and (\ref{Riemann_2})), there is a simple 
`linearization method' (see Section~\ref{sec:CH}), which in several cases is the origin of a Cole-Hopf-type 
transformation. Such realizations are in the class of `C-integrable equations' (see, e.g., 
\cite{Calo+Eckh87,Calo+Xiao92JMP,Calo92JMP}). Matrix Burgers equations are the prototype examples. 
The method is ineffective for the Riemann equation, in which case the method of 
characteristics, or hodograph method, applies instead (cf. \cite{Sant+Zenc07}). 

A Cole-Hopf-type transformation does not extend to (\ref{phi_eq}), for which, however, there is  
another universal method. Indeed, in \cite{DMH13SIGMA} (also see Section~\ref{sec:bDT}), 
a solution-generating result representing an abstract version of binary 
Darboux transformations \cite{Matv+Sall91,Matv00} has been derived for  
(\ref{phi_eq}) and the `(Miura-) dual' equation 
\be
    \mathrm{d} [ (\bd g) \, g^{-1} ] = 0 \, ,  \label{g_eq}
\ee
with (invertible) dependent variable $g \in \cA$. 
More precisely, this is a solution generating result for the `Miura equation'
\be
     (\bd g) \, g^{-1} = \mathrm{d} \phi \, ,  \label{Miura} 
\ee
which has both equations, (\ref{phi_eq}) and (\ref{g_eq}), as integrability conditions, 
provided that (\ref{bidiff_conds}) holds. This binary Darboux transformation method 
requires solutions of versions of (\ref{Riemann_1}) and (\ref{Riemann_2}) as inputs (see (\ref{P,Q_eqs})), 
which yields yet another motivation to explore these `Riemann equations'. In most cases, soliton 
families are obtained by choosing $\mathrm{d}$- and $\bd$-constant solutions of these equations, which 
are very special and somewhat trivial solutions. 
The non-autonomous chiral model equation that arises in integrable reductions of the vacuum Einstein
(-Maxwell) equations is an important exception in this respect, see \cite{DKMH11sigma,DMH13SIGMA} 
and also Section~\ref{subsubsec:naCM}. More generally, this concerns equations possessing a non-isospectral 
linear problem (see \cite{Maison78,BZM87} and also, e.g., \cite{Bogo90RMS,CGP97IP}). 

Furthermore, the present work partly originated from the simple observation that (\ref{Miura}) becomes 
(\ref{Riemann_1}) (respectively (\ref{Riemann_2})), if we set $g = \pm \phi$ (respectively 
$g^{-1} = \pm \phi$). The solution-generating result in \cite{DMH13SIGMA} then 
still works and can indeed be applied to generate large classes of exact solutions 
of various realizations of (\ref{Riemann_1}) (respectively (\ref{Riemann_2})). According to the implication
\be
    \bd \phi - (\mathrm{d} \phi) \, \phi = 0  \quad \Longrightarrow \quad
    \left \{ \begin{array}{l} \mathrm{d} \bd \phi + (\mathrm{d} \phi) \, \mathrm{d} \phi = 0 \, , \\
             \mathrm{d} [ (\bd \phi) \, \phi^{-1} ] = 0 \, ,
             \end{array} \right.      \label{Riem_to_phi_eq}
\ee
assuming that the first order bidifferential calculus extends to second order, the system 
of equations given by a realization of (\ref{Riemann_1}) provides us with a special class 
of solutions of the associated realizations of (\ref{phi_eq}) and (\ref{g_eq}). It is one of the main 
aims of this work to explore what this `Riemann system' is for several integrable equations and 
what the corresponding class of solutions contains 
(see Sections~\ref{sec:Riemann_associates}-\ref{sec:DS}).  
 For example, the `Riemann system' associated with the 
(matrix) KP equation consists of the first two members of a (matrix) Burgers hierarchy 
(see Section~\ref{sec:B&KP}), so in this case the implication 
(\ref{Riem_to_phi_eq}), with the upper equation on the right, expresses a well-known fact 
(cf. \cite{DMH09Sigma} and references cited there). 
The `Riemann system' associated with the self-dual 
Yang-Mills equation consists of two matrix Riemann equations (see Section~\ref{subsec:sdYM}). 
Here we recover an observation made in \cite{Zenc08sdYM}. In the case of the integrable 
non-autonomous chiral model underlying integrable reductions of the Einstein vacuum equations, 
the `Riemann system' determines a matrix version of the pole trajectory equation of Belinski and 
Zakharov \cite{Beli+Zakh78} (see Section~\ref{subsubsec:naCM}). We also explore hitherto unknown 
`Riemann systems' associated with several other integrable equations possessing a bidifferential calculus 
formulation. 

In those cases where a `Riemann system' admits a Cole-Hopf-type transformation, this is 
certainly the simplest way to generate exact solutions. Darboux transformations for PDDEs resulting 
from (\ref{Riemann_1}) (or (\ref{Riemann_2})) are in this respect not the first choice, 
but may be helpful, depending on the addressed problem. We are particularly interested in 
understanding how the two methods are related. 
Of course, more general classes of solutions of (\ref{phi_eq}) 
and (\ref{g_eq}) than those obtained from the associated `Riemann system' can be generated 
using the binary Darboux transformation method (Theorem~\ref{thm:main}) 
and we will report more on this in a separate work. 

Another aspect addressed in the present work concerns the way in which bidifferential 
calculi for many integrable equations are composed of those for `Riemann equations'. 

The paper is organized as follows. 
Section~\ref{sec:CH} presents the aforementioned simple solution-generating method for (\ref{Riemann_1}).  
In Section~\ref{sec:bDT} we recall from \cite{DMH13SIGMA} the binary Darboux transformation theorem 
in bidifferential calculus, in a slightly generalized form, and specialize it in a corollary in 
the way described above. A simple, but crucial observation is that the theorem already works 
on the level of a first order bidifferential calculus and then applies to `Riemann equations'. 

Section~\ref{sec:Riemann} treats matrix Riemann 
equations, their integrable (semi and full) discretizations, and corresponding hierarchies. 
In the semi-discrete case, this is the semi-discrete Burgers hierarchy first treated 
in \cite{LRB83}. The (fully) discrete Riemann hierarchy contains a matrix version of the 
discrete Burgers equation derived in \cite{HLW99}. We are not aware of previous explorations of its 
first member, an integrable discrete Riemann equation. Furthermore, the Darboux transformations 
derived for the matrix versions of the semi- and fully discrete Riemann equations are new to the 
best of our knowledge. 

Sections~\ref{sec:Riemann_associates}-\ref{sec:DS} present a collection of important 
examples of (matrix versions of) integrable equations arising via (\ref{Riem_to_phi_eq}) 
from a system of `Riemann equations'. This includes a generalization of Hirota's bilinear 
difference equation, which we have not seen in the literature yet. In Section~\ref{sec:NLS}, we 
consider the (2+1)-dimensional Nonlinear Schr\"odinger (NLS) equation \cite{Calo+Dega76NCB32,Zakh80,Stra92NLS}. 
In particular, we show that `breaking solitons' obtained in \cite{Bogo91III} are solutions of the associated 
`Riemann system'. Furthermore, we obtain matrix versions of these solutions and moreover multi-soliton  
solutions, which are parametrized by solutions of a `Riemann system' 
(see Proposition~\ref{prop:2+1NLS_multi-breaking_solitons}). 
Section~\ref{sec:B&KP} presents the relation between the first two equations of a (matrix) Burgers 
hierarchy and the (matrix) KP equation as a special case of (\ref{Riem_to_phi_eq}). Section~\ref{sec:DS} 
treats the Davey-Stewartson (DS) equation. In the scalar case, single dromion, soliton and solitoff 
solutions turn out to be solutions of the associated Riemann system. 
Section~\ref{sec:conclusion} contains some concluding remarks.

As above, also in the following \emph{Riemann equation}, respectively \emph{Riemann system}, without an adjective  
\emph{discrete} or \emph{semi-discrete}, will always refer to the familiar partial differential 
equation, respectively a system of such equations. In contrast, \emph{`Riemann equation'}, respectively 
\emph{`Riemann system'}, will refer to any equation, respectively a system of equations, that realizes  
(\ref{Riemann_1}) (which only in special cases becomes a Riemann equation or a Riemann system).

\section{The linearization method}
\label{sec:CH}
Writing
\be
     \phi = \Phi \, \phi_0 \, \Phi^{-1} \, ,   \label{phi_transform}
\ee
with an invertible $\Phi \in \cA$, (\ref{Riemann_1}) is equivalent to
\be
    \bd \phi_0 - (\mathrm{d} \phi_0) \, \phi_0 + [ \gamma \, , \,  \phi_0 ] = 0 \, , 
           \label{Riemann_1_mod}
\ee
where $\gamma \in \Omega^1$ is defined by 
\be
     \bd \Phi - (\mathrm{d} \Phi) \, \phi_0 - \Phi \, \gamma  = 0 \, .  \label{CH_Phi_eq}
\ee
Let us, however, consider the last equation as an equation for $\Phi$, for a given $\gamma$. 
Then, if $\phi_0$ is a solution of (\ref{Riemann_1_mod}), $\phi$ given by 
(\ref{phi_transform}) solves (\ref{Riemann_1}). 
For fixed $\phi_0$, (\ref{phi_transform}) 
and (\ref{CH_Phi_eq}), written as linear equations for $\Phi$, can thus be regarded as a 
Lax pair for (\ref{Riemann_1}).  

Here we only have to solve \emph{linear} equations in order to construct new solutions. 
Obviously, (\ref{phi_transform}) can only lead to a new solution if the solution $\Phi$ 
of the linear equation does \emph{not} commute with $\phi_0$. 
This excludes the example of the scalar Riemann equation (see 
Section~\ref{subsec:Riem}), but non-trivial solutions of \emph{matrix} 
Riemann equations can be obtained (also see \cite{Sant+Zenc07}).  
Since $\phi_0$ may involve an operator, (\ref{phi_transform}) can also express a Cole-Hopf-type 
transformation. This includes the well-known Cole-Hopf 
transformation for the Burgers equation, see Section~\ref{sec:B&KP}. 

\begin{remark}
(\ref{Riemann_1_mod}) looks more general than (\ref{Riemann_1}). But 
if $\gamma$ does not depend on $\phi_0$, as assumed above, the $\gamma$ term in 
(\ref{Riemann_1_mod}) can be absorbed via the redefinition $\bd \mapsto \bd - [\gamma, \cdot]$. 
Furthermore, replacing (\ref{CH_Phi_eq}) by $\bd \Phi - (\mathrm{d} \Phi) \, \phi_0 + [\gamma , \Phi ]  = 0$, 
then $\phi$ given by (\ref{phi_transform}) also satisfies (\ref{Riemann_1_mod}). 
\end{remark}

\begin{remark}
Instead of (\ref{Riemann_1}), we may consider the more general equation
\be
     \bd \phi - (\mathrm{d} \phi) \, \eta(\phi) = \rho(\phi)  \, ,    \label{gRiemann_1}
\ee
where the function $\eta$ and $\rho \in \Omega^1$ are required to satisfy
$\eta(\phi)=\Phi \, \eta(\phi_0) \, \Phi^{-1}$, $\rho(\phi)= \Phi \, \rho(\phi_0) \, \Phi^{-1}$. 
(\ref{gRiemann_1}) is then equivalent to
\bez
    \bd \phi_0 - (\mathrm{d} \phi_0) \, \eta(\phi_0) + [ \gamma \, , \,  \phi_0 ] = \rho(\phi_0) \, ,
\eez
where $\gamma \in \Omega^1$ is defined by 
\bez
     \bd \Phi - (\mathrm{d} \Phi) \, \eta(\phi_0) - \Phi \, \gamma  = 0 \, .  
\eez
Considering these as equations for a given $\gamma$, everything works well as long as $\phi$ 
has values in an algebra of matrices of functions, which is then a case treated in \cite{Sant+Zenc07}. 
If the algebra $\cA$ contains operators and $\phi$ is an operator expression, it will typically 
be impossible to reduce 
(\ref{gRiemann_1}) to a PDDE. There are exceptions when $\eta$ is homogeneous. 
However, in those cases we looked at, it turned out that they can also be treated 
starting from (\ref{Riemann_1}), see Remarks~\ref{rem:r-sdRiemann} and \ref{rem:r-dRiemann}.
In \cite{Sant+Zenc07} also generalizations of matrix Riemann equations to any number 
of independent variables are treated. In principle, our formalism can incorporate this via a straight
generalization of (\ref{gRiemann_1}) and extending $\mathrm{d}$ to several commuting derivations 
$\mathrm{d}_i: \cA \rightarrow \cA$, $i=1,\ldots,N$. However, we have not been able to find a PDDE 
arising in this way outside of the (continuous) framework of \cite{Sant+Zenc07}. 
\end{remark}

\section{Binary Darboux transformations in bidifferential calculus}
\label{sec:bDT}
\setcounter{equation}{0}
In the following, let $\cA$ be the algebra of all finite-dimensional matrices with entries 
in a unital associative algebra $\cB$, where the product of two matrices is defined to be zero if the 
sizes of the two matrices do not match. We assume that there is an $\cA$-bimodule
$\Omega^1$ and derivations $\mathrm{d}$ and $\bd$ on $\cA$ with values in $\Omega^1$,  
such that $\mathrm{d}$ and $\bd$ preserve the size of matrices. $\mathrm{Mat}(m,n,\cB)$ denotes 
the set of $m \times n$ matrices over $\cB$. For fixed $m,n \in \bbN$, 
$I=I_m$ and $\bI=I_n$ denote the $m \times m$, respectively $n \times n$, identity matrix, 
and we assume that they are constant with respect to $\mathrm{d}$ and $\bd$. 
Let us recall the main theorem in \cite{DMH13SIGMA}, in a slightly generalized form. It should be noticed, 
however, that here we do \emph{not} require the conditions (\ref{bidiff_conds}). All we need is that 
$\mathrm{d}$ and $\bd$ are derivations on $\cA$, i.e., they satisfy the Leibniz rule.

\begin{Theorem}
\label{thm:main}
Let $\phi_0,g_0 \in \mathrm{Mat}(m,m,\cB)$ solve (\ref{Miura}), and let 
$\bP,\bQ \in \mathrm{Mat}(n,n,\cB)$ be solutions of 
\be
   \bd \bP - (\mathrm{d} \bP) \, \bP = - [ \balpha  \, , \, \bP ] \, , \qquad
   \bd \bQ - \bQ \, \mathrm{d} \bQ = [ \bbeta \, , \, \bQ ] \, ,   \label{P,Q_eqs}
\ee
with some $\balpha, \bbeta \in \Omega^1$.
Let $\bU \in \mathrm{Mat}(m,n,\cB)$ and $\bV \in \mathrm{Mat}(n,m,\cB)$ be solutions
of the linear equations 
\be
    \bd \bU = (\mathrm{d} \bU) \, \bP + (\mathrm{d} \phi_0) \, \bU + \bU \, \balpha\, , \qquad 
    \bd \bV = \bQ \, \mathrm{d} \bV - \bV \, \mathrm{d} \phi_0 + \bbeta \, \bV \, .    \label{U,V_eqs}
\ee 
Furthermore, let $\bX \in \mathrm{Mat}(n,n,\cB)$ be an invertible solution of the 
(inhomogeneous) linear equations 
\be
   && \bX \, \bP - \bQ \, \bX = \bV \, \bU \, ,  \label{preSylvester}  \\
   && \bd \bX - (\mathrm{d} \bX) \, \bP + (\mathrm{d} \bQ) \, \bX + (\mathrm{d} \bV) \, \bU 
      = \bX \, \balpha + \bbeta \, \bX \, . \label{dX_eq}
\ee
Then 
\be
    \phi = \phi_0 + \bU \bX^{-1} \bV   \qquad \mbox{and} \qquad
    g = (I + \bU (\bQ \, \bX)^{-1} \bV) \, g_0                   \label{thm_new_Miura_sol}
\ee
yields a new solution of the Miura equation (\ref{Miura}).    \hfill $\square$
\end{Theorem}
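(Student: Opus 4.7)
The plan is to verify by direct computation that the pair $(\phi, g)$ defined in (\ref{thm_new_Miura_sol}) satisfies the Miura equation $(\bd g)\,g^{-1} = \mathrm{d}\phi$. The only tools invoked will be the Leibniz rule for the derivations $\mathrm{d}$ and $\bd$ and its consequence $\mathrm{d}(M^{-1}) = -M^{-1}(\mathrm{d} M)\,M^{-1}$ (and the analogue for $\bd$) for any invertible $M\in\cA$, which follows from the assumption that $I,\bI$ are $\mathrm{d}$- and $\bd$-constant. In line with the remark preceding the theorem, the conditions (\ref{bidiff_conds}) will play no role.

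First I introduce $h := I + \bU(\bQ\bX)^{-1}\bV$, so that $g = h\,g_0$. The Leibniz rule together with the Miura equation for the seed $(\phi_0, g_0)$ gives
\[
(\bd g)\,g^{-1} = (\bd h)\,h^{-1} + h\,(\mathrm{d}\phi_0)\,h^{-1}.
\]
Since $\mathrm{d}\phi = \mathrm{d}\phi_0 + \mathrm{d}(\bU\bX^{-1}\bV)$, the Miura equation for $(\phi,g)$ reduces, after right-multiplication by $h$, to the single algebraic identity
\[
\bd h + [h,\mathrm{d}\phi_0] = \mathrm{d}(\bU\bX^{-1}\bV)\,h,
\]
which is the target of the remaining computation.

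Next I would expand both sides by the Leibniz rule and substitute the hypotheses. On the left, $\bd h$ produces $\bd$-derivatives of $\bU,\bV,\bQ,\bX$, which are eliminated via (\ref{P,Q_eqs}), (\ref{U,V_eqs}), and (\ref{dX_eq}); each substitution converts a $\bd$-derivative into a combination of $\mathrm{d}$-derivatives, $\mathrm{d}\phi_0$-terms, and $\balpha,\bbeta$-contributions. On the right, Leibniz applied to $\bU\bX^{-1}\bV$ gives the three terms involving $\mathrm{d}\bU,\mathrm{d}\bX,\mathrm{d}\bV$, all multiplied from the right by $h$. The two expanded sides are then matched using the Sylvester equation (\ref{preSylvester}) in the form $\bV\bU = \bX\bP - \bQ\bX$, which supplies the Sherman--Morrison--Woodbury-type rearrangements
\[
\bU(\bQ\bX)^{-1}\bV\bU = \bU(\bQ\bX)^{-1}\bX\bP - \bU, \qquad \bV\bU\bX^{-1}\bV = \bX\bP\bX^{-1}\bV - \bQ\bV,
\]
that allow passage between the ``dressed'' form $\bU(\bQ\bX)^{-1}\bV$ inside $h$ and the ``naked'' form $\bU\bX^{-1}\bV$ that appears in $\mathrm{d}\phi$.

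I expect the main obstacle to be sheer bookkeeping rather than any deep new idea: the expanded identity contains a large number of terms, and their matching hinges on three separate cancellations that must be tracked carefully. First, the $\balpha$- and $\bbeta$-contributions pair up and cancel, because they enter symmetrically on the two sides of the defining equations (\ref{P,Q_eqs})--(\ref{dX_eq}). Second, the $\mathrm{d}\phi_0$-terms brought in by the $\bU$- and $\bV$-equations in (\ref{U,V_eqs}) reassemble precisely into the commutator $[h,\mathrm{d}\phi_0]$ on the left. Third, the residual $\mathrm{d}$-derivative terms involving $\mathrm{d}\bU,\mathrm{d}\bV,\mathrm{d}\bP,\mathrm{d}\bQ,\mathrm{d}\bX$ coincide on the two sides after the Sylvester rearrangements above are used. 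Once these three cancellations are verified, the displayed identity holds and hence $(\phi,g)$ solves (\ref{Miura}).
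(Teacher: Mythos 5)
Your argument is correct, but it is not the route the paper takes: the paper never verifies Theorem~\ref{thm:main} by computation here — it states the $\balpha=\bbeta=0$ case as Theorem~2.1 of \cite{DMH13SIGMA} and justifies the generalization through the gauge-type transformation of Remark~\ref{rem:transformation_to_alpha,beta}, in which $\balpha,\bbeta$ arise from conjugating $\bP,\bQ$ by $\bPsi_1,\bPsi_2$ and the solution formulas are checked to be invariant. Your direct verification is self-contained and avoids any appeal to the earlier result (and to the existence of suitable $\bPsi_1,\bPsi_2$), at the price of the bookkeeping you describe; the paper's route is shorter and explains the meaning of $\balpha,\bbeta$, but is less explicit as a proof. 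For the record, your reduction and your three cancellations do close up exactly as predicted: writing $h=I+\bU(\bQ\bX)^{-1}\bV$, the hypotheses give $\bd(\bQ\bX)=\bbeta\,(\bQ\bX)+(\bQ\bX)\,\balpha+\bQ\,(\mathrm{d}\bX)\,\bP-\bQ\,(\mathrm{d}\bV)\,\bU$ (the $\mathrm{d}\bQ$ contributions cancel between (\ref{P,Q_eqs}) and (\ref{dX_eq})), after which all $\balpha,\bbeta$ terms in $\bd h$ cancel, the $\mathrm{d}\phi_0$ terms cancel against $[h,\mathrm{d}\phi_0]$, and the remaining terms match $\mathrm{d}(\bU\bX^{-1}\bV)\,h$ once one uses your Sylvester rearrangements together with the identity $(\bQ\bX)^{-1}\bQ=\bX^{-1}$; the latter rests on the invertibility of $\bQ\bX$, which is implicit in the statement since $g$ involves $(\bQ\bX)^{-1}$, and similarly the passage from $\bd h+[h,\mathrm{d}\phi_0]=\mathrm{d}(\bU\bX^{-1}\bV)\,h$ back to $(\bd g)\,g^{-1}=\mathrm{d}\phi$ uses only the invertibility of $g_0$ and $h$, so no condition from (\ref{bidiff_conds}) enters — consistent with the paper's remark that only the Leibniz rule is needed. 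The one thing your write-up leaves undone is the actual term-by-term expansion, but the identities you single out are precisely the ones that make it work.
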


\begin{remark}
For $\balpha = \bbeta =0$, the two equations in (\ref{P,Q_eqs}) are $n \times n$ 
matrix versions of (\ref{Riemann_1}) and (\ref{Riemann_2}), respectively.
Unless $\bP$ and $\bQ$ are $\mathrm{d}$- and $\bd$-constant,  
the linear equations (\ref{U,V_eqs}) constitute a \emph{non-isospectral problem} 
(cf. \cite{Maison78,BZM87} and also, e.g., \cite{Bogo90RMS,CGP97IP}). $\bP$ and $\bQ$ may be 
regarded as operator (in particular, matrix) versions of a spectral parameter. 
\end{remark}

\begin{remark}
\label{rem:transformation_to_alpha,beta}
Theorem~\ref{thm:main} appeared in \cite{DMH13SIGMA} with $\balpha = \bbeta = 0$. 
Let us start with this version. Correspondingly, in this remark a reference to an equation in 
Theorem~\ref{thm:main} shall mean the respective equation with $\balpha = \bbeta =0$.
The introduction of $\balpha$ and $\bbeta$ is motivated by a freedom 
of transformations.  Let us write
\be
     \bP = \bPsi_1 \tilde{\bP} \bPsi_1^{-1} \, , \qquad
     \bQ = \bPsi_2^{-1} \tilde{\bQ} \bPsi_2 \, ,    \label{P,Q_transform}
\ee
with invertible $n \times n$ matrices $\bPsi_1, \bPsi_2$. 
(\ref{P,Q_eqs}) then takes the form
\bez
 \bd \tilde{\bP} - (\mathrm{d} \tilde{\bP}) \, \tilde{\bP} 
   = - [ \balpha  \, , \, \tilde{\bP} ] \, , \qquad
 \bd \tilde{\bQ} - \tilde{\bQ} \, \mathrm{d} \tilde{\bQ} 
   = [ \bbeta \, , \, \tilde{\bQ} ] \, ,
\eez
where $\balpha, \bbeta$ are now \emph{defined} by 
\bez
    \bd \bPsi_1 - (\mathrm{d} \bPsi_1)  \, \tilde{\bP}
  = \bPsi_1 \, \balpha \, ,  \qquad
    \bd \bPsi_2 - \tilde{\bQ} \, \mathrm{d} \bPsi_2 
  = \bbeta \, \bPsi_2 \, .
\eez
In terms of $\tbX := \bPsi_2 \bX \bPsi_1$, $\tilde{\bU} := \bU \bPsi_1$ and $\tilde{\bV} := \bPsi_2 \bV$,
(\ref{preSylvester}) and (\ref{dX_eq}) are equivalent to
\bez
 \tbX \, \tilde{\bP} - \tilde{\bQ} \, \tbX = \tilde{\bV} \, \tilde{\bU} \, , \qquad
  \bd \tbX - (\mathrm{d} \tbX) \, \tilde{\bP} + (\mathrm{d} \tilde{\bQ}) \, \tbX 
      + (\mathrm{d} \tilde{\bV}) \, \tilde{\bU} 
  = \tbX \, \balpha + \bbeta \, \tbX \, ,
\eez
where we used the linear equations for $\bPsi_1$ and $\bPsi_2$. 
The linear equations (\ref{U,V_eqs}) are correspondingly transformed to
\bez
   \bd \tilde{\bU} = (\mathrm{d} \tilde{\bU} ) \, \tilde{\bP} + (\mathrm{d} \phi_0) \, \tilde{\bU} 
                     + \tilde{\bU} \, \balpha \, , \qquad
   \bd \tilde{\bV} = \tilde{\bQ} \, \mathrm{d} \tilde{\bV} - \tilde{\bV} \, \mathrm{d} \phi_0
                     + \bbeta \, \tilde{\bV} \, .
\eez
The expressions (\ref{thm_new_Miura_sol}) for the new solutions are invariant under 
$\bU \mapsto \tilde{\bU}$, $\bV \mapsto \tilde{\bV}$ and $\bX \mapsto \tbX$. 
Abstracting $\balpha, \bbeta$ from their above origin, leads to a slightly generalized version 
of Theorem~2.1 in \cite{DMH13SIGMA}, which is our Theorem~\ref{thm:main}. 
The freedom in the choice of $\balpha$ and $\bbeta$ turns out to be very helpful 
in order to derive a convenient expression for the solution of (\ref{preSylvester}) 
and (\ref{dX_eq}) in concrete examples. 
\end{remark}

\begin{remark}
\label{rem:CH=>Riemann_redundant}
A particularly important observation is the following. If the `Riemann equations' (\ref{P,Q_eqs}) 
are completely solvable via the method in Section~\ref{sec:CH}, with (\ref{P,Q_transform}) and  
$\mathrm{d}$- and $\bd$-constant $\tbP$ and $\tbQ$,  
then the computation in Remark~\ref{rem:transformation_to_alpha,beta} 
shows that Theorem~\ref{thm:main} is equivalent to its restriction, where $\bP$ and $\bQ$ 
are $\mathrm{d}$- and $\bd$-constant and commute with $\balpha$, respectively $\bbeta$. 
In this case, (\ref{P,Q_eqs}) is redundant and Theorem~\ref{thm:main} 
reduces to a method that generates solutions of (\ref{Miura}) from solutions of only \emph{linear} equations. 
We meet this situation if (\ref{Riemann_1}) is solvable by a Cole-Hopf 
transformation. But it does \emph{not} hold if (\ref{Riemann_1}) (hence (\ref{P,Q_eqs})) 
involves a Riemann equation. 
\end{remark}

The theorem includes 
a case, where solutions are generated from solutions of nonlinear `Riemann equations', and 
the linear equations (\ref{U,V_eqs}) are eliminated. 

\begin{corollary}
\label{cor:sol_via_Riem_sys}
Let $\phi_0,g_0 \in \mathrm{Mat}(m,m,\cB)$ solve (\ref{Miura}), and let 
$\bP,\bQ \in \mathrm{Mat}(n,n,\cB)$ be solutions of 
\bez
   \bd \bP - (\mathrm{d} \bP) \, \bP = - [ \balpha  \, , \, \bP ] \, , \qquad
   \bd \bQ - \bQ \, \mathrm{d} \bQ = [ \bbeta \, , \, \bQ ] \, ,
\eez
with $\balpha, \bbeta \in \Omega^1$. 
Let $\bX \in \mathrm{Mat}(n,n,\cB)$ be an invertible solution of the 
linear equations 
\bez
   \bX \, \bP - \bQ \, \bX = \bV_0 \, \bU_0 \, ,  \qquad
   \bd \bX - (\mathrm{d} \bX) \, \bP + (\mathrm{d} \bQ) \, \bX  
      = \bX \, \balpha + \bbeta \, \bX \, ,
\eez
where $\bU_0 \in \mathrm{Mat}(m,n,\cB)$ and $\bV_0 \in \mathrm{Mat}(n,m,\cB)$ are 
$\mathrm{d}$- and $\bd$-constant. 
Then 
\bez
    \phi = \phi_0 + \bU_0 \bX^{-1} \bV_0   \qquad \mbox{and} \qquad
    g = (I + \bU_0 (\bQ \, \bX)^{-1} \bV_0) \, g_0                  
\eez
yields a new solution of the Miura equation (\ref{Miura}), if 
\be
     (\mathrm{d} \phi_0) \, \bU_0 + \bU_0 \, \balpha = 0 \, , \qquad
     \bV_0 \, \mathrm{d} \phi_0 - \bbeta \, \bV_0 = 0 \, .   \label{phi0_alpha_beta_cond}
\ee
\end{corollary}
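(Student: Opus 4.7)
The plan is to recognize this corollary as the special case of Theorem~\ref{thm:main} in which one chooses $\bU = \bU_0$ and $\bV = \bV_0$ to be both $\mathrm{d}$- and $\bd$-constant. I would then check that, under this choice, the hypotheses of the theorem reduce exactly to the hypotheses stated in the corollary, after which the conclusion on $\phi$ and $g$ follows at once from (\ref{thm_new_Miura_sol}).

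First I would substitute $\bU = \bU_0$ into the linear $\bU$-equation in (\ref{U,V_eqs}). Since $\mathrm{d} \bU_0 = \bd \bU_0 = 0$, both the left-hand side $\bd \bU_0$ and the first term $(\mathrm{d} \bU_0) \, \bP$ on the right vanish, so that equation collapses to the algebraic identity $(\mathrm{d} \phi_0) \, \bU_0 + \bU_0 \, \balpha = 0$. Symmetrically, the $\bV$-equation collapses to $\bV_0 \, \mathrm{d} \phi_0 - \bbeta \, \bV_0 = 0$. These are precisely the two conditions (\ref{phi0_alpha_beta_cond}), so assuming the hypothesis of the corollary is equivalent to assuming (\ref{U,V_eqs}) for the chosen $\bU_0,\bV_0$.

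Next I would inspect (\ref{preSylvester}) and (\ref{dX_eq}). Equation (\ref{preSylvester}) reads verbatim $\bX \, \bP - \bQ \, \bX = \bV_0 \, \bU_0$, matching the corollary. In (\ref{dX_eq}), the term $(\mathrm{d} \bV) \, \bU$ becomes $(\mathrm{d} \bV_0) \, \bU_0 = 0$ by $\mathrm{d}$-constancy of $\bV_0$, leaving exactly the linear equation for $\bX$ stated in the corollary. At this point every hypothesis of Theorem~\ref{thm:main} is verified, and the formulas $\phi = \phi_0 + \bU_0 \bX^{-1} \bV_0$ and $g = (I + \bU_0 (\bQ \bX)^{-1} \bV_0) \, g_0$ are precisely those produced by the theorem.

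I expect no genuine obstacle: the argument is a substitution-and-check. The only point worth flagging is that full constancy of $\bU_0$ and $\bV_0$ under \emph{both} derivations is essential, because $\bd$-constancy alone would leave the stray term $(\mathrm{d} \bU_0) \, \bP$ (respectively $\bQ \, \mathrm{d} \bV_0$) and block the reduction of (\ref{U,V_eqs}) to the purely algebraic conditions (\ref{phi0_alpha_beta_cond}).
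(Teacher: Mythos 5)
Your proposal is correct and is essentially the paper's own proof: the authors also obtain the corollary by taking $\bU$ and $\bV$ to be $\mathrm{d}$- and $\bd$-constant in Theorem~\ref{thm:main}, noting that (\ref{U,V_eqs}) then holds iff (\ref{phi0_alpha_beta_cond}) does, with the simplification of (\ref{dX_eq}) via $(\mathrm{d}\bV_0)\,\bU_0=0$ left implicit. Your spelled-out verification adds nothing beyond what the paper's one-line argument already contains.
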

\begin{proof}
This is obtained by choosing $\bU$ and $\bV$ to be $\mathrm{d}$- and $\bd$-constant in 
Theorem~\ref{thm:main}. 
(\ref{U,V_eqs}) is then satisfied iff (\ref{phi0_alpha_beta_cond}) holds.
\end{proof}

One way to satisfy (\ref{phi0_alpha_beta_cond}) is to set 
$\bbeta = -\balpha = \bV_0 \, (\mathrm{d} \phi_0) \, \bU_0$,  
and $\bU_0 \bV_0 = I$ if $\mathrm{d} \phi_0 \neq 0$. 

\vskip.2cm
The above results are of a very general algebraic nature and may also find applications 
in mathematical problems remote from differential or difference equations, which we address in 
this work. 
In the following sections we choose the graded algebra $\Omega$ to be of the form
\be
    \Omega = \cA \otimes \bigwedge(\bbC^K) \, ,    \label{Omega_wedge}
\ee
where $\bigwedge(\bbC^K)$ is the exterior (Grassmann) algebra of the vector space $\bbC^K$. 
In this case it is sufficient to define $\mathrm{d}$ and $\bd$ on $\cA$. Then they extend to 
$\Omega$ by treating elements of $\bigwedge(\bbC^K)$ as constants. We denote by 
$\xi_1,\ldots,\xi_K$ a basis of $\bigwedge^1(\bbC^K)$. 
Furthermore, we will henceforth assume that $\phi$ in (\ref{Riemann_1}) (or (\ref{Riemann_2}))  
is an $m \times m$ matrix (with entries in a unital associative algebra $\cB$).

\subsection{Darboux transformations for `Riemann equations'}
Now we state conditions under which Theorem~\ref{thm:main} generates solutions 
of the special cases (\ref{Riemann_1}) and (\ref{Riemann_2}) of the Miura equation 
(\ref{Miura}). These are the `Riemann equations' on which we concentrate in this work.

\begin{corollary}
\label{cor:main}
Let $\phi_0$ be a solution of (\ref{Riemann_1}), respectively (\ref{Riemann_2}). 
Let $\bP,\bQ, \bU, \bV, \bX$ be solutions of (\ref{P,Q_eqs}) - (\ref{dX_eq}) and
\be
     \bQ \, \bV = \bV \, \phi_0 \, ,  \qquad \mbox{respectively} \qquad 
     \bU \, \bP = - \phi_0 \, \bU  \, .   \label{cor_constraints}
\ee
Let $\phi$ be given by the expression in (\ref{thm_new_Miura_sol}). Then 
$\phi$ is a solution of (\ref{Riemann_1}), respectively (\ref{Riemann_2}). 
\end{corollary}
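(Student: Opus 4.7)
The plan is to deduce this from Theorem~\ref{thm:main} by choosing the auxiliary solution $g_0$ of the Miura equation (\ref{Miura}) in such a way that Miura collapses to the desired Riemann equation. The key observation is that (\ref{Riemann_1}) is precisely the statement that $(\phi_0,g_0)=(\phi_0,\phi_0)$ satisfies (\ref{Miura}), since $(\bd \phi_0)\phi_0^{-1}=\mathrm{d}\phi_0$ rearranges to (\ref{Riemann_1}); likewise, (\ref{Riemann_2}) is equivalent to $(\phi_0,\phi_0^{-1})$ satisfying (\ref{Miura}), since $\bd(\phi_0^{-1})\,\phi_0=-\phi_0^{-1}\bd\phi_0=\mathrm{d}\phi_0$ rearranges to (\ref{Riemann_2}). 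In each case I will show that, thanks to the extra constraint (\ref{cor_constraints}), the new $g$ produced by Theorem~\ref{thm:main} is locked to the new $\phi$ in the same way ($g=\phi$ or $g=\phi^{-1}$), so that the new Miura solution automatically satisfies the respective Riemann equation.

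For the case of (\ref{Riemann_1}), I take $g_0=\phi_0$. Theorem~\ref{thm:main} then yields
\bez
   \phi=\phi_0+\bU\bX^{-1}\bV,\qquad
   g=\phi_0+\bU(\bQ\bX)^{-1}\bV\,\phi_0 .
\eez
The constraint $\bQ\bV=\bV\phi_0$ gives $(\bQ\bX)^{-1}\bV\phi_0=\bX^{-1}\bQ^{-1}(\bQ\bV)=\bX^{-1}\bV$, whence $g=\phi$, and Miura reduces to (\ref{Riemann_1}) for $\phi$.

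For the case of (\ref{Riemann_2}), I take $g_0=\phi_0^{-1}$ and aim to verify $\phi\, g=I$. Expanding
\bez
   \phi\, g=\big(\phi_0+\bU\bX^{-1}\bV\big)\big(\phi_0^{-1}+\bU(\bQ\bX)^{-1}\bV\,\phi_0^{-1}\big)
          =I+\big[\phi_0\bU(\bQ\bX)^{-1}\bV+\bU\bX^{-1}\bV+\bU\bX^{-1}\bV\,\bU(\bQ\bX)^{-1}\bV\big]\phi_0^{-1},
\eez
I first use the Sylvester-like relation (\ref{preSylvester}), i.e.\ $\bV\bU=\bX\bP-\bQ\bX$, which gives $\bV+\bV\bU(\bQ\bX)^{-1}\bV=\bX\bP(\bQ\bX)^{-1}\bV$. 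The bracket collapses to $\phi_0\bU(\bQ\bX)^{-1}\bV+\bU\bP(\bQ\bX)^{-1}\bV$, which vanishes by the constraint $\bU\bP=-\phi_0\bU$. Hence $g=\phi^{-1}$, and Miura reduces to (\ref{Riemann_2}).

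The only genuinely nontrivial step is the algebra in the second case, where the identification $g=\phi^{-1}$ requires combining the inhomogeneous part of (\ref{preSylvester}) with the constraint $\bU\bP=-\phi_0\bU$; the first case is an immediate one-line manipulation using only $\bQ\bV=\bV\phi_0$. Notably, the linear equations (\ref{dX_eq}) and (\ref{U,V_eqs}) are not directly invoked in this reduction—they are used only through Theorem~\ref{thm:main} to guarantee that $(\phi,g)$ solves Miura in the first place.
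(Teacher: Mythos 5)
Your proof is correct and follows essentially the same route as the paper's: apply Theorem~\ref{thm:main} with $g_0=\phi_0$ (respectively $g_0=\phi_0^{-1}$) and use the constraint (\ref{cor_constraints}) to force $g=\phi$ (respectively $g=\phi^{-1}$), so that the Miura equation (\ref{Miura}) collapses to the desired `Riemann equation'. The only cosmetic difference is in the second case, where the paper checks $g^{-1}=\pm\phi$ via the explicit inverse formula $g^{-1}=g_0^{-1}\,[I-\bU\,(\bX\bP)^{-1}\bV]$, while you verify $\phi\,g=I$ directly from (\ref{preSylvester}); both hinge on the same constraint $\bU\bP=-\phi_0\,\bU$.
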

\begin{proof}
Setting $g = \pm \phi$ in (\ref{Miura}), turns it into (\ref{Riemann_1}). 
The additional condition, the first of (\ref{cor_constraints}), originates from evaluating $g = \pm \phi$ 
using the expressions (\ref{thm_new_Miura_sol}) for $\phi$ and $g$. 
Correspondingly, setting $g^{-1} = \pm \phi$ in (\ref{Miura}), it becomes (\ref{Riemann_2}).
Using $g^{-1} = g_0^{-1} \, [I - \bU \, (\bX \bP)^{-1} \, \bV]$, 
we are led to the second of (\ref{cor_constraints}). 
\end{proof}

\begin{remark}
As a consequence of the assumptions in Corollary~\ref{cor:main}, we obtain
\bez
     (\bX \bP \bX^{-1}) \, \bV = \bV \phi \, ,
              \qquad \mbox{respectively} \qquad
     \bU \, (\bX^{-1} \bQ \bX)  = - \phi \, \bU  \, .
\eez
These are counterparts of (\ref{cor_constraints}). 
If $n=m$ and $\bQ=\phi_0=0$, so that the first of conditions (\ref{cor_constraints}) holds, and 
if $\bU$ is invertible, then $\phi = \bU \, \bP \, \bU^{-1}$. In this special case, 
the Corollary thus boils down to the method described in Section~\ref{sec:CH}. 
\end{remark}

\begin{remark}
\label{rem:no-go_for_Riem}
Let $\cB$ be an algebra of (real or complex) functions of independent variables. 
We assume the spectrum condition $\sigma(\bP) \cap \sigma(\bQ) = \emptyset$, and $n>m$. 
The first of (\ref{cor_constraints}) then implies that the $n \times nm$ matrix
$(\bV | \bQ \bV | \cdots |\bQ^{n-1} \bV) = (\bV | \bV \phi_0 | \cdots |\bV \phi_0^{n-1})$
has at most $m$, hence less than $n$ linearly independent columns. In this case the pair $(\bQ,\bV)$ 
is said to be \emph{not controllable} (see, e.g., \cite{Hearon77}). 
A corresponding statement holds in case of the second of (\ref{cor_constraints}). 
Theorem 3 in \cite{Hearon77} 
then says that (\ref{preSylvester}) has no \emph{invertible} solution, so that under the stated 
conditions the solution-generating method in Corollary~\ref{cor:main} does \emph{not} work.  
This concerns in particular the case of the Riemann equation (see   
Section~\ref{subsec:Riem}). This negative result 
should not come as a surprise since `soliton methods' are known not to work in case 
of hydrodynamic-type systems of which the Riemann equation is the prototype. 
\end{remark}

\begin{remark}
\label{rem:thm_gen}
In (\ref{Riemann_1_mod}) we met an apparently generalized version of (\ref{Riemann_1}). 
By a redefinition of the derivation $\bd$, we can cast it into the form (\ref{Riemann_1}). 
In Theorem~\ref{thm:main}, we can perform the reverse step.
The Miura equation then reads $( \bd g + [\gamma , g] ) \, g^{-1} = \mathrm{d} \phi$. Equations
(\ref{P,Q_eqs}), (\ref{preSylvester}), (\ref{dX_eq}) and (\ref{thm_new_Miura_sol}) remain unchanged, 
while (\ref{U,V_eqs}) is modified to
$\bd \bU = (\mathrm{d} \bU) \, \bP + (\mathrm{d} \phi_0) \, \bU + \bU \, \balpha - \gamma \, \bU$ and  
$\bd \bV = \bQ \, \mathrm{d} \bV - \bV \, \mathrm{d} \phi_0 + \bbeta \, \bV + \bV \, \gamma$.  
The first part of Corollary~\ref{cor:main} then holds correspondingly. For example, 
if the first of (\ref{cor_constraints}) is satisfied, and if $\phi_0$ solves (\ref{Riemann_1_mod}), 
then $\phi$, given by the formula in (\ref{thm_new_Miura_sol}), solves the same equation, i.e.,
$\bd \phi - (\mathrm{d} \phi) \, \phi + [ \gamma \, , \,  \phi ] = 0$.
\end{remark}

\section{Matrix Riemann equations and integrable discretizations}
\label{sec:Riemann}
In this section, we consider the case $\Omega^1 = \cA$.  
$\mathrm{d}$ and $\bd$ then have to be derivations of $\cA$.

\subsection{Riemann equation} 
\label{subsec:Riem}
Let $\cA$ be the algebra of matrices of (real or complex) smooth functions of 
independent variables $x,t$. For $f \in \cA$, let
\bez
     \mathrm{d} f = f_x  \, , \qquad   \bd f = f_t  \, ,
\eez
where a subscript indicates a partial derivative with respect to the corresponding 
independent variable. (\ref{Riemann_1}) is now the matrix Riemann equation
\be
     \phi_t = \phi_x \, \phi \, .   \label{Riem_eq}
\ee
As a consequence of (\ref{Riem_eq}), the eigenvalues of $\phi$ satisfy the corresponding scalar 
version of this equation \cite{Bogo92V}, hence a scalar Riemann equation. 
(\ref{phi_transform}) only generates new solutions in the matrix case ($m>1$). 
Let $\phi_0$ be a solution of (\ref{Riem_eq}) that commutes with its partial 
derivatives. By use of the method of characteristics, solutions of (\ref{CH_Phi_eq}), 
with $\gamma=0$, are then given by (also see \cite{Sant+Zenc07})
\bez
  \Phi = A_0 
    + \sum_{i=1}^k A_i \, f_i( t \, \phi_0 + x \, I) \, ,  
\eez
with any analytic functions $f_i$ and constant $m \times m$ matrices $A_0,A_i$. 
(\ref{phi_transform}) then yields a new solution of (\ref{Riem_eq}). 
According to Remark~\ref{rem:no-go_for_Riem}, our Corollary~\ref{cor:main} is not helpful 
in this particular example.

\subsection{Semi-discrete Riemann equation} 
\label{subsec:sdRiem}
Let $\cA_0$ be the algebra of matrices of functions on $\bbR \times \bbZ$, smooth in the 
first variable $t$, and $\cA = \cA_0[\bbS,\bbS^{-1}]$. For $f \in \cA$, we set
\be
     \mathrm{d} f = [\bbS, f ]  \, , \qquad   \bd f = f_t  \, ,  \label{sdRiem_bdc}
\ee
where $\bbS$ is the shift operator in the discrete variable $k$. Then, in terms of 
\be
         \varphi = \phi \, \bbS \, ,   \label{sdRiem_varphi}
\ee       
and using the notation
\bez
     \varphi^+ := \bbS \varphi \bbS^{-1} \, , \qquad \varphi^- := \bbS^{-1} \varphi \bbS \, ,
\eez
(\ref{Riemann_1}) is the semi-discrete matrix Riemann equation
\be
     \varphi_t = (\varphi^+ - \varphi) \, \varphi \, ,  \label{sd_Riemann_eq_1}
\ee
where $\varphi$ can now be restricted to be an $m \times m$ matrix of functions (not 
involving the shift operator explicitly). 
Such a matrix equation already appeared in \cite{LRB83}. 
The scalar version has been called `lattice Burgers equation' in \cite{Matv+Sall83,Matv+Sall91,WGXGQY12}. 
It also appears as a symmetry of a discrete Burgers equation in \cite{HLW99}, and   
in a model for socio-economical systems in \cite{Ben+Redn05}. 
A lattice spacing $h$ can be introduced via a rescaling $t \mapsto t/h$.  
If $\varphi$, $\varphi_t$
and $(\varphi^+ - \varphi)/h$ have limits as $h \to 0$, keeping $x := k \, h$ fixed, 
then $\varphi$ solves the Riemann equation $\varphi_t = \varphi_x \, \varphi$. 

\begin{remark}
Let us fix $\varphi(k_0,t)$ at some lattice point $k_0$. Writing (\ref{sd_Riemann_eq_1}) as 
$\varphi^+ = \varphi + \varphi_t \, \varphi^{-1}$, 
as long as the inverse of $\varphi$ exists, extends this to a solution for $k > k_0$. 
To the left of $k_0$ on the lattice,
we obtain from (\ref{sd_Riemann_eq_1}) iteratively at each lattice point (with $k < k_0$) 
a matrix Riccati equation:
\bez
    \varphi_t(k_0-n,t) = - \varphi(k_0-n,t)^2 + \varphi(k_0-n+1,t) \, \varphi(k_0-n,t)  
            \qquad n=1,2,\ldots \, .
\eez
This presents another view of the integrability of (\ref{sd_Riemann_eq_1}) and, moreover, illustrates 
the `asymmetry' arising from the presence of the forward difference in (\ref{sd_Riemann_eq_1}).
\end{remark}

The alternative `Riemann equation' (\ref{Riemann_2}) takes the form $\varphi_t = - \varphi \, (\varphi - \varphi^-)$,
which is obtained from (\ref{sd_Riemann_eq_1}) for the transpose of $\varphi$, if we replace $\bbS$ 
by its inverse. There is an alternative integrable semi-discretization 
of the Riemann equation, the (Lotka-) Volterra lattice equation, see Appendix~\ref{app:LV}.

\subsubsection{Cole-Hopf transformation}
Choosing $\phi_0 = \bbS^{-1}$, (\ref{phi_transform}) becomes
\be
      \varphi = \Phi \, (\Phi^-)^{-1} \, ,   \label{sdRiem1_CH}
\ee
and (\ref{CH_Phi_eq}) with $\gamma = A(t)$, where 
$A$ commutes with $\bbS$, takes the form
\be
      \Phi_t = \Phi^+ - \Phi + \Phi \, A  \, ,    \label{sd_transport_eq}
\ee
which for $A=0$ is a semi-discrete version of the transport equation. 
Since (\ref{sd_Riemann_eq_1}) is autonomous, if $\varphi$ is a solution, then also $\varphi^+$ is a solution. 
We can therefore redefine $\varphi$ and replace (\ref{sdRiem1_CH}) by $\varphi = \Phi^+ \, \Phi^{-1}$.
Furthermore, we can 
eliminate $A$ by a redefinition of $\Phi$ that preserves (\ref{sdRiem1_CH}). Equations 
(\ref{sdRiem1_CH}) and (\ref{sd_transport_eq}) constitute a discrete Cole-Hopf transformation 
for the semi-discrete Riemann equation (\ref{sd_Riemann_eq_1}), also see 
\cite{LRB83,Matv+Sall83,Matv+Sall91}.

\begin{example}
\label{ex:sdRiem1_multi-kink_via_CH}
A set of solutions of (\ref{sd_transport_eq}), with $A=0$, is given by 
\bez
    \Phi = I + \sum_{i=1}^N \bA_i \, e^{\bTh_i} \, \bB_i  \, , \qquad
    \bTh_i := \bLa_i \, k + (e^{\bLa_i} - \bI) \, t  \, , 
\eez
where $\bA_i$ and $\bB_i$ are constant $m \times n$, respectively $n \times m$ matrices, and the $\bLa_i$ 
are $n \times n$ matrices.  
In the scalar case ($m=1$), we can set $N=1$ without restriction of generality.
Choosing $\bLa = \mathrm{diag}(\lambda_1,\ldots,\lambda_n)$, we obtain 
\be
    \Phi = 1 + \sum_{i=1}^n e^{\theta_i} \, , \qquad
    \theta_i := \lambda_i \, k + (e^{\lambda_i}-1) \, t + \gamma_i \, ,  \label{sdRiem_1_scalar_CH_sols}
\ee
with constants $\gamma_i$. If the constants are real, then (\ref{sdRiem1_CH}) yields an 
$n$-\emph{kink} solution (see Fig.~\ref{fig:sdRiemann1_2kink_via_CH}) of the scalar version of 
(\ref{sd_Riemann_eq_1}), cf. \cite{Matv+Sall83,Matv+Sall91}. 
In the continuum limit, such solutions become constant. Thus, regarding (\ref{sd_Riemann_eq_1}) 
as a discretization of the Riemann equation, the kink solutions are simply artifacts of the 
discretization. 
If $\bLa$ has non-diagonal Jordan normal form, further solutions are obtained.  
Examples of matrix shock wave solutions 
already appeared in \cite{LRB83}, derived via B\"acklund transformations.
\begin{SCfigure}[1.4][hbtp] 
\hspace{.5cm}
\includegraphics[scale=.4]{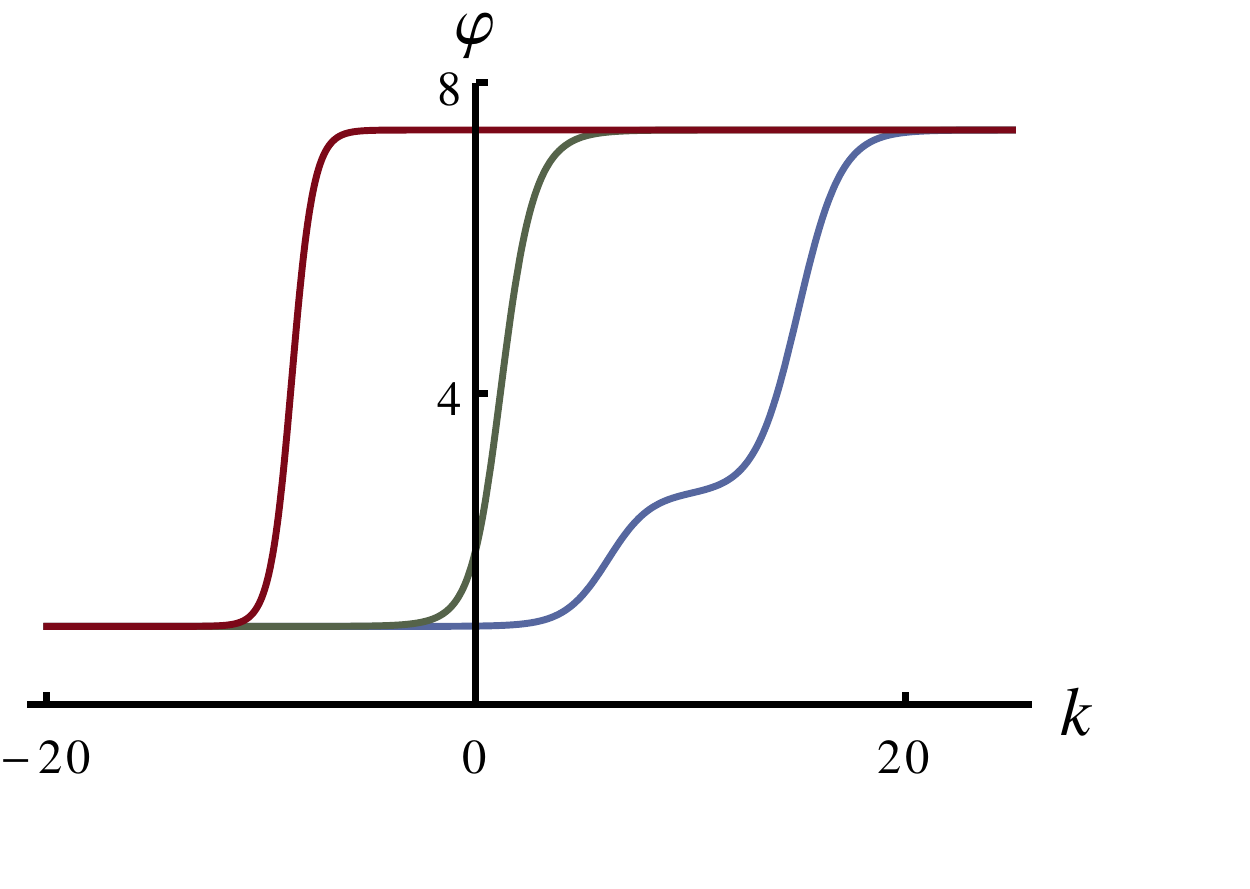} 
\caption{Plots (with interpolation) of a regular solution of the scalar semi-discrete 
Riemann equation (\ref{sd_Riemann_eq_1}), as given in Example~\ref{ex:sdRiem1_multi-kink_via_CH}, 
with $n=2$, $\lambda_1=1$, $\lambda_2=2$, $\gamma_1=\gamma_2=0$. 
Evolution in time $t$ is from right to left.
For negative time, this is a $2$-kink solution (right curve), which turns into a single 
kink at $t=0$ (middle curve) and then becomes steeper and steeper (shock wave, left curve). 
}
\label{fig:sdRiemann1_2kink_via_CH} 
\end{SCfigure}
\end{example}

The scalar semi-discrete Riemann equation (\ref{sd_Riemann_eq_1}) is perhaps the 
simplest soliton equation (calling a kink a soliton). Although it is obtained via the 
simplest discretization from the Riemann equation, which is \emph{not} a soliton equation (though 
integrable by the method of characteristics, or hodograph method), it is of a rather 
different nature. 
The behavior of the multiple kink solutions is actually very similar to that of 
corresponding solutions of the scalar Burgers equation, see Section~\ref{sec:B&KP}. 
This is explained by the fact that (\ref{sd_Riemann_eq_1}) is 
a member of a semi-discrete Burgers hierarchy, see Remark~\ref{rem:sd-Hopf&Burgers} below.

\begin{example}
We note that, if $\Phi$ has a nowhere vanishing continuum limit, then $\varphi$ tends to $I$. 
The kink solutions in Example~\ref{ex:sdRiem1_multi-kink_via_CH} are corresponding examples. 
Setting $A =I$ in (\ref{sd_transport_eq}), and introducing the lattice spacing $h$, it reads 
$\Phi_t = \frac{1}{h} \Phi^+$, which is singular as $h \to 0$. In the scalar case ($m=1$), a particular 
solution is given by $\Phi = (-1)^{k+1} (k+1)! \, h^{k+2} t^{-(k+2)}$ for $h>0$. Although it has no limit 
as $h \to 0$, we find $\varphi = \Phi/\Phi^- = - (k+1) h/t \rightarrow -x/t$, which is a special 
solution of the Riemann equation.
\end{example}

\begin{remark}
\label{rem:sdRiem_Lax_pair}
(\ref{sdRiem1_CH}), written as $\Phi = \varphi \, \Phi^-$, together with (\ref{sd_transport_eq}) 
is a linear system that has (\ref{sd_Riemann_eq_1}) as its compatibility condition, hence the two equations 
constitute a Lax pair for (\ref{sd_Riemann_eq_1}). 
Choosing $A = (1-\lambda) \, I$, with a parameter $\lambda$, (\ref{sd_transport_eq}) reads
\bez
    \Phi_t - \Phi^+ + \lambda \, \Phi = 0 \, .
\eez
If $\Phi$ is a solution, then also $\Phi^+ - \mu \, \Phi$, with any constant $\mu$. 
So we may replace (\ref{sdRiem1_CH}) by $\Phi^+ - \mu \, \Phi = \varphi \, (\Phi - \mu \, \Phi^-)$  
(now with a different $\varphi$), which is 
\bez
   \Phi^+ = (\mu \, I + \varphi) \, \Phi - \mu \, \varphi \, \Phi^- \, .
\eez
Setting $\mu = 2 \, \lambda$, $\psi_1 := \Phi$ and $\psi_2 := \sqrt{2 \lambda} \, \Phi^-$, 
we recover the Lax pair considered in \cite{WGXGQY12} in the scalar case. 
\end{remark}

\subsubsection{Darboux transformations}
We exploit Corollary~\ref{cor:main} in the case where $\balpha, \bbeta, \bP,\bQ$ are constant. 
The restriction to $\mathrm{d}$- and $\bd$-constant $\bP$ and $\bQ$ is suggested by 
Remark~\ref{rem:CH=>Riemann_redundant}. 
Setting
\bez
     \bP = \bA \, \bbS^{-1} \, , \qquad   
     \bQ = \bB \, \bbS^{-1} \, , \qquad
     \bX = \bbS \, \tbX \, ,
\eez
eliminates explicit appearances of the shift operator in the equations in Theorem~\ref{thm:main}. 
Then (\ref{P,Q_eqs}) is satisfied if $[\balpha, \bA] = [\bbeta, \bB] = 0$.
The remaining equations in Theorem~\ref{thm:main} now take the form
\bez
  \bU_t = (\bU^+ - \bU) \, \bA + (\varphi_0^+ - \varphi_0) \, \bU + \bU \, \balpha \, , \qquad
  \bV_t = \bB \, (\bV - \bV^-) - \bV \, (\varphi_0^+ - \varphi_0) + \bbeta \, \bV  \, , 
\eez
and 
\bez
    \tbX^+ \bA - \bB \, \tbX - \bV \, \bU = 0\, , \qquad
    \tbX_t - (\tbX^+ - \tbX) \, \bA + (\bV - \bV^-) \, \bU 
            - \tbX \, \balpha - \bbeta \, \tbX = 0 \, ,
\eez
which are compatible equations.  
By use of the first equation for $\tbX$, we can replace the second by
\bez
   \tbX_t = \bV^- \, \bU + (\bbeta + \bB) \, \tbX + \tbX \, ( \balpha - \bA) \, .
\eez
In the case under consideration, the first condition of (\ref{cor_constraints}) in Corollary~\ref{cor:main} 
has to be considered, which here takes the form
\bez
     \bB \, \bV^- = \bV \, \varphi_0 \, .
\eez
Using (\ref{sdRiem_varphi}), the solution formula for $\phi$ in (\ref{thm_new_Miura_sol}) reads
\be
    \varphi = \varphi_0 + \bU \tbX^{-1} \bV^- \, .    \label{sol_sd_Riem}
\ee
In all these equations, we can now restrict $\varphi_0,\bU,\bV,\tbX$ to $\cA_0$, and $\bA$, $\bB$ 
to be matrices over $\bbC$. 

\begin{Proposition}
\label{prop:sd_Riem}
Let $\varphi_0$ solve (\ref{sd_Riemann_eq_1})
and $\bU$, $\bV$ be solutions of
\be
    \bU_t = \bU^+ \bA + (\varphi_0^+ -\varphi_0) \, \bU \, , \qquad 
    \bV_t = - \bV\varphi_0^+  \, ,  \qquad
    \bB \bV^- = \bV \varphi_0 \, ,    \label{U,V_eqs_sd_Riem}
\ee
with constant $n \times n$ matrices $\bA$, $\bB$. Then a new solution of
(\ref{sd_Riemann_eq_1}) is given by (\ref{sol_sd_Riem}) with 
\be
    \tbX = \bC^- + \int_{0}^t \bV^- \bU \, dt \, ,   \label{sol_sd_Riem_X}
\ee
where $\bC$ does not depend on $t$ and satisfies the constraint
\be 
    \bC \bA - \bB \bC^- = \bV \bU \Big|_{t=0} \, .  \label{preSylvester_sd_Riem}
\ee
\end{Proposition}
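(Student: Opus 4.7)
The plan is to apply Corollary~\ref{cor:main} to the first-order bidifferential calculus (\ref{sdRiem_bdc}) after substituting the ansatz $\bP = \bA\bbS^{-1}$, $\bQ = \bB\bbS^{-1}$, $\bX = \bbS\tbX$, and taking advantage of the freedom discussed in Remark~\ref{rem:transformation_to_alpha,beta} to fix $\balpha = \bA$ and $\bbeta = -\bB$. A first observation is that with $\bA,\bB$ constant (in both $t$ and the discrete variable), the equations (\ref{P,Q_eqs}) are automatically satisfied: $\mathrm{d}\bP = [\bbS,\bA\bbS^{-1}] = 0$, $\bd\bP = 0$, and $[\balpha,\bP] = 0$, and likewise for $\bQ$. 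The constraint of Corollary~\ref{cor:main}, $\bQ\bV = \bV\phi_0$, translates via $\phi_0 = \varphi_0\bbS^{-1}$ and $\bbS^{-1}\bV = \bV^{-}\bbS^{-1}$ into the third of the relations in (\ref{U,V_eqs_sd_Riem}), namely $\bB\bV^{-} = \bV\varphi_0$.

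Next I would expand the equations (\ref{U,V_eqs}), (\ref{preSylvester}), (\ref{dX_eq}) using $\bbS f = f^+\bbS$. The key identities are $[\bbS,\phi_0] = \varphi_0^+ - \varphi_0$, $[\bbS,\bU]\bA\bbS^{-1} = (\bU^+ - \bU)\bA$, and $\bbS^{-1}[\bbS,\bV] = \bV - \bV^{-}$. The $\bU$-equation reduces to $\bU_t = (\bU^+ - \bU)\bA + (\varphi_0^+ - \varphi_0)\bU + \bU\bA$, in which the $\bU\bA$ terms cancel, leaving $\bU_t = \bU^+\bA + (\varphi_0^+ - \varphi_0)\bU$. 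The $\bV$-equation expands to $\bV_t = \bB(\bV - \bV^{-}) - \bV(\varphi_0^+ - \varphi_0) - \bB\bV$; after substituting $\bB\bV^{-} = \bV\varphi_0$ this collapses to $\bV_t = -\bV\varphi_0^+$. Similarly (\ref{preSylvester}) becomes $\tbX^+\bA - \bB\tbX = \bV\bU$, and (\ref{dX_eq}), after use of this algebraic relation in the manner sketched in the derivation preceding Proposition~\ref{prop:sd_Riem}, reduces to $\tbX_t = \bV^{-}\bU$. Integrating the latter yields (\ref{sol_sd_Riem_X}) with $\bC^{-} := \tbX|_{t=0}$, and evaluating $\tbX^+\bA - \bB\tbX = \bV\bU$ at $t=0$ produces (\ref{preSylvester_sd_Riem}).

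The step that requires care is the verification that the algebraic relation $\tbX^+\bA - \bB\tbX = \bV\bU$ is propagated by the $t$-evolution, so that imposing it at $t=0$ suffices. I would differentiate both sides and show they agree: the left-hand side yields $(\bV^{-}\bU)^+\bA - \bB\bV^{-}\bU = \bV\bU^+\bA - \bB\bV^{-}\bU$, while differentiating $\bV\bU$ and using the $\bU$- and $\bV$-equations yields $-\bV\varphi_0^+\bU + \bV\bU^+\bA + \bV(\varphi_0^+ - \varphi_0)\bU = \bV\bU^+\bA - \bV\varphi_0\bU$, and these agree upon invoking $\bB\bV^{-} = \bV\varphi_0$ once more. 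With compatibility established, Corollary~\ref{cor:main} delivers a new $\phi = \phi_0 + \bU\bX^{-1}\bV$; unwinding $\bX = \bbS\tbX$ and $\phi = \varphi\bbS^{-1}$ gives $\bU\bX^{-1}\bV = \bU\tbX^{-1}\bV^{-}\bbS^{-1}$, so $\varphi = \varphi_0 + \bU\tbX^{-1}\bV^{-}$, which is precisely (\ref{sol_sd_Riem}).
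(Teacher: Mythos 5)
Your proof is correct and follows essentially the same route as the paper: it specializes Corollary~\ref{cor:main} with $\bP = \bA\,\bbS^{-1}$, $\bQ = \bB\,\bbS^{-1}$, $\bX = \bbS\,\tbX$ and the choice $\balpha = \bA$, $\bbeta = -\bB$, so that $\tbX_t = \bV^-\bU$ integrates to (\ref{sol_sd_Riem_X}) and the Sylvester-type relation $\tbX^+\bA - \bB\,\tbX = \bV\,\bU$ need only be imposed at $t=0$, giving (\ref{preSylvester_sd_Riem}). The only difference is that you verify explicitly, by differentiating both sides, that this algebraic relation is propagated in $t$, a compatibility the paper merely asserts in the derivation preceding the proposition.
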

\begin{proof} We choose $\balpha = \bA$ and $\bbeta = -\bB$. 
Then $\tbX_t = \bV^- \, \bU$, which integrates to the stated expression. 
It remains to solve $\tbX^+ \bA - \bB \, \tbX = \bV \, \bU$. It is 
sufficient to do this at $t=0$, where $\tbX = \bC^-$, and this leads to 
(\ref{preSylvester_sd_Riem}).
\end{proof}

\begin{remark}
In the scalar case ($m=1$), under the conditions specified in Proposition~\ref{prop:sd_Riem}, 
we have 
\bez
    \varphi = \varphi_0 + \mathrm{tr}(\bV^- \, \bU \tbX^{-1})
            = \varphi_0 + \mathrm{tr}(\tbX_t \tbX^{-1})
            = \varphi_0 + (\ln \det \tbX)_t \, .  
\eez
Alternatively, we can write
\bez
    \varphi &=& \varphi_0 + \mathrm{tr}( \bV^- \, \bU \tbX^{-1} ) 
            = \varphi_0 \, ( 1 + \mathrm{tr}( \bB^{-1} \bV \, \bU \, \tbX^{-1} ) )
            = \varphi_0 \, ( 1 + \bU \, \tbX^{-1} \, \bB^{-1} \bV )   \\
            &=& \varphi_0 \, \det( \bI + \bB^{-1} \bV \, \bU \, \tbX^{-1} )
            = \varphi_0 \, \det\Big( \bI + \bB^{-1} \, (\tbX^+ \bA - \bB \, \tbX) 
              \, \tbX^{-1} \Big)   \\
            &=& \varphi_0 \, \det( \tbX^+ \, \bA \, \tbX^{-1} \, \bB^{-1} ) 
             = \varphi_0 \, \det(\bA) \, \det(\bB)^{-1} \,  
                  \det(\tbX)^+ \, \det(\tbX)^{-1} \, ,  
\eez
using Sylvester's determinant theorem in the fourth step. This makes contact with the 
Cole-Hopf transformation (\ref{sdRiem1_CH}).
\end{remark}

\begin{example}
\label{ex:sd-Riem_multi-kinks}
We consider the scalar case and set $\varphi_0 = \mu \in \bbC \setminus \{0\}$, 
$\bA = \bB = \mu \, \bI$. Writing $\bU = (u_1,\ldots,u_n)$ and 
$\bV = (v_1,\ldots,v_n)^\intercal$ (where $^\intercal$ denotes the transpose), 
the equations in (\ref{U,V_eqs_sd_Riem}) are solved by
\bez   
  u_j = (e^{\lambda_j}-1) \, e^{ \lambda_j \, k + \mu \, e^{\lambda_j} \, t + \gamma_j } \, , \qquad 
  v_j = \mu \, e^{-\mu \, t} \, , \qquad
  \lambda_j, \gamma_j \in \bbC \, , \qquad j=1,\ldots,n \, .
\eez
(\ref{sol_sd_Riem_X}) and (\ref{preSylvester_sd_Riem}) are then solved by
\bez  
 \tbX 
   = \left( c_{ij} + e^{\theta_j} \right)_{i,j=1}^n \, ,
\eez   
where $c_{ij}$ are arbitrary constants, which we choose as $\delta_{ij}$, and 
$\theta_i := \lambda_i \, k + \mu \, ( e^{\lambda_i} -1) \, t + \gamma_i$. 
Using Sylvester's determinant theorem, we obtain $\det \tbX = 1 + \sum_{j=1}^n e^{\theta_j}$,
which is (\ref{sdRiem_1_scalar_CH_sols}) if we set $\mu=1$. The latter can be achieved by an obvious 
scaling symmetry of (\ref{sd_Riemann_eq_1}).
\end{example}

\subsection{Discrete Riemann equation}
\label{subsec:discr_Riem}
Let $\cA_0$ be the algebra of matrices of functions on $\bbZ^2$. Let   
$\bbS_0, \bbS_1$ be the corresponding commuting shift operators and 
$\cA = \cA_0[\bbS_0^{\pm 1}, \bbS_1^{\pm 1}]$.
We will use the notation 
\bez
    f_{,0} := \bbS_0 \, f \, \bbS_0^{-1} \, , \qquad 
    f_{,1} := \bbS_1 \, f \, \bbS_1^{-1} \, ,
\eez
and also $f_{,-0} := \bbS_0^{-1} \, f \, \bbS_0$, $f_{,-1} := \bbS_1^{-1} \, f \, \bbS_1$.
Let
\bez
  \mathrm{d} f = \frac{1}{h_0} \, [ \bbS_1^{-1} \bbS_0 , f]  \, , \qquad
  \bd f = \frac{1}{h_1} \, [ \bbS_1^{-1} , f ]  \, ,  
\eez
with constants $h_0,h_1 \neq 0$. In terms of 
\bez
     \varphi = \phi \, \bbS_0  \, ,   
\eez
Equation (\ref{Riemann_1}) becomes 
\be
   \frac{1}{h_1} \, ( \varphi_{,1} - \varphi ) &=& \frac{1}{h_0} \, ( \varphi_{,1} - \varphi_{,0} ) \, \varphi 
                       \nonumber  \\
             &=& \frac{1}{h_0} \, ( \varphi_{,1} - \varphi ) \, \varphi
                 - \frac{1}{h_0} \, ( \varphi_{,0} - \varphi ) \, \varphi    \, .
                 \label{discrete_Riemann_eq_h}
\ee
If $h_0 =-1$, this formally tends to the semi-discrete Riemann equation (\ref{sd_Riemann_eq_1}) as $h_1 \to 0$.
In the following we set $h_0 = h_1 = 1$, hence we consider the 
\emph{discrete matrix Riemann equation}
\be
    \varphi_{,1} - \varphi  = ( \varphi_{,1} - \varphi_{,0} ) \, \varphi  \, ,
      \label{discrete_Riemann_eq_v1}
\ee
which can be rewritten as $\varphi_{,1} =  (I - \varphi_{,0}) \, \varphi \, (I - \varphi)^{-1}$. 

(\ref{Riemann_2}) 
has the form $\varphi_{,0,1} - \varphi_{,0} = - \varphi_{,0,1} \, (\varphi_{,0} - \varphi_{,1} )$,
which becomes (\ref{discrete_Riemann_eq_v1}) for the transpose of $\varphi$, if we replace the two 
shift operators by their inverses.

\subsubsection{Cole-Hopf transformation}
Choosing $\phi_0 = \bbS_0^{-1}$ in (\ref{phi_transform}) and $\gamma = -A \, \bbS_1^{-1}$ 
in (\ref{CH_Phi_eq}), with $A_{,0} = A$, we obtain for (\ref{discrete_Riemann_eq_v1}) 
the discrete Cole-Hopf transformation
\be
     \varphi = \Phi \, \Phi_{,-0}^{-1} \, , \qquad
     \Phi_{,0} - \Phi = (\Phi A)_{,1} \, .    \label{dRiem_1_CH}
\ee

\begin{example}
\label{ex:dRiem1_multi-kink_via_CH}
A set of solutions of the linear equation in (\ref{dRiem_1_CH}), with the choice $A=-I$, is given by 
\bez
    \Phi = \sum_{i=1}^N \bA_i \, \bLa_i^{k_0} \, (I- \bLa_i)^{k_1} \, \bB_i  \, , 
\eez
where $\bA_i$ and $\bB_i$ are constant $m \times n$, respectively $n \times m$ matrices, and the $\bLa_i$ 
are constant $n \times n$ matrices. In the scalar case ($m=1$), we can set $N=1$ without restriction 
of generality. Choosing $\bLa = \mathrm{diag}(\lambda_1,\ldots,\lambda_n)$, we obtain 
\bez
    \Phi = \sum_{i=1}^{n+1} \gamma_i \, \lambda_i^{k_0} \, (1- \lambda_i)^{k_1} \, ,
\eez
with constants $\gamma_i$. If the constants are real and $\gamma_i >0$, $0 < \lambda_i <1$, 
then (\ref{dRiem_1_CH}) yields an $n$-\emph{kink} solution of the scalar version of (\ref{discrete_Riemann_eq_v1}), 
also see Fig.~\ref{fig:d_Riemann_eq_2kinks_via_CH}. We can write the last expression in the form
\be
    \Phi = \mu^{k_0} \, (1-\mu)^{k_1} \, \Big( 1 
     + \sum_{i=1}^n \gamma_i \, \lambda_i^{k_0} \, \Big(\frac{1- \mu \lambda_i}{1-\mu} \Big)^{k_1} \Big) \, ,
         \label{scalar_dRiem_1_Phi_of_n-kink}
\ee
after the redefinitions $\lambda_1 \mapsto \mu$, $\lambda_{i+1}/\lambda_1 \mapsto \lambda_i$, 
$\gamma_{i+1}/\gamma_1 \mapsto \gamma_i$, $i=1,\ldots,n$,
and a rescaling of $\Phi$ that preserves $\varphi$. 
\begin{SCfigure}[1.4][h] 
\hspace{1.cm}
\includegraphics[scale=.5]{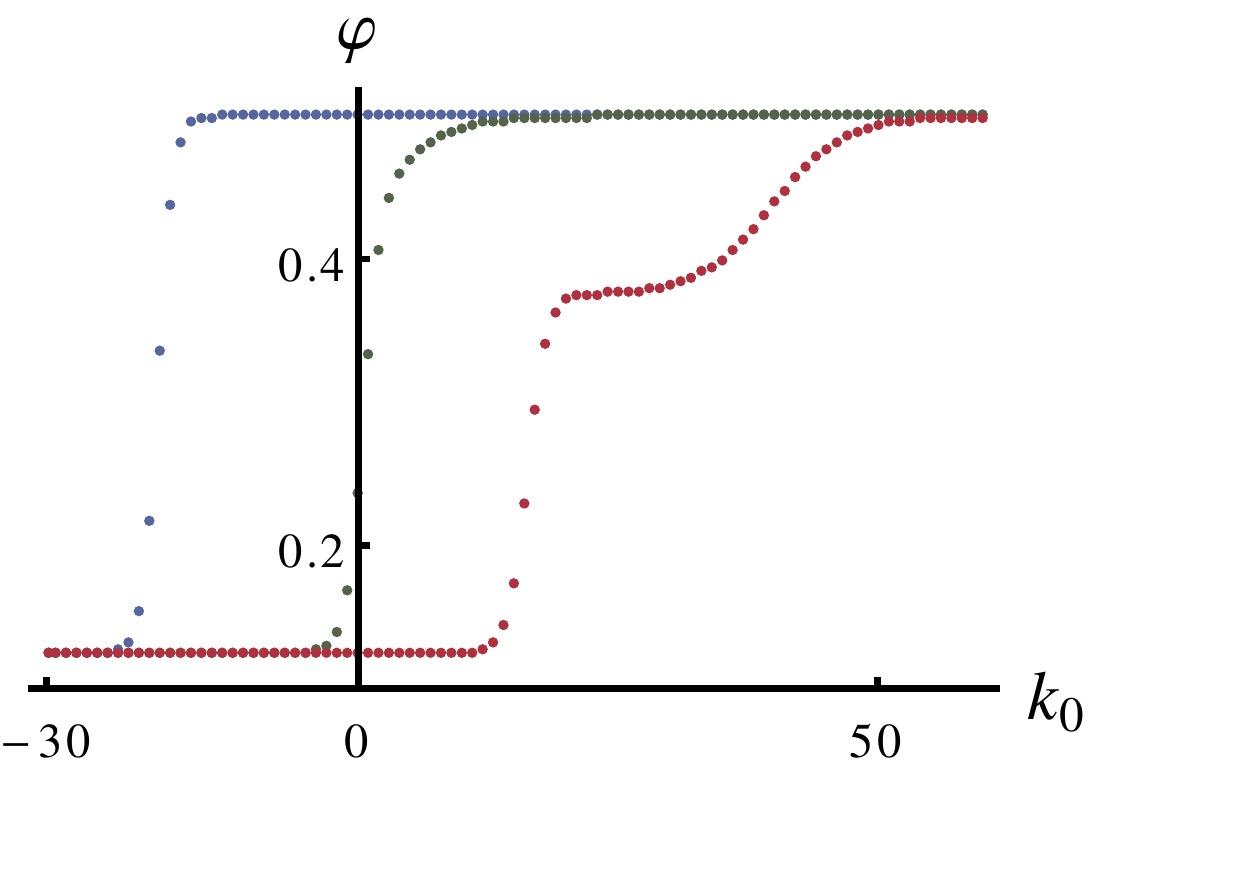} 
\caption{A regular $2$-kink solution of the scalar discrete Riemann equation 
(\ref{discrete_Riemann_eq_v1}) obtained via the discrete Cole-Hopf transformation (\ref{dRiem_1_CH}) with 
(\ref{scalar_dRiem_1_Phi_of_n-kink}), where we chose $\mu=1/2$, $\lambda_1=1/4$, $\lambda_2=3/4$,  
$\gamma_1 = \gamma_2 = 1$. The plots, from left to right,  
correspond to consecutive values ($-50, 0, 50$) of $k_1$. 
}
\label{fig:d_Riemann_eq_2kinks_via_CH} 
\end{SCfigure}
\end{example}

\begin{remark} 
Choosing $A = - \lambda \, I$, with a constant $\lambda$, the second equation in 
(\ref{dRiem_1_CH}) reads
\bez
    \Phi_{,0} - \Phi + \lambda \, \Phi_{,1} = 0 \, .
\eez
Writing the first equation in (\ref{dRiem_1_CH}) as a linear equation 
for $\Phi$, we have a \emph{Lax pair} for the discrete Riemann equation, with spectral parameter $\lambda$.  
If $\Phi$ is a solution of the above equation, then also $\Phi_{,0} - \mu \, \Phi$, with any constant $\mu$. 
We may thus replace the first equation in (\ref{dRiem_1_CH}) by 
\bez
     \Phi_{,0} = (\mu \, I + \varphi ) \, \Phi - \mu \, \varphi \, \Phi_{,-0}  
\eez
(with a different $\varphi$). This is the counterpart of the 
corresponding Lax pair for the semi-discrete Riemann equation, see Remark~\ref{rem:sdRiem_Lax_pair}.
\end{remark}

\subsubsection{Darboux transformations}
In Corollary~\ref{cor:main} we are led to set 
\bez
     \balpha = \tilde{\balpha} \bbS_1^{-1} \, , \quad 
     \bbeta = \tilde{\bbeta} \, \bbS_1^{-1} \, , \quad
     \bP = \bA \, \bbS_0^{-1} \, , \quad
     \bQ = \bB \, \bbS_0^{-1} \, , \quad
     \bX = \tbX\bbS_0 \, ,
\eez
with constant matrices $\tilde{\balpha}, \tilde{\bbeta}, \bA, \bB$. 
Then (\ref{P,Q_eqs}) is satisfied if $[\tilde{\balpha}, \bA] = [\tilde{\bbeta}, \bB] = 0$. 
The remaining equations we have to consider now take the form
\bez
  && \bU - \bU_{,1} = (\bU_{,0} - \bU_{,1}) \, \bA 
      + (\varphi_{0,0} - \varphi_{0,1}) \, \bU + \bU_{,1} \, \tilde{\balpha} \, ,  \\
  && \bV_{,1} - \bV = \bB \, (\bV_{,-0,1} - \bV ) 
      - \bV_{,1} \, (\varphi_{0,1} - \varphi_{0,0} ) - \tilde{\bbeta} \, \bV  \, ,
\eez
and
\bez
   && \tbX_{,0} \bA - \bB \, \tbX - \bV_{,0} \, \bU_{,0} = 0 \, , \\
   && \tbX_{,1} - \tbX 
       - ( \tbX_{,1} - \tbX_{,0} ) \bA
       + (\bV_{,1} - \bV_{,0} ) \, \bU_{,0} 
       + \tbX_{,1} \, \tilde{\balpha} + \tilde{\bbeta} \, \tbX = 0 \, .
\eez
By use of the first equation for $\tbX$, we can replace the second by
\bez
    \tbX_{,1} - \tbX 
       + \tbX_{,1} \, ( \tilde{\balpha} - \bA )
       + ( \tilde{\bbeta} + \bB ) \, \tbX        
       + \bV_{,1} \, \bU_{,0}  = 0   \, .
\eez
We have to consider the first condition of (\ref{cor_constraints}), which 
here takes the form
\bez
     \bB \, \bV = \bV_{,0} \, \varphi_{0,0} \, .
\eez
The solution formula in (\ref{thm_new_Miura_sol}) reads
\be
    \varphi = \varphi_0 + \bU (\tbX^{-1} \bV)_{,-0}  \, .  \label{sol_discr_Riem_v1}
\ee
We can now restrict $\varphi_0,\bU,\bV,\tbX$ to $\cA_0$.

\begin{Proposition}
\label{prop:discr_Riem_v1} 
Let $\varphi_0$ solve (\ref{discrete_Riemann_eq_v1}), and let $\bU$, $\bV$ 
be solutions of the linear equations
\bez
  && \bU_{,1} - \bU + \bU_{,0} \, \bA = ( \varphi_{0,1} - \varphi_0 ) \, \bU \, ,  \nonumber   \\
  && \bB \, \bV_{,-0} = \bV \, \varphi_0 \, , \qquad
     \bV_{,1} - \bV = \bV_{,1} \, \varphi_{0,0}  \, ,  
\eez
with constant matrices $\bA$, $\bB$. Let
\bez
    \tbX(k_0,k_1) = \bF(k_0) - \left\{ \begin{array}{l} \sum_{j=0}^{k_1-1} \bV(k_0,j+1) \, \bU(k_0+1,j) \\
           \sum_{j=k_1-1}^0 \bV(k_0,j+1) \, \bU(k_0+1,j) \end{array} \right. 
           \quad \mbox{if} \quad  \begin{array}{l} k_1 \geq 1 \\ k_1 < 1 \end{array}  
       \, ,         
\eez
where $k_0$ and $k_1$ are the discrete variables on which the shift operators 
$\bbS_0$, respectively $\bbS_1$ act, and $\bF$ is an arbitrary $n \times n$ matrix function satisfying 
\bez
    \bF(k_0+1) \, \bA - \bB \bF(k_0) - \bV(k_0+1,0) \, \bU(k_0+1,0)=0 \, .
\eez
Then a new solution of (\ref{discrete_Riemann_eq_v1}) is given by (\ref{sol_discr_Riem_v1}).
\end{Proposition}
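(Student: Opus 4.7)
The plan is to deduce the Proposition as a specialization of Corollary~\ref{cor:main} to the discrete bidifferential calculus of Section~\ref{subsec:discr_Riem}, using the Ans\"atze $\balpha = \tilde{\balpha}\bbS_1^{-1}$, $\bbeta = \tilde{\bbeta}\bbS_1^{-1}$, $\bP = \bA\bbS_0^{-1}$, $\bQ = \bB\bbS_0^{-1}$, $\bX = \tbX\bbS_0$ already introduced just before the Proposition, and with the particular choice $\tilde{\balpha} = \bA$, $\tilde{\bbeta} = -\bB$. Under these choices, the second equation displayed right before the Proposition for $\tbX$ collapses to the first-order shift equation $\tbX_{,1} - \tbX = -\bV_{,1}\bU_{,0}$, while the Sylvester-type relation $\tbX_{,0}\bA - \bB\tbX = \bV_{,0}\bU_{,0}$ remains unchanged. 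One also verifies that, under these choices, the linear equations for $\bU$ and $\bV$ written down just before the Proposition reduce verbatim to the ones stated in the Proposition, and that the first of the conditions (\ref{cor_constraints}) translates to $\bB\bV_{,-0} = \bV\varphi_0$, which is the algebraic constraint displayed.

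First I would solve the first-order shift equation by telescoping. Setting $\bF(k_0) := \tbX(k_0, 0)$ and summing from $j=0$ up to $j=k_1-1$ gives the $k_1 \geq 1$ branch of the formula, while summing backward gives the $k_1 < 1$ branch. This step is elementary. The algebraic condition imposed on $\bF$ in the Proposition is precisely what is needed to enforce the Sylvester-type relation on the initial slice $k_1 = 0$.

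The main technical point is then to show that the Sylvester-type relation propagates from $k_1 = 0$ to all $k_1$. Equivalently, one must verify that the $\bbS_1$-shifted Sylvester relation is a consequence of the unshifted one together with the shift equation for $\tbX$ and the linear equations for $\bU$ and $\bV$. A direct substitution of $\tbX_{,1} = \tbX - \bV_{,1}\bU_{,0}$ and $\tbX_{,0,1} = \tbX_{,0} - \bV_{,0,1}\bU_{,0,0}$ reduces the identity to be checked to a combination of the $\bU$- and $\bV$-equations together with the constraint $\bB\bV_{,-0} = \bV\varphi_0$; closure at this stage invokes the fact that $\varphi_0$ itself satisfies (\ref{discrete_Riemann_eq_v1}). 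This is the step where all hypotheses enter simultaneously and is the main obstacle in the verification.

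Once both equations for $\tbX$ are confirmed to hold globally, Corollary~\ref{cor:main} applies, and the solution formula $\phi = \phi_0 + \bU\bX^{-1}\bV$ of Theorem~\ref{thm:main}, combined with $\phi_0 = \varphi_0\bbS_0^{-1}$ and $\bX^{-1} = \bbS_0^{-1}\tbX^{-1}$, yields after right-multiplication by $\bbS_0$ the stated formula $\varphi = \varphi_0 + \bU(\tbX^{-1}\bV)_{,-0}$.
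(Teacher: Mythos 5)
Your proposal is correct and follows the paper's proof essentially verbatim: with the Ansatz set up just before the Proposition one takes $\tilde{\balpha}=\bA$, $\tilde{\bbeta}=-\bB$, solves the resulting shift equation $\tbX_{,1}-\tbX+\bV_{,1}\,\bU_{,0}=0$ by the stated summation, and the Sylvester relation $\tbX_{,0}\,\bA-\bB\,\tbX=\bV_{,0}\,\bU_{,0}$ imposed on the slice $k_1=0$ yields the condition on $\bF$, after which Corollary~\ref{cor:main} gives (\ref{sol_discr_Riem_v1}). The only difference is that you spell out the propagation of the Sylvester relation in $k_1$, which the paper treats as immediate; that check indeed closes using only the reduced linear $\bU$- and $\bV$-equations together with the constraint $\bB\,\bV_{,-0}=\bV\,\varphi_0$, so it is a routine verification rather than the main obstacle, and the hypothesis that $\varphi_0$ solves (\ref{discrete_Riemann_eq_v1}) enters through Corollary~\ref{cor:main} itself rather than through that algebraic closure.
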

\begin{proof}
We choose $\tilde{\balpha} = \bA$ and $\tilde{\bbeta} = -\bB$. Then the equations for $\tbX$
read
\bez
  \tbX_{,1} - \tbX + \bV_{,1} \, \bU_{,0} = 0  \, ,  \qquad
  \tbX_{,0} \, \bA - \bB \, \tbX = \bV_{,0} \, \bU_{,0} 
     \, .   
\eez
The first equation is completely solved by the expression for $\tbX$ in 
the proposition. The second equation then results in the stated constraint. 
\end{proof}

\begin{remark}
In the scalar case, we have
\be
   \varphi 
  &=& \varphi_0 \, (1 + \bU \tbX_{,-0}^{-1} \bB^{-1} \bV )
  = \varphi_0 \, \det( \bI + \bB^{-1} \bV \, \bU \tbX_{,-0}^{-1}) \nonumber \\
  &=& \varphi_0 \, \det\Big( \bI + \bB^{-1} \, ( \tbX \, \bA - \bB \, \tbX_{,-0} ) 
       \, \tbX_{,-0}^{-1} \Big) \nonumber \\
  &=& \varphi_0 \, \det(\bA) \, \det(\bB)^{-1} \, \det(\tbX) \, \det(\tbX)_{,-0}^{-1} 
      \, ,     \label{scalar_dRiem_1_varphi_DT->CH}
\ee
which makes contact with the Cole-Hopf transformation.
\end{remark}

\begin{example}
\label{ex:discr-Riem_multi-kinks_v1} 
Let $m=1$ (scalar case) and $\varphi_0 = \mu \in \bbC \setminus \{0,1\}$, $\bA = \bB = \mu \, \bI$. 
Writing $\bU = (u_1,\ldots,u_n)$ and $\bV = (v_1,\ldots,v_n)^\intercal$, the linear equations 
for $\bU$ and $\bV$ are solved by
\bez
  u_j = a_j \lambda_j^{k_0-1}(1-\lambda_j\mu)^{k_1} \, , \qquad
  v_j = (1-\mu)^{-k_1} \, , \qquad
  a_j, \lambda_j \in \bbC \, , \qquad j=1,\ldots,n \, .
\eez
$\tbX$ is then given by 
\bez
    \tbX 
 = \Big( c_{ij} +  \gamma_j \, \lambda_j^{k_0} \,  
       \Big( \frac{1- \mu \lambda_j}{1-\mu} \Big)^{k_1} \Big)_{i,j=1}^n \, , 
   \qquad  \gamma_j :=  \frac{a_j }{\mu \, (\lambda_j-1) }  \, ,
\eez
where 
$c_{ij}$ are arbitrary constants, which we set to $\delta_{ij}$. We obtain 
\bez
    \det \tbX 
  = 1 + \sum_{i=1}^n \gamma_i \, \lambda_j^{k_0} \, \Big( \frac{1- \mu \, \lambda_j}{1-\mu} \Big)^{k_1} \, ,
\eez
and $\varphi$ given by (\ref{scalar_dRiem_1_varphi_DT->CH}) coincides with $\varphi$ obtained from  
(\ref{scalar_dRiem_1_Phi_of_n-kink}) via the Cole-Hopf transformation. 
\end{example}

\subsection{Hierarchies}
\label{subsec:Riemann_hier}

\subsubsection{Riemann hierarchy} 
Let $\cA$ be the algebra of matrices of real (or complex) smooth functions of 
independent variables $t_k$, $k=0,1,2,\ldots$, and $\lambda$ an arbitrary parameter 
(or an indeterminate). Let
\bez
     \mathrm{d} f = \sum_{k=0}^\infty \lambda^k \, f_{t_k} \, , \qquad   
     \bd f = \sum_{k=0}^\infty \lambda^k \, f_{t_{k+1}} \, ,
\eez
where a subscript means a partial derivative with respect to the corresponding variable. 
(\ref{Riemann_1}) is then equivalent to the matrix Riemann hierarchy
\bez
     \phi_{t_k} = \phi_x \, \phi^k \, , 
\eez
where $k=0,1,2,\ldots$, and we set $x := t_0$. By taking linear combinations, we obtain 
equations of the form $\phi_t = \phi_x \, p(\phi)$, 
where $p(\phi)$ is a polynomial in $\phi$.

\subsubsection{Semi-discrete Riemann hierarchy} 
Let $\cA_0$ be the algebra of matrices of functions of a discrete variable $k$, and smoothly 
dependent on variables $t_j$, $j=1,2,\ldots$. Using the \emph{Miwa shift} operator 
(see, e.g., \cite{DMH06func}, and the references therein)
\bez
    \bbE_\lambda := \exp\Big( \sum_{j \geq 1} \frac{1}{j} \lambda^j \, \pa_{t_j} \Big) \, ,
\eez
we set
\bez
     \mathrm{d} f = [\bbS \, \bbE_\lambda , f ] \, , \qquad
     \bd f = \lambda^{-1} \, [\bbE_\lambda , f ] \, ,
\eez
on $\cA = \cA_0[\bbS, \bbS^{-1}, \bbE_\lambda, \bbE_\lambda^{-1}]$. 
Writing $f_{[\lambda]} := \bbE_\lambda f \bbE_\lambda^{-1}$,  
in terms of $\varphi = \phi \, \bbS$, (\ref{Riemann_1}) takes the form
\be
     \lambda^{-1} \, (\varphi_{[\lambda]} - \varphi) 
     - (\varphi_{[\lambda]}^+ - \varphi) \, \varphi_{[\lambda]} = 0 \, . \label{sd-Riem_hier}
\ee
Expanding in powers of $\lambda$, to zero order we recover 
(\ref{sd_Riemann_eq_1}) with $t=t_1$. By use of it, the next hierarchy 
member can be written in the form
\be
    \varphi_{t_2} = (\varphi^{++} - \varphi) \, \varphi^+ \, \varphi \, .
    \label{sd-Riem_2nd_hier_eq}
\ee
Such a hierarchy apparently first appeared in \cite{LRB83} (where it has been called 
`discrete matrix Burgers hierarchy'). 
In the scalar case, Darboux transformations for this hierarchy have been studied 
in \cite{WGXGQY12}. 
(\ref{CH_Phi_eq}) with $\gamma=0$ takes the form 
\bez
     \lambda^{-1} (\Phi_{[\lambda]} - \Phi) - (\Phi^+_{[\lambda]} 
            - \Phi) \, \varphi_{0,[\lambda]}^+ = 0 \, .
\eez
According to Section~\ref{sec:CH}, together with (\ref{sdRiem1_CH}) this determines a discrete 
Cole-Hopf transformation for the whole hierarchy (\ref{sd-Riem_hier}). 
To zero order in $\lambda$, we have
\bez
      \Phi_{t_1} = (\Phi^+ - \Phi) \, \varphi_0^+  \, ,
\eez
which becomes (\ref{sd_transport_eq}) if $\varphi_0=I$. The next equation, which arises 
at first order in $\lambda$, is 
\bez
      \Phi_{t_2} = - \Phi_{t_1t_1} + 2 \, (\Phi^+ - \Phi) \, \varphi_{0,t_1}^+ 
                   + 2 \, \Phi^+_{t_1} \, \varphi_0^+   
                 = (\Phi^{++} - \Phi^+) \, \varphi_0^{++} \varphi_0^+ 
                   + (\Phi^+ - \Phi) \, [(\varphi_0^+)^2 + \varphi_{0,t_1}^+]  \, ,
\eez
by use of the first equation. For $\varphi_0=I$, this reduces to 
\bez
    \Phi_{t_2} = \Phi^{++} - \Phi \, ,
\eez   
which is the second member of the semi-discrete linear heat hierarchy.

\begin{remark}
\label{rem:sd-Hopf&Burgers}
The nonlinear hierarchy for $\varphi$ contains a 
semi-discretization of a matrix Burgers equation and can thus be regarded as a semi-discrete 
version of a matrix Burgers hierarchy \cite{LRB83}. 
Indeed, a combination of the first two equations of the semi-discrete Riemann hierarchy is
\bez
    \varphi_s := \frac{1}{h^2} (\pa_{t_2} - 2 \pa_{t_1}) \varphi
              = \frac{1}{h^2} \left( (\varphi^{++} - \varphi) \, \varphi^+ \varphi
                - 2 \, (\varphi^+ - \varphi) \, \varphi \right) \, ,
\eez
introducing a lattice spacing $h$ and a new variable $s$. 
In terms of the new dependent variable
\be
     \tilde{\varphi} := \frac{1}{h} (\varphi -I) \, ,   \label{sdRiem_hier_tvarphi}
\ee 
this takes the form
\bez
 \tilde{\varphi}_s = \frac{1}{h^2} ( \tilde{\varphi}^{++} -2 \tilde{\varphi}^+ + \tilde{\varphi})
     (I + h \, \tilde{\varphi}^+ ) ( I + h \, \tilde{\varphi}) 
     + \frac{2}{h} ( \tilde{\varphi}^+ - \tilde{\varphi}) \tilde{\varphi}^+ \, (I+ h \, \tilde{\varphi})
        \, ,
\eez
which formally tends to the Burgers equation 
$\tilde{\varphi}_t = \tilde{\varphi}_{xx} + 2 \, \tilde{\varphi}_x \, \tilde{\varphi}$ 
as $h \to 0$ (also see \cite{Comm+Muse01}).
The corresponding combination of the above first two equations of the semi-discrete linear heat hierarchy is
\bez
    \Phi_s = \frac{1}{h^2} (\Phi^{++} - 2 \, \Phi^+ + \Phi ) \, ,
\eez
which tends to the heat equation as $h \to 0$. Correspondingly, the transformation 
(\ref{sdRiem1_CH}) reads
\bez
     \tilde{\varphi} = \frac{1}{h} (\Phi - \Phi^-) \, (\Phi^-)^{-1} \, ,
\eez
so we also recover the continuous Cole-Hopf transformation as $h \to 0$. 
This limit takes discrete kink solutions to kink solutions of the Burgers equation. 
The observation that the scalar semi-discrete Riemann equation possesses solutions of the 
type we meet in case of the scalar Burgers equation is explained by the fact that they 
extend to solutions of the whole hierarchy. 
\end{remark}

\begin{remark}
\label{rem:r-sdRiemann}
The equations of the semi-discrete Riemann hierarchy can also be obtained in a different way.
Let us consider the following generalizations of (\ref{Riemann_1}),
\be
    \bd \phi - (\mathrm{d} \phi) \, \phi^r = 0   \qquad \quad r \in \bbN \, ,
        \label{r-Riemann}
\ee
and a corresponding generalization of the calculus determined by (\ref{sdRiem_bdc}),
\bez
     \mathrm{d} f = [\bbS^r, f ]  \, , \qquad   \bd f = f_t  \, .
\eez
Setting $\phi = \varphi \, \bbS^{-1}$, where $\varphi$ is a matrix of functions, (\ref{r-Riemann}) 
turns out to be a PDDE for $\varphi$, namely
\bez
     \varphi_t = (\varphi^{(r)} - \varphi) \, (\varphi^{(r-1)} \, \bbS^{-1})^r \, \bbS^r  \, ,
\eez
where $\varphi^{(r)} := \bbS^r \varphi \, \bbS^{-r}$. For $r=1,2$, we recover (\ref{sd_Riemann_eq_1}) 
and (\ref{sd-Riem_2nd_hier_eq}), respectively.
\end{remark}

\subsubsection{Discrete Riemann hierarchy}
Let $\bbS_0,\bbS_1,\bbS_2, \ldots$ be commuting shift operators and
\bez
 \mathrm{d} f = \sum_{i=1}^\infty \lambda^i \, \alpha_i \, [\bbS_i^{-1} \prod_{j=0}^{i-1} \bbS_j , f]
    \, , \qquad
 \bd f = \sum_{i=1}^\infty \lambda^i \, \beta_i \, [\bbS_i^{-1} \prod_{j=1}^{i-1} \bbS_j , f] 
   \, ,
\eez
with constants $\alpha_i, \beta_i$. 
At first and second order in $\lambda$, we obtain from (\ref{Riemann_1}), with 
$ \varphi = \phi \, \bbS_0$, the following equations,
\bez
  \beta_1 \, ( \varphi_{,1} - \varphi ) - \alpha_1 \, ( \varphi_{,1} - \varphi_{,0} ) \, \varphi = 0 \, , 
     \qquad
 \beta_2 \, (\varphi_{,2} - \varphi_{,1}) - \alpha_2 \, (\varphi_{,2} - \varphi_{,0,1}) \, \varphi_{,1} = 0 \, .
\eez
The first coincides with (\ref{discrete_Riemann_eq_v1}) if $\alpha_1=\beta_1 = 1$.
Solving these equations for $\varphi_{,1}$, respectively $\varphi_{,2}$, and assuming the necessary 
invertibility conditions, we have
\be
 &&
 \hspace*{-.8cm} 
 \varphi_{,1} = (\beta_1 - \alpha_1 \, \varphi_{,0}) \, \varphi \, (\beta_1 - \alpha_1 \, \varphi)^{-1} 
      \, , \nonumber  \\
 &&
 \hspace*{-.8cm} 
 \varphi_{,2} = \left( \beta_1 \beta_2 - (\alpha_1 \beta_2  + \alpha_2 \beta_1) \, \varphi
      + \alpha_1 \alpha_2 \, \varphi_{,0} \, \varphi \right)_{,0} \, \varphi \, 
        (\beta_1 \beta_2 - (\alpha_1 \beta_2 + \alpha_2 \beta_1) \, \varphi
     +\alpha_1 \alpha_2 \, \varphi_{,0} \, \varphi)^{-1} .  \qquad  \label{discrHier}
\ee
If $\alpha_i = \beta_i =1$, then we have $\varphi_{,2} = \varphi_{,1,1}$, and a corresponding 
relation holds for all equations of the hierarchy. The $(n+1)$-th flow is then simply the $n$-th 
shift of the first hierarchy equation with respect to its `evolution variable'. 
In this respect the hierarchy is `trivial'.

Setting $\phi_0 = \bbS_0^{-1}$ and 
$\gamma = \sum_{i=1}^\infty \lambda^i ( 1 + \alpha_i - \beta_i ) \, I \, \bbS_i^{-1} \prod_{j=1}^{i-1} \bbS_j$,
according to Section~\ref{sec:CH} a Cole-Hopf transformation is given by
\bez
    \varphi := \phi \, \bbS_0 = \Phi \, \Phi_{,-0}^{-1} \, , \qquad
    \Phi_{,i} = ( \beta_i \, \Phi - \alpha_i \, \Phi_{,0} )_{,1,\ldots,i-1} \qquad i=1,2,\ldots \, ,
\eez
where the linear equations are compatible. Using the first in the second equation, the latter becomes
\be
    \Phi_{,2} = \beta_1 \beta_2 \, \Phi - (\alpha_1 \beta_2 + \alpha_2 \beta_1) \, \Phi_{,0} 
                + \alpha_1 \alpha_2 \, \Phi_{,0,0} \, .  \label{dRiem_2nd_heat_eq}
\ee

Let us choose $\alpha_1 = \alpha_2 = \sqrt{h_2}$, 
$\beta_1 = \sqrt{h_2} + h_0$, $\beta_2 = \sqrt{h_2} - h_0$. In terms of the  
new dependent variable $\tilde{\varphi}$ given by $\varphi = I + h_0 \, \tilde{\varphi}$ 
(cf. (\ref{sdRiem_hier_tvarphi})), the second hierarchy equation takes the form
\be
    -\Delta_2{\tilde{\varphi}} 
  = \Big( h_0 \, (\Delta_0 \Theta) \, \tilde{\varphi} + \Delta_0 \Theta + [\Theta,{\tilde{\varphi}}] \Big)
       (I - h_2 \Theta)^{-1}   \, ,     \qquad
    \Theta := \tilde{\varphi}_{,0} \, \tilde{\varphi} + \Delta_0 \tilde{\varphi} \, ,  \label{dBurgers}
\ee
where
$\Delta_2 \tilde{\varphi} := (\bbS_2 \tilde{\varphi} \bbS_2^{-1} - \tilde{\varphi})/h_2$,
$\Delta_0 \tilde{\varphi} := (\bbS_0 \tilde{\varphi} \bbS_0^{-1} - \tilde{\varphi})/h_0$, 
and correspondingly for $\Delta_0$ acting on $\Theta$. 
In the scalar case, and after replacing
$h_2$ by $-h_2$, the latter equation coincides 
with the \emph{discrete Burgers equation} in \cite{HLW99,HLW00} (up to differences in notation). 
(\ref{dBurgers}) is thus a matrix version of the latter. Here we interprete $h_0$ and $h_2$ 
as lattice spacings. Replacing $h_2$ by $-h_2$ means that $\alpha_i$ 
and $\beta_i$, $i=1,2$, are complex, but the coefficients in the associated discrete heat 
hierarchy equation (\ref{dRiem_2nd_heat_eq}) are real if $h_0,h_2$ are real.

\begin{remark}
\label{rem:r-dRiemann}
Let us consider $\bd \phi - (\mathrm{d} \phi) \, \phi^r = 0$ (which is (\ref{r-Riemann})), with 
\bez
     \mathrm{d} f = [\bbS_r^{-1} \, \bbS_0^r, f ]  \, , \qquad   \bd f = [\bbS_r^{-1} , f ]  \, .
\eez
Setting $\phi = \varphi \, \bbS_0^{-1}$, where $\varphi$ is a matrix of functions,
(\ref{r-Riemann}) becomes a partial difference equation for $\varphi$, namely
\bez
    \varphi_{,r} - \varphi
   = ( \varphi_{,r} - \bbS_0^r \varphi \, \bbS_0^{-r} ) \, (\varphi \, \bbS_{0}^{-1})^r \, \bbS_{0}^r \, .
\eez
For $r=1$, this is (\ref{discrete_Riemann_eq_v1}), i.e., the first of (\ref{discrHier}) if 
$\alpha_1 = \beta_1 = 1$. For $r=2$, this is the second of (\ref{discrHier}) if 
$\alpha_1 = \beta_1 = \beta_2 = 1$ and $\alpha_2 = -1$. 
\end{remark}

\section{Some integrable equations associated with Riemann equations or their integrable discretizations}
\label{sec:Riemann_associates}

\subsection{Self-dual Yang-Mills equation} 
\label{subsec:sdYM}
Let us choose (\ref{Omega_wedge}) with $K=2$ and combine two bidifferential calculi 
of the kind considered in Section~\ref{subsec:Riem} to
\be
    \mathrm{d} f = - f_z \, \xi_1 + f_{\bar{y}} \, \xi_2 \, , \qquad
    \bd f = f_y \, \xi_1 + f_{\bar{z}} \, \xi_2 \, ,   \label{sdYM_bidiff}
\ee
where $\cB$ is the space of smooth complex functions of four (real or complex)  
variables $y, \bar{y}, z, \bar{z}$.  
(\ref{bidiff_conds}) holds and (\ref{phi_eq}) takes the form  
\be
     \phi_{z \bar{z}} + \phi_{y \bar{y}} + [ \phi_z , \phi_{\bar{y}} ] = 0  
              \label{sdYM_Leznov_eq}
\ee
(also see \cite{DMH08bidiff}), which is a well-known potential form of the 
\emph{self-dual Yang-Mills} (sdYM) equation (cf. \cite{Maso+Wood96}). 
Another well-known potential form of the sdYM equation is obtained from (\ref{g_eq}):
\be
      (g_{\bar{z}} \, g^{-1})_z + (g_y \, g^{-1})_{\bar{y}} = 0 \, .  \label{sdYM_Yang}
\ee
In the following subsections, we consider the Riemann system associated with these versions of 
the sdYM equation. Then we derive from 
Corollary~\ref{cor:sol_via_Riem_sys} a method to construct breaking
multi-soliton solutions. Finally, we consider a non-autonomous chiral model 
as an example of a reduction of the sdYM equation, making contact with the work in 
\cite{DKMH11sigma,DMH13SIGMA}.

\subsubsection{The sdYM Riemann system}
Using (\ref{sdYM_bidiff}), (\ref{Riemann_1}) is equivalent to the system
\be
     \phi_y = - \phi_z \, \phi \, , \qquad
     \phi_{\bar{z}} = \phi_{\bar{y}} \, \phi     \label{sdYM_Riem_sys}
\ee
of Riemann equations. Any solution of this system also solves (\ref{sdYM_Leznov_eq}) and, 
if it is invertible, also (\ref{sdYM_Yang}). 
Solutions are implicitly given by
\bez
     \phi = f( y \, \phi - z \, I, \bar{z} \, \phi + \bar{y} \, I ) \, ,
\eez
where $f$ is any analytic function. 
More generally, $f$ may depend in addition on constant $m \times m$ matrices, 
but we have to ensure that $\phi$ commutes with them. 

Let us turn to the `linearization method' of Section~\ref{sec:CH}.
Setting $\gamma=0$ and assuming that $\phi_0$ solves (\ref{sdYM_Riem_sys}) and 
commutes with its partial derivatives, then (\ref{CH_Phi_eq}), elaborated with (\ref{sdYM_bidiff}), 
is solved by 
\bez
  \Phi = A_0 
    + \sum_{i=1}^k A_i \, f_i( y \, \phi_0 - z \, I, \bar{z} \, \phi_0 + \bar{y} \, I ) 
\eez
(cf. \cite{Zenc08sdYM}), with any analytic functions $f_i$ and constant $m \times m$ matrices $A_0,A_i$.
$\phi = \Phi \, \phi_0 \, \Phi^{-1}$ is then a new solution of (\ref{sdYM_Riem_sys}),
and thus of (\ref{sdYM_Leznov_eq}) and also (\ref{sdYM_Yang}) with $g = \phi$.

\begin{remark}
Equation (7) in 
\cite{Zenc08sdYM} is a special case of the linear system (\ref{U,V_eqs}), where the 
matrix $\bP$, respectively $\bQ$, is given by $\lambda \, \bI$, with a spectral parameter 
$\lambda$ (allowed to be a function). 
 Further results in \cite{Zenc08sdYM} are closely 
related to Theorem~\ref{thm:main}, specialized to the sdYM case. 
\end{remark}

\subsubsection{Breaking multi-soliton-type solutions of the sdYM equation}
 From Corollary~\ref{cor:sol_via_Riem_sys}, setting $\balpha = \bbeta = \mathrm{d} \phi_0 = 0$, 
we obtain the following result.
 
\begin{Proposition}
\label{prop:sdYM_multi-break}
Let $\phi_0$ and $g_0$ be constant $m \times m$ matrices, $g_0$ invertible. Let $\bP$ and $\bQ$ 
be solutions of the $n \times n$ matrix Riemann equations
\bez
    \bP_y = - \bP_z \, \bP \, , \quad
    \bP_{\bar{z}} = \bP_{\bar{y}} \, \bP \, , \quad
    \bQ_y = - \bQ \, \bQ_z \, , \quad
    \bQ_{\bar{z}} = \bQ \, \bQ_{\bar{y}} \, ,
\eez
where $\bQ$ is invertible and such that $\sigma(\bP) \cap \sigma(\bQ) = \emptyset$. 
Let $\bX$ be the unique solution of the Sylvester equation
\bez
    \bX \, \bP - \bQ \, \bX = \bV_0 \, \bU_0   \, ,
\eez
where $\bU_0$ is a constant $m \times n$ and $\bV_0$ a constant $m \times n$ matrix. Then 
\bez
    \phi = \phi_0 + \bU_0 \, \bX^{-1} \, \bV_0 \, ,  \quad \mbox{respectively} \quad
    g = ( I + \bU_0 \, (\bQ \, \bX)^{-1} \bV_0 ) \, g_0 \, ,                 
\eez
(except at singular points) solves the respective potential form of the sdYM equation. 
\end{Proposition}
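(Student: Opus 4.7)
The plan is to apply Corollary~\ref{cor:sol_via_Riem_sys} in the setting of the sdYM bidifferential calculus (\ref{sdYM_bidiff}), with the specialization $\balpha = \bbeta = 0$.

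First I would verify the input hypotheses of the corollary. Since $\phi_0$ and $g_0$ are constant, both $\mathrm{d}\phi_0$ and $\bd g_0$ vanish, so the pair $(\phi_0,g_0)$ trivially solves the Miura equation (\ref{Miura}), and the conditions (\ref{phi0_alpha_beta_cond}) hold automatically for every $\bU_0,\bV_0$. Expanding $\bd\bP - (\mathrm{d}\bP)\,\bP$ with the help of (\ref{sdYM_bidiff}) and collecting coefficients of $\xi_1$ and $\xi_2$ separately, the matrix `Riemann equation' for $\bP$ splits into the pair $\bP_y = -\bP_z\,\bP$ and $\bP_{\bar z} = \bP_{\bar y}\,\bP$. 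An analogous expansion of $\bd\bQ - \bQ\,\mathrm{d}\bQ$ reproduces the two Riemann equations for $\bQ$ stated in the proposition.

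The main obstacle is that Corollary~\ref{cor:sol_via_Riem_sys} demands that $\bX$ satisfy \emph{both} the algebraic Sylvester equation $\bX\bP - \bQ\bX = \bV_0\bU_0$ and the differential constraint $\bd\bX - (\mathrm{d}\bX)\bP + (\mathrm{d}\bQ)\bX = 0$, whereas the proposition specifies $\bX$ only through the Sylvester equation. I would therefore show that, given the Riemann equations for $\bP$ and $\bQ$, the differential constraint is an automatic consequence of the Sylvester equation. Splitting the differential constraint into its $\xi_1$ and $\xi_2$ parts, the $\xi_1$-part reads $\boldsymbol{R} := \bX_y + \bX_z\,\bP - \bQ_z\,\bX = 0$. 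The key computation is then to evaluate $\boldsymbol{R}\,\bP - \bQ\,\boldsymbol{R}$: differentiating the Sylvester equation in $y$ and $z$ gives expressions for $\bX_y\bP - \bQ\bX_y$ and $\bX_z\bP - \bQ\bX_z$ in terms of $\bP_y,\bP_z,\bQ_y,\bQ_z$, and after substituting the Riemann identities $\bP_y=-\bP_z\bP$ and $\bQ_y=-\bQ\bQ_z$, everything cancels to give $\boldsymbol{R}\,\bP - \bQ\,\boldsymbol{R} = 0$. Injectivity of the map $\boldsymbol{Y}\mapsto \boldsymbol{Y}\bP - \bQ\,\boldsymbol{Y}$, guaranteed by the spectral condition $\sigma(\bP)\cap\sigma(\bQ)=\emptyset$ via the Sylvester--Rosenblum theorem, then forces $\boldsymbol{R}=0$. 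The $\xi_2$-part is handled by the identical argument with $(y,z)$ replaced by $(\bar z,\bar y)$ and the second pair of Riemann equations for $\bP,\bQ$.

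With this compatibility established, Corollary~\ref{cor:sol_via_Riem_sys} applies and produces new solutions of the Miura equation (\ref{Miura}) via the formulas (\ref{thm_new_Miura_sol}). Since the bidifferential calculus (\ref{sdYM_bidiff}) satisfies (\ref{bidiff_conds}), these are in particular solutions of (\ref{phi_eq}) and (\ref{g_eq}), which, as already recorded in Section~\ref{subsec:sdYM}, are the two potential forms (\ref{sdYM_Leznov_eq}) and (\ref{sdYM_Yang}) of the self-dual Yang--Mills equation. Existence and uniqueness of $\bX$ from the Sylvester equation is classical under the spectral hypothesis, and the loci where $\bX$ (or $\bQ\bX$) fails to be invertible are precisely the singular points excluded in the statement.
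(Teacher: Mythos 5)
Your proposal is correct and takes essentially the same route as the paper: apply Corollary~\ref{cor:sol_via_Riem_sys} with $\balpha = \bbeta = \mathrm{d}\phi_0 = 0$ in the sdYM calculus, and note that under $\sigma(\bP)\cap\sigma(\bQ)=\emptyset$ the differential equation for $\bX$ is automatic from the Sylvester equation. The only difference is that you prove this last compatibility fact explicitly (differentiating the Sylvester equation, inserting the Riemann equations for $\bP,\bQ$, and invoking injectivity of $\boldsymbol{Y}\mapsto\boldsymbol{Y}\bP-\bQ\boldsymbol{Y}$), whereas the paper simply cites it from the proof of Theorem~2.1 in \cite{DMH13SIGMA}; your computation of that step is correct.
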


Here we used the fact that 
the differential equation for $\bX$ in Corollary~\ref{cor:sol_via_Riem_sys} is a 
consequence of the Sylvester equation if $\sigma(\bP) \cap \sigma(\bQ) = \emptyset$ holds 
(see the proof of Theorem~2.1 in \cite{DMH13SIGMA}). The above equations for 
$\bP$ form an $n \times n$ version of the sdYM Riemann system considered and solved above. 
The equations for $\bQ$ are obtained by transposition. 
Proposition~\ref{prop:sdYM_multi-break} expresses a nonlinear superposition rule for 
`breaking wave' solutions of the sdYM equation. Also see \cite{Zenc08sdYM}.

\subsubsection{A non-autonomous chiral model in three dimensions}
\label{subsubsec:naCM}
It is well-known that many integrable equations are reductions of the sdYM equation. 
As an example, the reduction condition $\pa_{\bar{z}} = \epsilon \, \pa_z$, 
where $\epsilon = \pm 1$, reduces (\ref{sdYM_bidiff}) to
\bez
    \mathrm{d} f = - f_z \, \xi_1 + f_{\bar{y}} \, \xi_2 \, , \qquad
    \bd f = f_y \, \xi_1 + \epsilon \, f_z \, \xi_2 \, .
\eez
Performing a change of variables $(y,\bar{y}) \mapsto (\rho,\theta)$ via 
$y = \frac{1}{2} \rho \, e^{\theta}$, $\bar{y} = \frac{1}{2} \rho \, e^{-\theta}$, 
we obtain 
\bez
    \mathrm{d} f = - f_z \, \xi_1 + e^\theta \, (f_\rho - \rho^{-1} f_\theta) \, \xi_2 
           \, , \qquad
    \bd f = e^{-\theta} \, (f_\rho + \rho^{-1} f_\theta) \, \xi_1 + \epsilon \, f_z \, \xi_2 \, .
\eez
This is the bidifferential calculus exploited in \cite{DKMH11sigma,DMH13SIGMA}. 
In terms of 
\be
        \varphi := e^{\theta} \, \phi \, ,   \label{naCM_varphi}
\ee
(\ref{phi_eq}) reads 
\be
    \epsilon \, \varphi_{zz} + \Phi_\rho + \rho^{-1} \, \Phi_\theta 
    = [ \Psi , \varphi_z]  \, , \qquad \mbox{where} \quad 
    \Psi := \varphi_\rho + \rho^{-1} (\varphi - \varphi_\theta) \, ,   \label{2+1naCM_phi_eq}
\ee
and (\ref{g_eq}) takes the form 
\be
    (\rho \, g_z \, g^{-1})_z + \epsilon \, (\rho \, g_\rho \, g^{-1})_\rho   
    - [(g_\rho + \rho^{-1} g_\theta ) \, g^{-1}]_\theta + (g_\theta \, g^{-1})_\rho 
     = 0 \, .    \label{2+1naCM_g_eq}
\ee
This is a three-dimensional generalization 
of the \emph{non-autonomous chiral model} that underlies integrable reductions of the vacuum 
Einstein (-Maxwell) equations and to which it reduces if $g$ does not depend on $\theta$ 
(see \cite{DKMH11sigma,DMH13SIGMA} and references cited there).

Now we apply the reduction condition and the change of variables to the sdYM Riemann 
system (\ref{sdYM_Riem_sys}). 
Imposing $\pa_{\bar{z}} = \epsilon \, \pa_z$, (\ref{sdYM_Riem_sys}) becomes
\bez
     \phi_y = - \phi_z \, \phi \, , \qquad
     \epsilon \, \phi_z = \phi_{\bar{y}} \, \phi \, ,
\eez
which is implicitly solved by
\bez
    \phi = f(\epsilon \, \bar{y} \, I + z \, \phi - y \, \phi^2) \, .
\eez
In terms of the independent variables $\rho,\theta,z$, the above system contains explicit 
factors $e^\theta$ (and is thus non-autonomous not only in $\rho$, but also in $\theta$). 
They are eliminated, however, by passing over to $\varphi$ given by (\ref{naCM_varphi}).
Now the Riemann system reads
\bez
  \varphi_\rho + \rho^{-1} (\varphi_\theta - \varphi) + \varphi_z \, \varphi = 0 \, , \qquad
  \epsilon \, \varphi_z - [ \varphi_\rho - \rho^{-1} (\varphi_\theta  - \varphi) ] \, \varphi = 0 \, ,
\eez
and it is implicitly solved by 
\bez
   \varphi = e^{\theta} \, f\Big(  e^{-\theta} \, \frac{\rho}{2} \, (  \epsilon \, I 
               + 2 \, \rho^{-1} z \, \varphi - \varphi^2 ) \Big) \, .
\eez
According to (\ref{Riem_to_phi_eq}), this solves (\ref{2+1naCM_phi_eq}) and $g = e^{\theta} \, \varphi$ 
solves (\ref{2+1naCM_g_eq}).
If we require $\varphi$ (assumed to be invertible) to be independent of $\theta$, this fixes the 
function $f$ to $f(x) = A^{-1} x$, with a constant matrix $A$  
(subject to conditions that ensure that $[A,\varphi]=0$). 
In this case we have  
\bez
    \varphi^2 - 2 \, \rho^{-1} (z \, I - A) \, \varphi - \epsilon \, I = 0 \, ,
\eez
which is a matrix version \cite{DKMH11sigma,DMH13SIGMA} of the `pole trajectories' in the 
Belinski-Zakharov approach to solutions of the integrable reductions of the 
Einstein vacuum equations \cite{Beli+Zakh78}. 
The non-autonomous chiral model is an example, where non-constant solutions 
of the `Riemann equations' (\ref{P,Q_eqs}), which are here in fact Riemann equations, 
in the binary Darboux transformation theorem are crucial in order to recover relevant solutions 
of integrable reductions of Einstein's equations, see \cite{DKMH11sigma,DMH13SIGMA}.

\subsection{A matrix version of the two-dimensional Toda lattice}
Now we compose a bidifferential calculus from two calculi of the kind considered in Section~\ref{subsec:sdRiem}, 
associated with the semi-discrete Riemann equation. We set 
\be
    \mathrm{d} f = [\bbS,f] \, \xi_1 + [\pa_x,f] \, \xi_2 \, , \qquad
    \bd f = [\pa_t,f] \, \xi_1 - [\bbS^{-1} ,f] \, \xi_2 \, ,   \label{Toda_bidiff}
\ee
where $\pa_x := \pa/\pa x$ and $\pa_t := \pa/\pa t$. Setting 
\bez
      \phi = \varphi \, \bbS^{-1} \, ,
\eez
with a matrix of functions $\varphi$, (\ref{Riemann_1}) takes the form
\be
    \varphi_t = (\varphi^+ - \varphi) \, \varphi \, , \qquad
    \varphi - \varphi^- = \varphi_x \, \varphi^- \, .   \label{Toda_Riem_sys}
\ee
The first equation is (\ref{sd_Riemann_eq_1}), the second is obtained from (\ref{sd_Riemann_eq_1})  
via $\varphi \mapsto \varphi^{-1}$, $t \mapsto -x$, and a reflection of the discrete variable.  
(\ref{phi_eq}) takes the form
\be
   \varphi_{tx} = (\varphi^+ - \varphi)(I + \varphi_x) 
                   - (I + \varphi_x)(\varphi - \varphi^-)  
         \label{Toda:2d_Toda_eq}          
\ee
(also see \cite{DMH08bidiff}). This may be regarded as a matrix version  
of the \emph{two-dimensional Toda lattice equation}. Indeed, in the commutative case, setting $V := \varphi_x$ and 
differentiating once with respect to $t$, (\ref{Toda:2d_Toda_eq}) becomes (cf. \cite{DMH08bidiff})
the well-known equation $(\log(1+V))_{tx} = V^+ - 2V + V^-$ \cite{Mikh79,HIK88}. 
The Miura-dual equation (\ref{g_eq}), with $g=\phi$, reads
\bez  
      (\varphi_t \, \varphi^{-1})_x - \varphi^+ \, \varphi^{-1} + \varphi \, (\varphi^{-1})^-  = 0 \, ,
\eez
which appeared in \cite{Li+Nimm08}. 
(\ref{Toda_Riem_sys}) is the `Riemann system' associated with (\ref{Toda:2d_Toda_eq}) and the 
last equation.

\subsubsection{Cole-Hopf transformation for the `Riemann system'}
Choosing $\phi_0 = \bbS^{-1}$ in (\ref{phi_transform}), and 
$\gamma = A \, \xi_1 + B \, \bbS^{-1} \, \xi_2$ in (\ref{CH_Phi_eq}), 
where $A$ and $B$ are constant, 
we obtain the following Cole-Hopf transformation for (\ref{Toda_Riem_sys}),
\be
    \varphi = \Phi \, (\Phi^-)^{-1} \, , \qquad
    \Phi_t = \Phi^+ - \Phi + \Phi \, A \, , \qquad 
    \Phi_x = \Phi - \Phi^- - \Phi \, B \, .    \label{CH_for_Toda-Riem}
\ee
The second equation is (\ref{sd_transport_eq}). The last two equations are compatible. 
Without restriction of generality we can set $A = B = 0$, since $A$ and $B$ can be eliminated by 
a transformation of $\Phi$ that preserves the expression for $\varphi$. 

\begin{example}
A class of solutions of (\ref{Toda_Riem_sys}) is determined by 
\bez
    \Phi = I + \sum_{i=1}^n \bA_i \, e^{\bTh_i} \, \bB_i \, , \qquad
    \bTh_i = \bLa_i \, k + (I-e^{-\bLa_i}) \, x + (e^{\bLa_i}-I) \, t \, ,
\eez
where $\bA_i,\bB_i$ are constant $m \times n$, respectively $n \times m$ matrices, and $\bLa_i$ 
are constant $n \times n$ matrices. In the scalar case ($m=1$), and if 
$\bLa = \mathrm{diag}(\lambda_1,\ldots,\lambda_n)$, this takes the form
\bez
    \Phi = 1 + \sum_{i=1}^n e^{\lambda_i \, k + (1- e^{-\lambda_i}) \, x 
                  + (e^{\lambda_i} -1) \, t + \gamma_j} \, ,
\eez
with constants $\gamma_i$. If all constants are real, (\ref{CH_for_Toda-Riem}) determines 
an $n$-kink solution of the two-dimensional Toda lattice equation. 
\end{example}

\subsubsection{Darboux Transformations for the `Riemann system'} 
Let 
\bez 
 \balpha =  \bA \, \xi_1 + \mathbb{S}^{-1} \, \xi_2 \, , \quad  
 \bbeta =  -\bB \, \xi_1 - \mathbb{S}^{-1} \, \xi_2 \, , \quad
 \bP = \bA \, \mathbb{S}^{-1} \, , \quad 
 \bQ = \bB \, \mathbb{S}^{-1} \, , \quad 
 \bX = \mathbb{S} \, \tbX \, , 
\eez
with invertible constant matrices $\bA, \bB$. Then Corollary~\ref{cor:main} yields the 
following system of equations,
\bez
 && \bU_t = \bU^+ \, \bA + (\varphi_0^+ - \varphi_0) \, \bU \, , \quad 
 \bU_x = - ( I + \varphi_{0,x} ) \, \bU^- \, \bA^{-1} \, , \\
 && \bV_t = - \bV \, \varphi_0^+ \, , \quad
    \bV_x = \bB^{-1} \bV^+ \, (I + \varphi_{0,x}^+) \, , \quad
    \bV^- = \bB^{-1} \bV \, \varphi_0 \, ,
\eez
and 
\bez
   \tbX^+ = ( \bV \, \bU + \bB \tbX ) \, \bA^{-1} \, ,  \quad
   \tbX_t = \bV^- \, \bU \, , \quad 
   \tbX_x = \bV^-_x \, \bU^- \, \bA^{-1} \, .
\eez
The latter system is compatible, which allows us to write 
\bez
  \tbX = \boldsymbol{C^-} + \int_{(0,0)}^{(t,x)} \Big( \bV^- \, \bU \, dt + \bV_x^- \, \bU^- \, \bA^{-1} \, dx 
          \Big) \, ,
\eez
where the integral does not depend on the path of integration. Here $\bC$ does not depend on $x$ and $t$ and 
has to satisfy the constraint $\bC \bA - \bB \bC^- = \bV \, \bU |_{t=0,x=0}$.
If $\varphi_0$ solves the `Riemann system' (\ref{Toda_Riem_sys}), and if $\bU, \bV$ satisfy the above 
linear equations, a new solution of (\ref{Toda_Riem_sys}) is given by 
\bez
   \varphi = \varphi_0 + \boldsymbol{U} \, \tbX^{-1} \, \bV^- \, .
\eez 

Darboux transformations for the two-dimensional Toda lattice equation appeared in \cite{Nimm+Will97,Li+Nimm08}.
A very compact version is obtained from Theorem~\ref{thm:main}, using the bidifferential calculus 
determined by (\ref{Toda_bidiff}).

\subsection{A matrix version of Hirota's bilinear difference equation} 
\label{subsec:Hirota}
Here we compose calculi of the kind considered in Section~\ref{subsec:discr_Riem}. 
In the following, $c_i$ are constants and $\bbS_0, \bbS_i : \cA_0 \rightarrow \cA_0$ commuting 
shift operators,  
where $\cA_0$ is the algebra of matrices of functions of corresponding discrete variables. 
Let $\cA = \cA_0[\bbS_0^{\pm 1}, \bbS_1^{\pm 1}, \ldots, \bbS_K^{\pm 1}]$. We 
use the notation introduced in Section~\ref{subsec:discr_Riem}:  
$f_{,0} := \bbS_0 \, f \, \bbS_0^{-1}$ and $f_{,i} := \bbS_i \, f \, \bbS_i^{-1}$. 

\subsubsection{First version}
\label{subsec:Hirota_1}
Let
\bez
   \mathrm{d} f = \sum_{i=1}^K [ \bbS_i^{-1} \bbS_0 , f] \, \xi_i \, , \qquad
   \bd f = \sum_{i=1}^K c_i^{-1} \, [ \bbS_i^{-1} , f] \, \xi_i \, .
\eez
Then, setting 
\bez
      \phi = \varphi \, \bbS_0^{-1} \, ,   
\eez
with a matrix of functions $\varphi$, (\ref{Riemann_1}) reads
\be
    \Delta_i \varphi = ( \varphi_{,i} - \varphi_{,0} ) \, \varphi \, , \qquad \quad i=1,\ldots,K \, ,
            \label{Hirota_1_Riem_sys}
\ee
where 
\bez
     \Delta_i \varphi := c_i^{-1} \, ( \varphi_{,i} - \varphi ) \, .
\eez     
(\ref{Hirota_1_Riem_sys}) is a system of discrete matrix Riemann equations (cf. (\ref{discrete_Riemann_eq_v1})). 
(\ref{phi_eq}) takes the form
\be
   (\Delta_i \varphi)_{,0} - (\Delta_i \varphi)_{,j} 
     + ( \varphi_{,i} - \varphi_{,0} )_{,j} \, ( \varphi_{,j} - \varphi_{,0} )
 = (\Delta_j \varphi)_{,0} - (\Delta_j \varphi)_{,i} 
   + ( \varphi_{,j} - \varphi_{,0} )_{,i} \, ( \varphi_{,i} - \varphi_{,0} )
    \, ,    \label{phi_Hirota_1}
\ee
where $i,j = 1,\ldots,K$, $i \neq j$. The Miura-dual (\ref{g_eq}) with $g = \phi$ takes the form
\bez
     (\Delta_i \varphi)_{,0} \, \varphi_{,0}^{-1} - (\Delta_i \varphi)_{,j} \, \varphi_{,j}^{-1} 
   = (\Delta_j \varphi)_{,0} \, \varphi_{,0}^{-1} - (\Delta_j \varphi)_{,i} \, \varphi_{,i}^{-1}  
     \qquad i,j=1,\ldots,K \, , \quad i \neq j \, ,
\eez
which is
\be
   c_i^{-1} \, [ (\varphi_{,i} \, \varphi^{-1})_{,0} - (\varphi_{,i} \, \varphi^{-1})_{,j} ]
   = c_j^{-1} \, [ (\varphi_{,j} \, \varphi^{-1})_{,0} - (\varphi_{,j} \, \varphi^{-1})_{,i} ] \, . 
             \label{Hirota_1_dual}
\ee
(\ref{Hirota_1_Riem_sys}) is the `Riemann system' associated with (\ref{phi_Hirota_1}) and (\ref{Hirota_1_dual}).

\paragraph{Cole-Hopf transformation.}
Choosing $\phi_0 = \bbS_0^{-1}$ in (\ref{phi_transform}), and 
$\gamma= - \sum_i A_i \, \, \bbS_i^{-1} \, \xi_i$, with constant matrices $A_i$, 
in (\ref{CH_Phi_eq}), we obtain the following Cole-Hopf transformation for (\ref{Hirota_1_Riem_sys}),
\bez
    \varphi = \Phi \, \Phi_{,-0}^{-1} \, , \qquad
    \Phi_{,0} - c_i^{-1} \, \Phi = \Phi_{,i} \, (A_i - (c_i^{-1}-1) \, I) \qquad i=1,\ldots,K \, . 
\eez
If $c_i=1$, these equations are of the form (\ref{dRiem_1_CH}).  
Choosing $A_i = ( c_i^{-1} - 2 ) \, I$, a set of solutions of the linear equations is given by 
\bez
    \Phi = \sum_{i=1}^N \bA_i \, \bLa_i^{k_0} \, \prod_{j=1}^K (c_i^{-1} I - \bLa_i)^{k_j} \, \bB_i  \, , 
\eez
where $\bA_i$ and $\bB_i$ are constant $m \times n$, respectively $n \times m$ matrices, and the $\bLa_i$ 
are constant $n \times n$ matrices.

\paragraph{Darboux transformations for the `Riemann system'.} Let
\bez
 \bP = \bA \mathbb{S}_0^{-1} \, , \quad 
 \bQ = \bB\mathbb{S}_0^{-1} \, , \quad 
 \balpha = \sum_{i = 1}^K \bA \, \mathbb{S}_i^{-1} \, \xi_i \, , \quad 
 \bbeta = -\sum_{i = 1}^K \bB \, \mathbb{S}_i^{-1} \, \xi_i \, ,\quad
 \tbX = \bX \, \mathbb{S}_0 \, , 
\eez
with constant $n \times n$ matrices $\bA,\bB$. 
Then (\ref{P,Q_eqs})-(\ref{dX_eq}) result in 
\bez
 \Delta_i \bU + \bU_{,0} \, \bA  = c_i \, \Delta_i \varphi_0 \, \bU \, , \quad 
 \Delta_i \bV = \bV_{,i} \, \varphi_{0,0} \, , \quad 
 \bB \, \bV_{,-0} = \bV \, \varphi_0 \, ,
\eez
and
\bez
 \Delta_i \tbX + \bV_{,i} \, \bU_{,0} = 0 \, , \quad 
 \tbX \, \bA - \bB \, \tbX_{,-0} = \bV \, \bU \, .  
\eez
The latter equations are compatible, hence
\bez
 \tbX(k_0,k_1, \dots, k_K) &=& \bF(k_0) - \sum_{i = 1}^K c_i \sum_{j_i = 0}^{k_i-1} 
      \bV(k_0,k_1, \ldots, k_{i-1}, j_i+1, 0, \ldots, 0) \, \times \\
     && \bU(k_0+1,k_1, \ldots,k_{i-1},j_i,0,\ldots, 0) \, , 
\eez
where $\bF(k_0)$ satisfies 
\bez
 \bF(k_0) \, \bA - \bB \, \bF(k_0-1) - \bV(k_0,0,\ldots,0) \, \bU(k_0,0,\ldots,0) = 0 \, . 
\eez
If $\varphi_0$ solves (\ref{Hirota_1_Riem_sys}), then also
\bez
    \varphi = \varphi_0 + \bU (\tbX^{-1} \bV)_{,-0}  \, .
\eez

\subsubsection{Second version}
Here we exchange $\mathrm{d}$ and $\bd$ in the first version:
\bez
  \mathrm{d} f = \sum_{i=1}^K c_i^{-1} \, [ \bbS_i^{-1} , f] \, \xi_i \, , \qquad
  \bd f = \sum_{i=1}^K [ \bbS_i^{-1} \bbS_0 , f] \, \xi_i \, .
\eez
In terms of $\varphi = \phi \, \bbS_0^{-1}$, taken to be a matrix of functions, (\ref{Riemann_1}) becomes
\bez
    \varphi_{,i} - \varphi_{,0} = (\Delta_i \varphi) \, \varphi_{,0} 
     \qquad i=1,\ldots,K \, .   
\eez
This system is simply obtained from (\ref{Hirota_1_Riem_sys}) with $\varphi$ replaced 
by $\varphi^{-1}$, which is a special case of (\ref{d,bd_exchange_Riem_sym}). 
The integrability condition (\ref{phi_eq}) takes the form
\be
     (I+\Delta_i \varphi)_{,j} \, (I+\Delta_j \varphi)_{,0} 
   = (I+\Delta_j \varphi)_{,i} \, (I+\Delta_i \varphi)_{,0} 
     \qquad i,j=1,\ldots,K, \quad i \neq j \, .      \label{phi_Hirota_2}
\ee
The Miura-dual, with $g = \phi = \varphi \, \bbS_0$, is
\be
     \Delta_j( \varphi_{,i} \, \varphi_{,0}^{-1} ) = \Delta_i( \varphi_{,j} \, \varphi_{,0}^{-1} )
     \qquad i,j=1,\ldots,K \, , \quad  i \neq j \, .       \label{Hirota_2_dual}
\ee
Via $\varphi \mapsto \varphi^{-1}$, (\ref{Hirota_2_dual}) becomes (\ref{Hirota_1_dual}), also see 
Section~\ref{sec:conclusion}.
There is no such relation between (\ref{phi_Hirota_1}) and (\ref{phi_Hirota_2}).

For $K=2$, (\ref{Hirota_2_dual}) (or (\ref{Hirota_1_dual})) can be regarded as a matrix 
version of Hirota's bilinear difference equation \cite{Hiro81,Miwa82}, as explained in the 
following remark which we owe to Aristophanes Dimakis.
This matrix version of Hirota's bilinear difference equation is different from the 
`noncommutative Hirota-Miwa equations' in \cite{Nimm06}.

\begin{remark}
In the scalar case ($m=1$), setting
\bez
       \varphi = \frac{\tau_{,-0}}{\tau}  \, ,  
\eez
(\ref{Hirota_2_dual}) becomes
\bez
    c_i \, \Big( \frac{\tau_{,i,j,-0} \, \tau_{,j,0}}{\tau_{,i,j} \, \tau_j}  
       - \frac{\tau_{,i,-0} \, \tau_{,0}}{\tau_{,i} \, \tau} \Big) 
  = c_j \, \Big( \frac{\tau_{,i,j,-0} \, \tau_{,i,0}}{\tau_{,i,j} \, \tau_i}  
       - \frac{\tau_{,j,-0} \, \tau_{,0}}{\tau_{,j} \, \tau} \Big) \, ,  
\eez
which is
\bez
   \Big( \frac{ c_i \, \tau_{,i} \, \tau_{,j,0} - c_j \, \tau_{,j} \, \tau_{,i,0} }{ \tau_{,i,j} \tau_{,0}} 
      \Big)_{,-0} 
 = \frac{ c_i \, \tau_{,i} \, \tau_{,j,0} - c_j \, \tau_{,j} \, \tau_{,i,0} }{ \tau_{,i,j} \, \tau_{,0}}  \, .
\eez
Hence, we obtain 
\bez
     c_i \, \tau_{,i} \,  \tau_{,j,0} - c_j \, \tau_{,j} \, \tau_{,i,0} 
   = c_{ij} \, \tau_{,i,j} \, \tau_{,0} \, , 
\eez
with new arbitrary constants $c_{ij} = - c_{ji}$. For $K=2$, this is Hirota's bilinear difference 
(or Hirota-Miwa) equation \cite{Hiro81,Miwa82}.
\end{remark}

\section{ (2+1)-dimensional matrix Nonlinear Schr\"odinger system}
\label{sec:NLS}
The Nonlinear Schr\"odinger equation in 2+1 dimensions can be treated as a reduction of the sdYM equation (cf. 
Section~\ref{subsec:sdYM}), but here we will take a more direct approach. 
Let $\cB$ be the space of smooth complex functions of independent variables $x,y,t$. 
Let $J \neq I$ be an invertible constant $m \times m$ matrix. 
We consider two calculi on $\mathrm{Mat}(m,m,\cB)$. \\
\textbf{(1)} 
Let
\bez
    \mathrm{d} f = f_y \, , \qquad
    \bd f = -\imag \, f_t \, .
\eez
Then (\ref{Riemann_1}) 
is the matrix Riemann equation
\be
    \imag \, \phi_t + \phi_y \, \phi = 0 \, .    \label{2+1NLS_Riem_1}
\ee
\textbf{(2)} Let
\bez
    \mathrm{d} f = \frac{1}{2} [J,f] \, , \qquad
    \bd f = f_x \, . 
\eez
In this case, (\ref{Riemann_1}) 
is the ordinary differential equation
\be
   \phi_x - \frac{1}{2} [J,\phi] \, \phi = 0 \, .     \label{2+1NLS_Riem_2}
\ee

Now we combine the two calculi to
\be
    \mathrm{d} f = f_y \, \xi_1 + \frac{1}{2} [J,f] \, \xi_2 \, , \qquad
    \bd f = -\imag \, f_t \, \xi_1 + f_x \, \xi_2     \label{2+1NLS_bidiff}
\ee
(also see \cite{DMH01FK}). Then (\ref{Riemann_1})
consists of the pair of equations given above. The integrability condition (\ref{phi_eq}) takes the form 
\bez
  - \frac{\imag}{2} [ J, \phi_t ] = \phi_{xy} + \frac{1}{2} [ \phi_y , [J,\phi]] \, .
\eez
Writing
\bez
    \phi = J \, \varphi \, ,    
\eez
it becomes 
\be
  - \frac{\imag}{2} [ J, \varphi_t ] = \varphi_{xy} 
  + \frac{1}{2} \, \Big( \varphi_y \, J \, [J,\varphi] 
        - [J,\varphi] \, J \, \varphi_y \Big) \, .        \label{pre_matrix2+1NLS}
\ee   
 From now on we set  
\be
    J := J_{(m_1,m_2)} := \left(\begin{array}{cc} I_{m_1} & 0 \\ 0 & -I_{m_2} \end{array}\right) \, , 
       \qquad
    \varphi =: \left(\begin{array}{cc} u & q \\ r & v \end{array} \right) \, ,
        \label{J_varphi_decomp}
\ee
with $m=m_1+m_2$. Then (\ref{pre_matrix2+1NLS}) splits into
\be
   -\imag \, q_t = q_{xy} + q \, v_y + u_y \, q \, , \quad 
   \imag \, r_t = r_{xy} + r \, u_y + v_y \, r  \, , \quad
    u_x = - q \, r \, , \quad v_x = - r \, q \, .   \label{matrix2+1AKNS}
\ee
Since the last two equations arise via an integration with respect to $y$, we should have added 
arbitrary matrices that do not depend on $y$. Since they do not influence the first two equations, they 
can be dropped, respectively set to zero. 
Imposing $\pa_x = \pa_y$, (\ref{matrix2+1AKNS}) reduces 
to the matrix \emph{NLS system} \cite{Ford+Kuli83,APT04book}, also see the references 
in \cite{DMH10NLS}.

Now we consider the \emph{reduction conditions}
\be
      \varphi^\dagger = \left\{ \begin{array}{l@{\quad}l}
                          \varphi + C  & \mbox{defocusing case}  \\
                        J \, \varphi \, J + C & \mbox{focusing case} 
                        \end{array} \right.     \label{NLS_hc_red_conditions}
\ee
where $C$ is an anti-Hermitian matrix commuting with $J$. 
This generalizes the `naive reductions' where $C=0$. 
The latter would exclude the solutions presented in Section~\ref{subsec:Bogo} below.  
Writing $C = \mbox{block-diag}(c_1,c_2)$ 
and using (\ref{J_varphi_decomp}), these conditions result in
\be
     r = \pm \, q^\dagger \, , \qquad    
     u^\dagger = u + c_1 \, , \qquad   v^\dagger = v + c_2  \, ,   \label{NLS_hc_red_conditions_q,r,u,v}   
\ee
where $c_i^\dagger = - c_i$, $i=1,2$.  Compatibility with (\ref{matrix2+1AKNS}) implies
\be
    c_{1,x} = c_{2,x} = 0 \, , \qquad   c_{1,y} \, q + q \, c_{2,y} = 0 \, .
       \label{2+1NLS_reduction_conditions_for_C}
\ee
Then (\ref{matrix2+1AKNS}) reduces to the matrix version 
\be
   -\imag \, q_t = q_{xy} + q \, v_y + u_y \, q \, , \quad 
    u_x = \mp q \, q^\dagger \, , \quad v_x = \mp q^\dagger \, q \, ,   \label{matrix2+1NLS}
\ee
of the \emph{(2+1)-dimensional NLS equation} \cite{Calo+Dega76NCB32,Zakh80} 
\be
    \imag \, q_t + q_{xy} \mp 2 \, q \, \Big( \int^x |q|^2 \, dx \Big)_y = 0  \, ,
       \label{2+1NLS}
\ee
which we obtain from (\ref{matrix2+1NLS}) in the scalar case ($m_1=m_2=1$). 
The upper (lower) choice of sign corresponds to the defocusing (focusing) NLS case. 
The matrix version probably first appeared in \cite{Stra92NLS}.

\subsection{`Riemann system' associated with the (2+1)-dimensional matrix NLS system}
\label{subsec:2+1NLS_reduction_of_Riem}
In terms of $\varphi$, the associated `Riemann system', given by (\ref{2+1NLS_Riem_1}) and 
(\ref{2+1NLS_Riem_2}), reads
\be
      \imag \, \varphi_t + \varphi_y \, J \, \varphi = 0 \, , \qquad 
      \varphi_x - \frac{1}{2} ( J \, \varphi \, J - \varphi ) \, \varphi = 0 \, ,
                  \label{2+1NLS_Riem_sys}
\ee
which decomposes into 
\be
 &&  \imag \, u_t + u_y \, u - q_y \, r  = 0 \, , \quad    
   \imag \, v_t - v_y \, v  + r_y \, q = 0 \, , \quad
   u_x + q \, r = 0 \, , \quad v_x + r \, q = 0 \, , \nonumber \\
 &&  \imag \, q_t + u_y \, q - q_y \, v  = 0 \, , \quad 
   \imag \, r_t + r_y \, u - v_y \, r = 0 \, , \quad
   q_x + q \, v = 0 \, , \quad r_x + r \, u = 0  \, .   \label{NLS_Riem_1_decomposed}
\ee
The first two equations can be dropped, since they are a consequence of the others    
if $q,r \neq 0$. The third and the fourth equation are just the last two equations 
in (\ref{matrix2+1AKNS}). The last four equations imply the first two equations in (\ref{matrix2+1AKNS}). 

The reduction conditions (\ref{NLS_hc_red_conditions}), respectively (\ref{NLS_hc_red_conditions_q,r,u,v}), 
imposed on the `Riemann system' (\ref{NLS_Riem_1_decomposed}), reduce it to
\bez
  && \imag \, q_t + u_y \, q - q_y \, v  = 0 \, , \qquad
   \imag \, u_t + u_y \, u \mp q_y \, q^\dagger  = 0 \, , \qquad 
   \imag \, v_t - v_y \, v  \pm q_y^\dagger  \, q = 0 \, , \\
  && q \, v = (u + c_1) \, q \, , \qquad
   q_x = - q \, v \, , \qquad
   u_x = \mp q \, q^\dagger \, , \qquad
   v_x = \mp q^\dagger q \, , \qquad
   c_{1,y} \, q + q \, c_{2,y} = 0  \, .
\eez
We already met the last equation in (\ref{2+1NLS_reduction_conditions_for_C}).
Recall that $u^\dagger = u + c_1$ and $v^\dagger = v + c_2$. 
If $m_1=m_2$ ($q,u,v$ are then square matrices of same size), and assuming invertibility of $q$,
the above system reduces to 
\be
   u = - q_x \, q^{-1} - c_1 \, , \quad 
   v = - q^{-1} \, q_x \, , \quad
   (q_x \, q^{-1})^\dagger = q_x \, q^{-1} + c_1 \, , \quad
   ( q^{-1} \, q_x)^\dagger =  q^{-1} \, q_x - c_2 \, , \label{NLS-Riemann_sys_u,v,q}
\ee
and
\be
  \imag \, q_t - q_{xy} + q_x \, q^{-1} \, q_y + q_y \, q^{-1} \, q_x 
   - c_{1,y} \, q = 0 \, , \quad
  (q_x \, q^{-1})_x \mp q \, q^\dagger = 0 \, , \quad
  (q^{-1} q_x)_x \mp q^\dagger \, q = 0 \, .
         \label{NLS-Riemann_sys_q}
\ee

\subsubsection{`Riemann system' associated with the scalar (2+1)-dimensional NLS equation}
\label{subsec:Bogo}
In the scalar case, i.e., $m_1=m_2=1$, the last two equations of (\ref{NLS-Riemann_sys_u,v,q}) 
require $c_2 = -c_1$. Writing 
$q = \sqrt{\rho} \, e^{\imag \, S}$, with real functions $\rho$ and $S$, the second equation 
of (\ref{NLS-Riemann_sys_q}) becomes $\log(\rho)_{xx} \mp 2 \, \rho = 0$ and $S_{xx} = 0$, 
hence $S = \lambda \, x + \beta$, with real functions $\lambda,\beta$ not depending on $x$.
 We obtain
\bez
    q = \left\{ \begin{array}{l} a  \, \mathrm{sec}(a \, x + b) \, e^{\imag \, (\lambda \, x + \beta) } \\
         a  \, \mathrm{sech}(a \, x + b) \, e^{\imag \, (\lambda \, x + \beta) } \end{array} \right. 
         \quad \begin{array}{l} \mbox{defocusing case (upper sign)} \\  
         \mbox{focusing case (lower sign)} \end{array}  
\eez
where $a,b$ are real functions that do not depend on $x$. 
We find $c_1 = -2 \imag \, \lambda$. 
Inserting the expression for $q$ in the first of (\ref{NLS-Riemann_sys_q}), leads to
\be
   \lambda_t + \lambda \, \lambda_y \pm a \, a_y = 0 \, , \qquad
   a_t + (\lambda \, a)_y = 0 \, ,  \label{2+1NLS_Bogo1}
\ee
and the linear system
\be
    b_t +  \lambda \, b_y + a \, \beta_y = 0 \, , \qquad
    \beta_t + \lambda \, \beta_y \pm a \, b_y = 0 \, .  \label{2+1NLS_Bogo2}
\ee

In the defocusing case (upper sign), the first pair of equations decouples into the 
two Riemann equations
\bez
    w^i_t + w^i \, w^i_y = 0 \qquad  i=1,2 \, , \qquad \mbox{where} \quad
    w^1 := \lambda + a \, , \quad  w^2 := \lambda - a \, .
\eez
With the reduction $a = \lambda$ (i.e., $w^2=0$), the function $\lambda$ satisfies 
the Riemann equation $\lambda_t + 2 \, \lambda \, \lambda_y = 0$ and we recover (with $b=\beta$) 
a case treated in \cite{Bogo91III} (see \S 5 therein), where `breaking soliton' 
solutions of (\ref{2+1NLS}) were searched for. We thus showed that this case is a subcase of what 
is covered by the `Riemann system' associated with the (2+1)-dimensional NLS equation. 

In the focusing case, (\ref{2+1NLS_Bogo1}) can be expressed as the complex Riemann equation
$w_t + w \, w_y = 0$, where $w := \lambda + \imag \, a$. In terms of $\zeta := b - \imag \, \beta$, 
(\ref{2+1NLS_Bogo2}) (with the lower sign) reads $\zeta_t + w \, \zeta_y = 0$, which is solved by 
$\zeta = f(w)$, with an arbitrary differentiable function $f$, as a consequence of the Riemann equation 
for $w$. A hodograph transformation (with $x(\lambda,a)$, $t(\lambda,a)$) turns (\ref{2+1NLS_Bogo1}) 
into the linear equations
\bez
     y_\lambda + a \, t_a - \lambda \, t_\lambda = 0 \, , \qquad
     y_a - \lambda \, t_a - a \, t_\lambda  = 0 \, ,
\eez
if $y_\lambda \, t_a - y_a \, t_\lambda \neq 0$ (also see, e.g., \cite{CCCF05}).
If $t_\lambda = 0$, then $y = y_0 + \lambda \, a_0/a $ and $t = t_0 + a_0/a$, with arbitrary 
constants $a_0,t_0,y_0 \in \bbR$, hence
\bez
    w = \frac{y-y_0}{t-t_0} + \imag \, \frac{a_0}{t-t_0} \, .
\eez
With this special explicit solution of the complex Riemann equation, we recover 
the singular solitary wave solution, of the scalar focusing (2+1)-dimensional NLS equation, that appeared in 
\cite{Jian+Bull94}. 
If we choose a `breaking wave' solution of the complex Riemann equation, the function $a$ inherits 
the singularity of $\lambda_y$ and thus the solution $q$ of the (2+1)-dimensional NLS equation 
itself is singular (and not just a partial derivative of it). Therefore, such solutions are 
\emph{not} breaking waves. 

If $\lambda$ is constant, the above system admits (regular) solitary wave solutions. From the above, 
we can conclude that the slightest generic perturbation will lead to a solution that breaks or blows up 
in finite (positive or negative) time $t$. 
This feature is absent in the (1+1)-dimensional NLS equation.

\subsubsection{The linearization method}
Let $\phi_0 = \imag \, \Lambda$ and $\gamma = \frac{\imag}{2} J \Lambda \, \xi_2$. 
We assume that $\Lambda$ does not depend on $x$ and $[\Lambda,J]=0$. 
Then (\ref{CH_Phi_eq}) with (\ref{2+1NLS_bidiff}) reads
\be
    \Phi_t + \Phi_y \, \Lambda = 0 \, , \qquad
    \Phi_x = \frac{\imag}{2} \, J \, \Phi \, \Lambda \, ,  \label{2+1NLS_Riem_lin_sys}
\ee
and (\ref{Riemann_1_mod}) is the matrix Riemann equation
\be
     \Lambda_t + \Lambda_y \, \Lambda = 0 \, .    \label{2+1NLS_Lambda_eq}
\ee
 From the general argument in Section~\ref{sec:CH} it follows that, for any solution $\Phi$ of 
 (\ref{2+1NLS_Riem_lin_sys}), 
\be
    \varphi = \imag \, J \Phi  \, \Lambda \, \Phi^{-1}    
     \label{2+1NLS_CH_transf}
\ee
is a solution of the `Riemann system' (\ref{2+1NLS_Riem_sys}). 
The general solution of the second of (\ref{2+1NLS_Riem_lin_sys}) is
\bez
    \Phi = A_1 \, e^{\frac{1}{2} \, \imag \, J \Lambda \, x}
           + A_2 \, e^{- \frac{1}{2} \, \imag \, J \Lambda \, x} \, ,
\eez
where the matrices $A_i$ do not depend on $x$ and satisfy $J A_1 = A_1 J$ and $J A_2 = - A_2 J$.
The first of (\ref{2+1NLS_Riem_lin_sys}) now results in 
\be
    A_{i,t} + A_{i,y} \, \Lambda = 0 \qquad \quad  i=1,2 \, .  \label{2+1NLS_A_eqs}
\ee
The reduction conditions (\ref{NLS_hc_red_conditions}) are translated as follows, 
\be
    \varphi^\dagger = \varphi + C \quad &\Longleftrightarrow& \quad
         \Lambda^\dagger \Phi^\dagger J \Phi + \Phi^\dagger J \Phi \Lambda 
           = \imag \, \Phi^\dagger C \Phi \, , 
              \label{2+1NLS_Phi_red_cond_defoc} \\           
    \varphi^\dagger = J \varphi J + C \quad &\Longleftrightarrow & \quad 
         \Lambda^\dagger \Phi^\dagger \Phi + \Phi^\dagger \Phi \Lambda = \imag \, \Phi^\dagger C J \Phi 
      \, .   \label{2+1NLS_Phi_red_cond_foc}
\ee
We note that these conditions are \emph{non}linear.   
Inserting the formula for $\Phi$, we obtain
\bez
    \Gamma_2 \mp  e^{-\imag J\Lambda^{\dagger} x} \, \Gamma_1 \, e^{\imag J \Lambda x} = 0 \, , \qquad 
    \Xi^{\dagger} \mp e^{-\imag J\Lambda^{\dagger} x} \, \Xi \, e^{-\imag J \Lambda x} = 0 \, ,
\eez
where the upper sign refers to the defocusing case (\ref{2+1NLS_Phi_red_cond_defoc}), the lower sign to the 
focusing case (\ref{2+1NLS_Phi_red_cond_foc}), and we introduced the $x$-independent expressions
\bez
   \Gamma_j := \Lambda^{\dagger} \, A_j^{\dagger} A_j + A_j^{\dagger} A_j \, \Lambda -\imag A_j^{\dagger}C JA_j 
   \qquad j=1,2 \, , 
      \qquad
   \Xi := \Lambda^{\dagger} A_1^{\dagger} A_2 + A_1^{\dagger} A_2 \Lambda - \imag A_1^{\dagger}C JA_2 \, . 
\eez  
$\Gamma_j$ is Hermitian and commutes with $J$. $\Xi$ anti-commutes with $J$. 
Since $\Gamma_j,\Xi$ are independent of $x$, differentiation of the above equations with respect to $x$ yields
\be
   \Lambda^\dagger \, \Gamma_1 = \Gamma_1 \, \Lambda  \, , \qquad
   \Lambda^\dagger  \, \Xi = \Xi \, \Lambda  \, .   \label{2+1NLS_red_data_1}
\ee
If these relations hold, the above conditions are reduced to 
\be
     \Gamma_2 = \pm \Gamma_1 \, , \qquad 
     \Xi^{\dagger} = \pm \Xi \, .        \label{2+1NLS_red_data_2}
\ee
The reduction condition (\ref{2+1NLS_Phi_red_cond_defoc}), respectively (\ref{2+1NLS_Phi_red_cond_foc}), 
is thus equivalent to (\ref{2+1NLS_red_data_1}) and (\ref{2+1NLS_red_data_2}), with the 
respective choice of sign. We were unable to solve these algebraic equations in general. The following 
examples present some special solutions.

\begin{example}
Setting $\Gamma_j =0$, $j=1,2$,  
by inspection of the above equations, a special solution of (\ref{2+1NLS_Riem_lin_sys}) and the reduction 
conditions is given by 
\bez
    \Phi = \left( \begin{array}{cc} A \, e^{\frac{1}{2} \imag \Lambda_1 x} 
                                       & e^{\frac{1}{2} \imag \Lambda_1^\dagger x} \\
                               e^{-\frac{1}{2} \imag \Lambda_1 x} 
                                       & \pm A^\dagger \, e^{-\frac{1}{2} \imag \Lambda_1^\dagger x} 
                  \end{array} \right) \, ,
\eez
where $\Lambda_1$ (originating from $\Lambda = \mbox{block-diag}(\Lambda_1,\Lambda_1^\dagger)$)
is assumed to be normal (i.e., $[\Lambda_1,\Lambda_1^\dagger]=0$) and to commute with 
$A$ and $A^\dagger$, and 
\bez
    \Lambda_{1,t} + \Lambda_{1,y} \, \Lambda_1 = 0 \, , \qquad
    A_t + A_y \, \Lambda_1 = 0 \, .
\eez
If $m_1=m_2=1$ (scalar case), setting $A = \pm e^{\imag \, \beta -b}$, $\Lambda_1 = \lambda + \imag \, a$,
we obtain
\bez
    \varphi = \left\{ 
              \begin{array}{l@{\qquad}l}
                \left( \begin{array}{cc} \imag \, \lambda + a \, \mathrm{coth}(a \, x + b) 
                      & - a \, \mathrm{cosech}(a \, x + b) \, e^{\imag \, (\lambda \, x + \beta)} \\ 
                                                &  -\imag \, \lambda + a \, \mathrm{coth}(a \, x + b)
                  \end{array} \right)  & \mbox{defocusing case (upper sign)}  \\
               \left( \begin{array}{cc} \imag \, \lambda + a \, \mathrm{tanh}(a \, x + b) 
                      & a \, \mathrm{sech}(a \, x + b) \, e^{\imag \, (\lambda \, x + \beta)} \\ 
                                                &  -\imag \, \lambda + a \, \mathrm{tanh}(a \, x + b)
                  \end{array} \right)  & \mbox{focusing case (lower sign)} 
              \end{array} \right.   
\eez
where $a$, $b$, $\beta$ and $\lambda$ have to satisfy the system (\ref{2+1NLS_Bogo1}), (\ref{2+1NLS_Bogo2}).
The omitted left lower component of $\varphi$ is given by $r = \pm q^\dagger$ in terms of the right 
upper component. 
In the focusing case, we recover the system derived in Section~\ref{subsec:Bogo} from 
the scalar `Riemann system'. 
The above $\Phi$ thus leads, via (\ref{2+1NLS_CH_transf}), to matrix versions of 
the corresponding scalar solutions. They solve the matrix version of the (2+1)-dimensional NLS 
equation (\ref{matrix2+1NLS}). 
\end{example}

\begin{example}
Setting $\Xi = 0$, by inspection of the above equations we are led to
the following special solution of (\ref{2+1NLS_Riem_lin_sys}) and the defocusing reduction condition,  
\bez
    \Phi = \left( \begin{array}{cc} e^{\frac{1}{2} \imag \Lambda_1 x} 
                                       & e^{\frac{1}{2} \imag \Lambda_2 x} \\
                              B \, e^{-\frac{1}{2} \imag \Lambda_1 x} 
                                       & A \, e^{-\frac{1}{2} \imag \Lambda_2 x} 
                  \end{array} \right) \, ,
\eez
with Hermitian $\Lambda = \mbox{block-diag}(\Lambda_1,\Lambda_2)$, where $[\Lambda_1 , \Lambda_2]=0$, 
and $A,B$ are unitary matrices with $[A^\dagger B , \Lambda_i]=0$, $i=1,2$. 
(\ref{2+1NLS_Lambda_eq}) and (\ref{2+1NLS_A_eqs}) then result in
\bez
    \Lambda_{i,t} + \Lambda_{i,y} \, \Lambda_i = 0 \, , \qquad
    A_t + A_y \, \Lambda_2 = 0  \, , \qquad
    B_t + B_y \, \Lambda_1 = 0 \, .
\eez
In the scalar case, setting $A = \imag \, e^{- \imag \, (b + \beta)}$, $B= -\imag \, e^{ \imag \, (b - \beta)}$, 
$\Lambda_1 = \lambda - a$, $\Lambda_2 = \lambda + a$, we obtain
\bez
    \varphi =    \left( \begin{array}{cc} \imag \, \lambda - a \, \mathrm{tan}(a \, x + b) 
                      & a \, \mathrm{sec}(a \, x + b) \, e^{\imag \, (\lambda \, x + \beta)} \\ 
                                                &  -\imag \, \lambda - a \, \mathrm{tan}(a \, x + b)
                  \end{array} \right) \, ,
\eez
 from which we recover the system derived in Section~\ref{subsec:Bogo} in the defocusing case.
Again, the above $\Phi$ leads, via (\ref{2+1NLS_CH_transf}), to matrix versions of 
the corresponding scalar solutions, solving the matrix version of the (2+1)-dimensional NLS 
equation (\ref{matrix2+1NLS}).  
\end{example}

\subsection{Multi-soliton solutions of the (2+1)-dimensional NLS equation, parametrized by 
solutions of Riemann equations} 
Breaking multi-solitons of the scalar (2+1)-dimensional NLS equation have been mentioned in \cite{Bogo91III}. 
A special class of such solutions in the focusing case was obtained a few years later in 
\cite{Jian+Bull94}. Apparently, 
the authors of \cite{Jian+Bull94} were not aware of Bogoyavlenskii's related work (in particular, 
\cite{Bogo91III}). They used the (AKNS) inverse scattering method and made an ansatz for solving 
the equations for the scattering data to generate multi-soliton-type solutions. That, more generally, 
the solutions can be parametrized by solutions of a Riemann equation, is not deducible from their work.
More general solutions, moreover for the matrix (2+1)-dimensional NLS system, are immediately obtained 
via Corollary~\ref{cor:sol_via_Riem_sys}. 
As in Proposition~\ref{prop:sdYM_multi-break}, we can drop the differential equation for $\bX$ 
if we impose the spectrum condition $\sigma(\bP) \cap \sigma(\bQ) = \emptyset$. 
In the following, $\bJ$ is an $n \times n$ counterpart of $J$.
The latter is given in (\ref{J_varphi_decomp}). See \cite{DMH10NLS} for 
the way in which the bidifferential calculus (\ref{2+1NLS_bidiff}) is extended to $n \times n$, 
$m \times n$ and $n \times m$ matrices. To obtain the following result, we made redefinitions 
$\bP \mapsto \bJ \bP$ and $\bQ \mapsto \bQ \bJ$ in Corollary~\ref{cor:sol_via_Riem_sys}.

\begin{Proposition}
\label{prop:2+1NLS_multi-breaking_solitons}
Let $\varphi_0$ be a constant $m \times m$ matrix with $[J,\varphi_0]=0$. Let $\bP$ and $\bQ$ be 
solutions of the $n \times n$ matrix equations
\bez
    \imag \, \bP_t + \bP_y \, \bJ \, \bP = 0 \, , \quad
    \bP_x = \frac{1}{2} ( \bJ \, \bP \, \bJ - \bP ) \, \bP \, , \quad
    \imag \, \bQ_t + \bQ \, \bJ \, \bQ_y = 0 \, , \quad
    \bQ_x = \frac{1}{2} \bQ \, ( \bQ - \bJ \, \bQ \, \bJ ) \, .
\eez
We further assume that $\sigma(\bJ \bP) \cap \sigma(\bQ \bJ) = \emptyset$. 
Let $\bX$ be the unique solution of the Sylvester equation
\bez
    \bX \, (\bJ \bP) - (\bQ \bJ) \, \bX = \bV_0 \, \bU_0   \, ,
\eez
where $\bU_0$ and $\bV_0$ are constant matrices of size $m \times n$, respectively $n \times m$, 
and satisfy $J \, \bU_0 = \bU_0 \bJ$ and $\bJ \bV_0 = \bV_0 J$. Then 
\be
    \varphi = \varphi_0 + J \, \bU_0 \, \bX^{-1} \, \bV_0       \label{2+1NLS_multi_break_sol}      
\ee
(except at singular points) solves (\ref{pre_matrix2+1NLS}).  \hfill $\square$
\end{Proposition}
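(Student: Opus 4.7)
The plan is to deduce the proposition as a direct application of Corollary~\ref{cor:sol_via_Riem_sys} to the bidifferential calculus (\ref{2+1NLS_bidiff}), lifted to rectangular and $n\times n$ matrices in the manner of \cite{DMH10NLS} (with $J$ replaced by $\bJ$ on the $n\times n$ side, and mixed commutators $\tfrac12(J\,\cdot-\cdot\,\bJ)$ for $m\times n$ blocks). First I would take the trivial seed $\phi_0=J\varphi_0$. Because $\varphi_0$ is constant and $[J,\varphi_0]=0$, one has $\bd\phi_0=0$ and $\mathrm{d}\phi_0=(J\varphi_0)_y\,\xi_1+\tfrac12[J,J\varphi_0]\,\xi_2=0$, so $\phi_0$ together with $g_0=I$ solves the Miura equation (\ref{Miura}).

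Next I would choose $\balpha=\bbeta=0$. Since $\mathrm{d}\phi_0=0$, condition (\ref{phi0_alpha_beta_cond}) collapses to $\bU_0\balpha=0$ and $\bbeta\bV_0=0$, which hold trivially. The remaining nontrivial content of the Corollary is then the Riemann equations (\ref{P,Q_eqs}), the Sylvester-type equation (\ref{preSylvester}) and the compatible linear equation (\ref{dX_eq}) for $\bX$, together with the $\mathrm{d}$- and $\bd$-constancy of $\bU_0,\bV_0$. The latter is exactly what the hypotheses $J\bU_0=\bU_0\bJ$ and $\bJ\bV_0=\bV_0 J$ encode: these are the `constancy' conditions in the mixed $m\times n$ / $n\times m$ calculus (the $y$- and $x,t$-independence gives constancy in the $\xi_1$ and in the partial-derivative part of $\xi_2$).

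Then I would carry out the announced redefinitions $\bP\mapsto\bJ\bP$, $\bQ\mapsto\bQ\bJ$ and show that they turn (\ref{P,Q_eqs}) into the four displayed equations of the proposition. Writing out $\bd\bP-(\mathrm{d}\bP)\bP=0$ in components of $\xi_1,\xi_2$ gives $\imag\bP_t+\bP_y\bP=0$ and $\bP_x=\tfrac12[\bJ,\bP]\bP$; substituting $\bP\to\bJ\bP$ and multiplying by $\bJ$ on the left (using $\bJ^2=\bI$) yields the two $\bP$-equations of the proposition. The analogous computation, substituting $\bQ\to\bQ\bJ$ and multiplying by $\bJ$ on the right, yields the two $\bQ$-equations. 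Finally, under the substitutions, the Sylvester equation (\ref{preSylvester}) reads $\bX(\bJ\bP)-(\bQ\bJ)\bX=\bV_0\bU_0$, and the spectrum hypothesis $\sigma(\bJ\bP)\cap\sigma(\bQ\bJ)=\emptyset$ makes this invertibly solvable; as remarked just after Proposition~\ref{prop:sdYM_multi-break} (and proved in the companion theorem of \cite{DMH13SIGMA}), the differential equation (\ref{dX_eq}) is then an automatic consequence of the Sylvester equation and the equations for $\bP,\bQ,\bU_0,\bV_0$.

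The Corollary then produces a new Miura solution $\phi=\phi_0+\bU_0\bX^{-1}\bV_0$; recalling $\phi=J\varphi$ and multiplying by $J^{-1}=J$ yields $\varphi=\varphi_0+J\bU_0\bX^{-1}\bV_0$, which is (\ref{2+1NLS_multi_break_sol}). Since (\ref{pre_matrix2+1NLS}) is exactly (\ref{phi_eq}) rewritten in $\varphi=J^{-1}\phi$, this $\varphi$ solves (\ref{pre_matrix2+1NLS}). The main technical point I expect to have to be careful about is the bookkeeping of the mixed $J$/$\bJ$ commutators in the rectangular-block calculus, ensuring both that the $\bU,\bV$-equations of Theorem~\ref{thm:main} reduce to \emph{constancy} precisely under $J\bU_0=\bU_0\bJ$, $\bJ\bV_0=\bV_0 J$, and that the $\bJ$-factors inserted by the redefinitions cancel consistently in (\ref{preSylvester})–(\ref{dX_eq}); everything else is substitution and the general machinery of Corollary~\ref{cor:sol_via_Riem_sys}.
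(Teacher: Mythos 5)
Your proposal is correct and follows essentially the same route as the paper: Proposition~\ref{prop:2+1NLS_multi-breaking_solitons} is obtained there precisely by applying Corollary~\ref{cor:sol_via_Riem_sys} with the calculus (\ref{2+1NLS_bidiff}) extended to rectangular matrices as in \cite{DMH10NLS}, a $\mathrm{d}$- and $\bd$-constant seed $\phi_0=J\varphi_0$ with $\balpha=\bbeta=0$, the spectrum condition used to drop the differential equation for $\bX$ (as in Proposition~\ref{prop:sdYM_multi-break}, via the proof of Theorem~2.1 in \cite{DMH13SIGMA}), and the redefinitions $\bP\mapsto\bJ\bP$, $\bQ\mapsto\bQ\bJ$. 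Your component computations (the $\xi_1,\xi_2$ split of (\ref{P,Q_eqs}) and the identification of $J\bU_0=\bU_0\bJ$, $\bJ\bV_0=\bV_0 J$ with $\mathrm{d}$-constancy) check out and simply make explicit what the paper leaves implicit.
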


It remains to implement the reduction conditions, so that we obtain solutions of the (focusing, 
respectively defocusing) matrix
(2+1)-dimensional NLS system (\ref{matrix2+1NLS}), where $u^\dagger = u + c_1$, 
$v^\dagger = v + c_2$, with anti-Hermitian matrices $c_i$ satisfying 
(\ref{2+1NLS_reduction_conditions_for_C}).

\begin{corollary}
Let $\varphi_0$ be a constant $m \times m$ matrix, with $[J,\varphi_0]=0$ and satisfying one of the 
reduction conditions (\ref{NLS_hc_red_conditions}). 
Let $\bP$ be a solution of the $n \times n$ matrix equations
\bez
    \imag \, \bP_t + \bP_y \, \bJ \, \bP = 0 \, , \qquad
    \bP_x = \frac{1}{2} ( \bJ \, \bP \, \bJ - \bP ) \, \bP \, , 
\eez
and such that $\sigma(\bJ \bP) \cap \sigma(-\bP^\dagger \bJ) = \emptyset$. 
Let $\bX$ be the unique solution of the Lyapunov equation
\bez
    \bX \, (\bJ \bP) + (\bJ \bP)^\dagger \, \bX = \bV_0 \, \bU_0  \, ,
\eez
where $\bU_0$ is a constant $m \times n$ matrix with $J \, \bU_0 = \bU_0 \bJ$, and  
\bez
     \bV_0 = \left\{ \begin{array}{l@{\quad}l}
                          \bU_0^\dagger \, J  & \mbox{defocusing case}  \\
                          \bU_0^\dagger  & \mbox{focusing case} \, . 
                        \end{array} \right.          
\eez
Then the blocks $q,u,v$, read off from $\varphi = \varphi_0 + J \, \bU_0 \, \bX^{-1} \, \bV_0$, 
solve the matrix (2+1)-dimensional NLS system. If $m_1=m_2=1$, then $q$ solves the (2+1)-dimensional NLS
equation (\ref{2+1NLS}).
\end{corollary}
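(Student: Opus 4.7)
The plan is to derive the corollary as a specialization of Proposition~\ref{prop:2+1NLS_multi-breaking_solitons} by choosing $\bQ$ and $\bV_0$ so that the resulting $\varphi$ automatically satisfies the reduction condition (\ref{NLS_hc_red_conditions}), after which the block decomposition from Section~\ref{subsec:2+1NLS_reduction_of_Riem} delivers solutions of the matrix (2+1)-dimensional NLS system.

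First I would verify that $\bQ := -\bP^\dagger$ solves the $\bQ$-equations in Proposition~\ref{prop:2+1NLS_multi-breaking_solitons}. Taking the Hermitian adjoint of each of the two $\bP$-equations (using $\bJ^\dagger = \bJ$) and flipping the overall sign produces precisely the two $\bQ$-equations for $-\bP^\dagger$, so no further hypothesis on $\bP$ is needed. With this choice $\bQ \bJ = -\bP^\dagger \bJ = -(\bJ \bP)^\dagger$, and the Sylvester equation $\bX (\bJ \bP) - (\bQ \bJ) \bX = \bV_0 \bU_0$ of the proposition becomes exactly the Lyapunov equation in the corollary; the spectrum hypothesis $\sigma(\bJ \bP) \cap \sigma(-(\bJ \bP)^\dagger) = \emptyset$ ensures a unique solution $\bX$. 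The intertwining conditions $J \bU_0 = \bU_0 \bJ$ and $\bJ \bV_0 = \bV_0 J$ required by the proposition are immediate in either case: from $J \bU_0 = \bU_0 \bJ$ and $J^\dagger = J$ we get $\bU_0^\dagger J = \bJ \bU_0^\dagger$, which after multiplying by the appropriate factor of $J$ gives $\bJ \bV_0 = \bV_0 J$.

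The heart of the argument is the Hermiticity $\bX^\dagger = \bX$. Taking the adjoint of the Lyapunov equation produces the same equation with $(\bV_0 \bU_0)^\dagger$ in place of $\bV_0 \bU_0$; but with the specified choices one has $\bV_0 \bU_0 = \bU_0^\dagger J \bU_0$ in the defocusing case and $\bV_0 \bU_0 = \bU_0^\dagger \bU_0$ in the focusing case, both of which are manifestly Hermitian, so uniqueness forces $\bX^\dagger = \bX$. I would then compute $\varphi^\dagger$ from $\varphi = \varphi_0 + J \bU_0 \bX^{-1} \bV_0$, using $\bV_0^\dagger = J \bU_0$ (defocusing) respectively $\bV_0^\dagger = \bU_0$ (focusing), together with $J \bU_0 = \bU_0 \bJ$ and $\bX = \bX^\dagger$. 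A short direct manipulation shows that the $\bX$-dependent piece of $\varphi^\dagger$ equals the $\bX$-dependent piece of $\varphi$ in the defocusing case, respectively of $J \varphi J$ in the focusing case, so $\varphi$ inherits the reduction relation from the seed $\varphi_0$ with the anti-Hermitian constant $C = \varphi_0^\dagger - \varphi_0$ (note that $[J,\varphi_0]=0$ makes $\varphi_0$ block-diagonal, so $J \varphi_0 J = \varphi_0$ and both reductions produce the same $C$, which commutes with $J$ as required).

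Once the reduction holds, Proposition~\ref{prop:2+1NLS_multi-breaking_solitons} ensures that $\varphi$ solves (\ref{pre_matrix2+1NLS}); writing $\varphi$ in block form via (\ref{J_varphi_decomp}) and invoking (\ref{NLS_hc_red_conditions_q,r,u,v}), the blocks $q,u,v$ then satisfy the matrix (2+1)-dimensional NLS system (\ref{matrix2+1NLS}), and for $m_1=m_2=1$ the scalar $q$ solves (\ref{2+1NLS}). The main obstacle is establishing the Hermiticity of $\bX$; this is where the precise form of $\bV_0$ is forced, and all remaining steps reduce to routine algebraic bookkeeping.
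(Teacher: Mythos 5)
Your proposal is correct and follows essentially the same route as the paper's own proof: set $\bQ=-\bP^\dagger$ in Proposition~\ref{prop:2+1NLS_multi-breaking_solitons}, use the stated relation between $\bV_0$ and $\bU_0$ to propagate the reduction condition from $\varphi_0$ to $\varphi$, and deduce $\bX^\dagger=\bX$ from uniqueness of the Lyapunov solution under the spectrum condition. You merely spell out the algebraic verifications that the paper leaves implicit.
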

\begin{proof}
Setting $\bQ = - \bP^\dagger$ in the preceding proposition reduces the differential equations 
for $\bQ$ to those for $\bP$. The relation between $\bV_0$ and $\bU_0$ then ensures that 
$\varphi$ given by (\ref{2+1NLS_multi_break_sol}) satisfies the same reduction condition as 
$\varphi_0$. Note that $\bX^\dagger = \bX$, since the solution of the Lyapunov equation is 
unique if the stated spectrum condition holds.
\end{proof}

By choosing
\bez
    \bP = \mbox{block-diag}(\varphi_1,\ldots,\varphi_N) \, , \qquad
    \bJ = \mbox{block-diag}(\underbrace{J,\ldots,J}_{N\mbox{ times}}) \, ,
\eez
where, for $i=1,\ldots,N$, $\varphi_i$ solves (\ref{2+1NLS_Riem_1}) and (\ref{2+1NLS_Riem_2}), 
we obtain a nonlinear superposition of $N$ singular solitons. Still more general solutions can be 
generated via Theorem~\ref{thm:main}. This includes nonlinear superpositions of regular and singular 
solitons. We should stress again that the singular soliton-type solutions cannot be called 
`breaking solitons'. Although we cannot exclude presently that the (2+1)-dimensional NLS equation 
possesses hitherto unkown solutions representing breaking waves, the soliton-like solutions 
obtained here are not quite of this type and thus do not justify to call this equation a 
`breaking soliton equation', as sometimes done in the literature.

\section{Matrix Burgers and KP equations}
\label{sec:B&KP}

\subsection{Burgers equation} 
Let $\Omega^1 = \cA := \cA_0[\pa_x]$, where $\cA_0$ is the algebra of 
matrices of smooth functions of variables $x,y$, and $\pa_x = \pa/\pa x$. 
Let 
\bez
    \mathrm{d} f = f_x  \, , \qquad
    \bd f = \frac{1}{2} \, [\alpha \, \pa_y - \pa_x^2 , f ] \, ,
\eez
with a constant $\alpha$. (\ref{Riemann_1}) then reads
\bez
    \alpha \, \phi_y - \phi_{xx} - 2 \, \phi_x \, (\pa_x + \phi) = 0 \, .
\eez
A solution $\phi$ has to be an operator. Writing 
\bez
     \phi = - \pa_x + \varphi \, ,
\eez
turns (\ref{Riemann_1}) into the matrix \emph{Burgers equation}
\be
    \alpha \, \varphi_y - \varphi_{xx} - 2 \, \varphi_x \, \varphi = 0 \, ,  \label{Burgers_eq1}
\ee
and $\varphi$ can now be taken to be a matrix of functions.

\subsubsection{Cole-Hopf transformation}
Choosing $\phi_0 = -\pa_x$, (\ref{phi_transform}) becomes the \emph{Cole-Hopf transformation}
\bez
        \varphi = \Phi_x \, \Phi^{-1} \,  ,
\eez
and (\ref{CH_Phi_eq}), with $\gamma=0$, is the linear \emph{heat equation}
\be
     \alpha \, \Phi_y - \Phi_{xx} = 0 \, .    \label{heat_1}
\ee

\begin{example}
A class of solutions of (\ref{Burgers_eq1}) is determined by the following solutions 
of (\ref{heat_1}), 
\bez
    \Phi = I + \sum_{i=1}^n \bA_i \, e^{\bTh_i} \, \bB_i \, , \qquad
    \bTh_i = \bLa_i \, x + \frac{1}{\alpha} \, \bLa_i^2 \, y \, ,
\eez
where $\bA_i,\bB_i$ are constant $m \times n$, respectively $n \times m$ matrices, and $\bLa_i$ 
are constant $n \times n$ matrices. In the scalar case ($m=1$), and if 
$\bLa = \mathrm{diag}(\lambda_1,\ldots,\lambda_n)$, this takes the form
\be
    \Phi = 1 + \sum_{i=1}^n e^{\lambda_i \, x + \lambda_i^2 \, y/\alpha + \gamma_i} \, ,
           \label{Burgers_heat_sol_n-kink}
\ee
with constants $\gamma_i$. It determines an $n$-kink solution of the 
scalar Burgers equation merging (at some value of $y$) into a single kink, which then 
develops into a shock wave, see Fig.~\ref{fig:Burgers_3kink}.
\end{example}
\begin{figure}[hbtp] 
\begin{center}
\includegraphics[scale=.45]{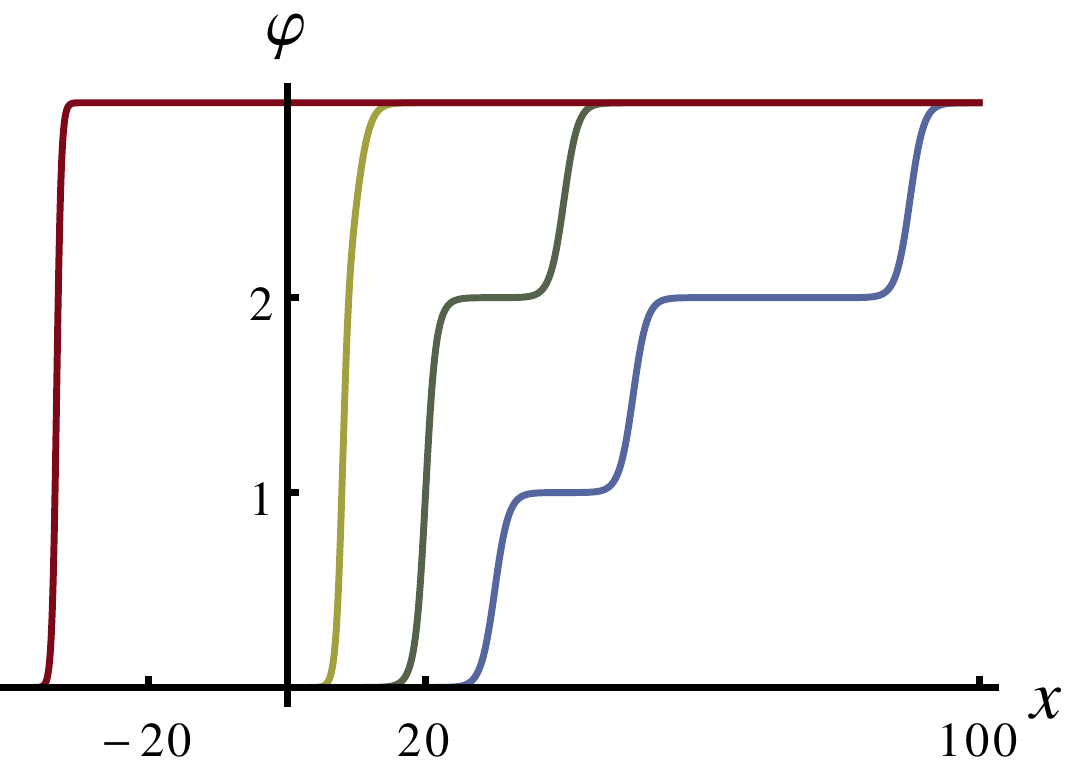} 
\hspace{1.2cm}
\includegraphics[scale=.32]{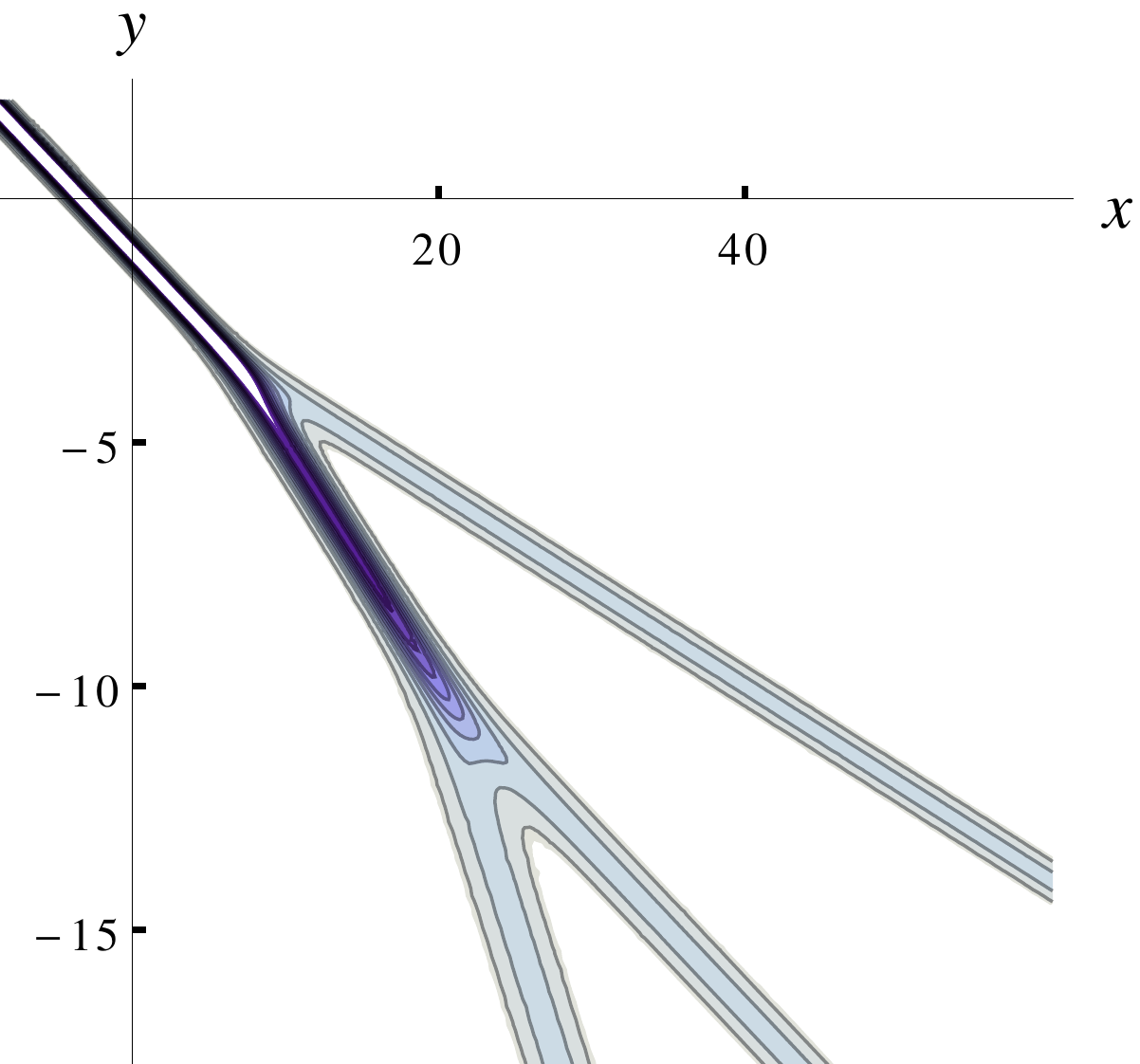} 
\caption{Plots of a solution of the scalar Burgers equation (\ref{Burgers_eq1}), with $\alpha=1$, 
at four different values ($-20,-10,-4,10$) of $y$. It is obtained 
from (\ref{Burgers_heat_sol_n-kink}) with $n=3$, $\lambda_1=1$, $\lambda_2=2$, 
$\lambda_3=3$, and $\gamma_1=-10$, $\gamma_2=0$, $\gamma_3=10$. 
Evolution in the variable $y$ is from right to left. For large negative $y$, this is a $3$-kink 
solution (right curve). It then turns into a single kink and finally  
evolves into a shock wave (left curve). The second figure shows a contour plot of $\varphi_x$, 
which has the form of a rooted binary tree.
\label{fig:Burgers_3kink}  }
\end{center}
\end{figure}

\subsubsection{Darboux transformations}
\label{sec:Burgers_Darboux}
The following is obtained from Corollary~\ref{cor:main}. 
Setting $\balpha = \bbeta =0$ and $\bP = \bQ = -\pa_x$, we have to consider the following equations,
\bez
   && \alpha \, \bU_y - \bU_{xx} = \varphi_{0,x} \, \bU \, , \qquad
    \bV_x = - \bV \, \varphi_0 \, , \qquad 
    \alpha \, \bV_y = - \bV \, \varphi_0^2 \, , 
     \\
   && \bX_x = \bV \, \bU \, , \qquad 
      \alpha \, \bX_y = \bV \bU_x - \bV_x \, \bU \, . 
\eez
The last two equations are solved by
\be
  \bX = \bC + \int \Big( \bV \, \bU \, dx + \frac{1}{\alpha} (\bV \bU_x - \bV_x \, \bU) \, dy \Big) 
            \, ,    \label{Burgers_Darboux_X}
\ee
where $\bC$ is constant and the integration is along any path from a fixed point $(x_0,y_0)$
to $(x,y)$. Then $\varphi = \varphi_0 + \bU \bX^{-1} \bV$ solves the Burgers equation 
(\ref{Burgers_eq1}).

\begin{remark}
In the scalar case (i.e., $m=1$), we have
\bez
   \varphi - \varphi_0 
   = \bU \bX^{-1} \bV = \mathrm{tr}(\bV \bU \bX^{-1})
   = \mathrm{tr}(\bX_x \bX^{-1}) 
   = (\ln \det \bX)_x \, .
\eez
Expressing the seed solution as $\varphi_0 = (\ln \Phi_0)_x$ and setting 
$\Phi = \Phi_0 \, \det \bX$, we find
\bez
     \alpha \, \Phi_y  - \Phi_{xx} 
   = (\alpha \, \Phi_{0,y}  - \Phi_{0,xx}) \, \Phi
    - 2 \, \Phi_0 \, \Big( \mathrm{tr}(\bU \bX^{-1} \bV) \, \varphi_0 
            + \mathrm{tr}(\bU \bX^{-1} \bV_x) \Big) \, ,
\eez
which vanishes by use of $\bV_x = - \bV \, \varphi_0$ and if $\Phi_0$
satisfies (\ref{heat_1}). 
Hence $\Phi$ solves (\ref{heat_1}) and we expressed the Darboux transformation 
as a Cole-Hopf transformation.
\end{remark}

\begin{example}
If the seed solution $\varphi_0$ vanishes, $\bV$ has to be constant and it only remains to 
solve $\alpha \, \bU_y - \bU_{xx} = 0$. 
In the scalar case, writing $\bU = (u_1,\ldots,u_n)$ and 
$\bV = (v_1,\ldots,v_n)^\intercal$, for vanishing seed we find  
$u_i = a_i \, e^{\lambda_i \, x + \lambda_i^2 \, y/\alpha}$, 
with constants $a_i,\lambda_i$, and $\bX_{ij} = \bC_{ij} + v_i \, u_j/\lambda_j$.
If $\bC$ is invertible, we obtain the same solution $\varphi$ with $\bC = \bI$ and 
a redefined $\bV$. Computing $\det \bX$ via Sylvester's determinant theorem and 
setting $a_i v_i/\lambda_i =: e^{\gamma_i}$, yields (\ref{Burgers_heat_sol_n-kink}).
\end{example}

\subsection{The second equation of the Burgers hierarchy} 
Now we extend $\cA_0$ to the algebra of matrices of smooth functions of variables $x,y$ and $t$, 
and we set
\bez
    \mathrm{d} f = \frac{1}{2} \, [\alpha \, \pa_y + \pa_x^2 , f ] \, , \qquad
    \bd f = \frac{1}{3} \, [ \pa_t - \pa_x^3 , f ] \, .
\eez
Writing again $\phi = -\pa_x + \varphi$, and assuming that $\varphi$ is a matrix of functions, 
(\ref{Riemann_1}) splits into (\ref{Burgers_eq1}) and 
\bez
    \frac{1}{3} \, (\varphi_t - \varphi_{xxx}) - \frac{1}{2} \, \varphi_{xx} \, \varphi - (\varphi_x)^2 
      - \frac{\alpha}{2} \, \varphi_y \, \varphi = 0 \, .
\eez
Using (\ref{Burgers_eq1}) in the last equation, converts the latter 
into the second member of a matrix Burgers hierarchy, 
\be
     \varphi_t - \varphi_{xxx} - 3 \, (\varphi_x \, \varphi)_x - 3 \, \varphi_x \, \varphi^2 = 0 \, .
          \label{Burgers_eq2}
\ee
We should stress that, though (\ref{Riemann_1}) is only a single equation for $\phi$, in terms of 
$\varphi$ it consists of two equations (which arise as the coefficients of $\pa_x^0$ and $\pa_x^1$), and 
these are equivalent to the first two equations of the Burgers hierarchy. 

Setting $\phi_0 = -\pa_x$ and $\gamma=0$, (\ref{phi_transform}) is the 
\emph{Cole-Hopf transformation} and 
(\ref{CH_Phi_eq}) is equivalent to (\ref{heat_1}) and the second member of a matrix 
heat hierarchy, $\Phi_t - \Phi_{xxx} = 0$.

\subsection{KP} 
Let us now choose $\Omega$ according to (\ref{Omega_wedge}) with $K=2$. We combine the 
above two bidifferential calculi, associated with the first two members of a matrix  
Burgers hierarchy, to
\be
    \mathrm{d} f = [\pa_x, f] \, \xi_1 + \frac{1}{2} \, [\alpha \pa_y+\pa_x^2,f] \, \xi_2 \, , \qquad
    \bd f = \frac{1}{2} \, [\alpha \pa_y - \pa_x^2,f] \, \xi_1 
            + \frac{1}{3} \, [\pa_t - \pa_x^3,f] \, \xi_2 \, .   \label{KP_bidiff}
\ee
With $\phi = -\pa_x + \varphi$, (\ref{Riemann_1}) is then equivalent to (\ref{Burgers_eq1}) 
and (\ref{Burgers_eq2}). 
Now (\ref{bidiff_conds}) holds, and the integrability condition (\ref{phi_eq}) 
turns out to be the \emph{matrix potential KP equation}
\be
  \left( - 4 \varphi_t + \varphi_{xxx} + 6 (\varphi_x)^2 \right)_x
  + 3 \alpha^2 \varphi_{yy} - 6 \alpha \, [\varphi_x,\varphi_y] = 0   \label{pKP}
\ee
(also see \cite{DMH08bidiff}). 
We recover the well-known fact that any solution of the first two members of the 
(matrix) Burgers hierarchy solves the (matrix) potential KP equation 
(see \cite{DMH09Sigma} and the references cited there).
This is simply a special case of the implication (\ref{Riemann_1}) $\Rightarrow$ (\ref{phi_eq}).

The resulting class of solutions of the scalar KP-II equation (with $\alpha = 1$) for 
the variable $u=\varphi_x$, 
corresponding to the above class of multi-kink solutions of the Burgers equation, consists of 
KP line soliton solutions that form, at any time $t$, a rooted (and generically binary) 
tree-like structure in the $xy$-plane (see Fig.~\ref{fig:Burgers_3kink} for an example). 
An analysis (in a `tropical limit') of this class of KP-II solutions  
has been carried out in \cite{DMH11KPT,DMH12KPBT}. More complicated line soliton solutions 
of the scalar KP-II equation are obtained from $n \times n$ matrix Burgers equations 
via Corollary~\ref{cor:sol_via_Riem_sys}. But according to Remark~\ref{rem:CH=>Riemann_redundant}, 
these solutions can be obtained in a simpler way from Theorem~\ref{thm:main}, with constant 
$\bP$ and $\bQ$. The resulting class of line-soliton solutions is well-known.

\begin{Proposition}
\label{prop:Burgers->KP}
Let $\varphi_0$ be a constant $m\times m$ matrix, and $\tbP$ and 
$\tbQ$ solutions of the following $n\times n$ matrix equations 
\bez
    && \alpha \tbP_y = \tbP_{xx} + 2 \, \tbP_x \, \tbP \, , \qquad \; \;
    \tbP_t = \tbP_{xxx} + 3 \, \tbP_{xx} \, \tbP + 3 \, \tbP_x \, \tbP^2 + 3 \, \tbP_x^2 \, ,  \\
    &&\alpha \tbQ_y = - \tbQ_{xx} + 2 \, \tbQ \, \tbQ_x \, , \qquad
    \tbQ_t = \tbQ_{xxx} - 3 \, \tbQ \, \tbQ_{xx} + 3 \, \tbQ^2 \tbQ_x - 3 \, \tbQ_x^2 \, ,
\eez
which are the first two members of two versions of a matrix Burgers hierarchy. 
Let $\bX$ be an invertible solution of the system of linear ordinary differential equations
\bez
    \bX_x &=& \bV_0 \, \bU_0 + \tbQ \, \bX - \bX \, \tbP \, , \\
    \alpha \bX_y &=& \tbQ \, \bV_0 \, \bU_0 + \bV_0 \, \bU_0 \, \tbP - (\tbQ_x - \tbQ^2) \, \bX 
                   - \bX \, (\tbP_x + \tbP^2) \, ,  \\
    \bX_t &=& \tbQ \, \bV_0 \, \bU_0 \, \tbP - (\tbQ_x - \tbQ^2) \, \bV_0 \, \bU_0
              + \bV_0 \, \bU_0 \, (\tbP_x + \tbP^2)  \\
          &&    + ( \tbQ_{xx} - \tbQ_x \, \tbQ - 2 \, \tbQ \, \tbQ_x + \tbQ^3) \, \bX
              - \bX \, ( \tbP_{xx} + \tbP \, \tbP_x + 2 \, \tbP_x \, \tbP + \tbP^3 ) \, ,
\eez
with any constant matrices $\bU_0$, $\bV_0$, of size $m \times n$, respectively $n\times m$.
Then $\varphi = \varphi_0 + \bU_0 \, \bX^{-1} \bV_0$ solves the $m \times m$ matrix potential KP equation
(\ref{pKP}).
\end{Proposition}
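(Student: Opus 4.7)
The plan is to specialize Corollary~\ref{cor:sol_via_Riem_sys} to the bidifferential calculus (\ref{KP_bidiff}). The seed is taken to be $\phi_0 = -\pa_x + \varphi_0$, with $\varphi_0$ a constant $m \times m$ matrix. Since $\varphi_0$ is constant, $\mathrm{d}\phi_0 = 0$, so the conditions (\ref{phi0_alpha_beta_cond}) are satisfied trivially with $\balpha = \bbeta = 0$, for any matrices $\bU_0, \bV_0$. As observed earlier, with this choice of $\phi_0$ the equation (\ref{phi_eq}) for $\phi = -\pa_x + \varphi$ reduces to the matrix potential KP equation (\ref{pKP}). Since (\ref{bidiff_conds}) holds for the calculus (\ref{KP_bidiff}), any solution $\phi$ of the Miura equation (\ref{Miura}) produced by Corollary~\ref{cor:sol_via_Riem_sys} yields a solution of (\ref{phi_eq}), hence $\varphi := \varphi_0 + \bU_0 \bX^{-1}\bV_0$ will solve (\ref{pKP}), as required.

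Next I would make the analogous ansatz for the `Riemann equations' parameters, writing $\bP = -\pa_x + \tbP$ and $\bQ = -\pa_x + \tbQ$ with $n \times n$ matrix-valued functions $\tbP, \tbQ$. Just as in the Burgers and KP subsections of Section~\ref{sec:B&KP}, expanding
\bez
   \bd \bP - (\mathrm{d}\bP)\,\bP = 0 \, , \qquad
   \bd \bQ - \bQ\,\mathrm{d}\bQ = 0
\eez
against the basis $\xi_1,\xi_2$ and collecting the coefficients of the remaining $\pa_x$-powers produces, from the $\xi_1$-part, the first-order Burgers equations $\alpha \tbP_y = \tbP_{xx} + 2\tbP_x\tbP$ and $\alpha \tbQ_y = -\tbQ_{xx} + 2\tbQ\tbQ_x$, and, from the $\xi_2$-part (after using the first to eliminate $y$-derivatives exactly as in the derivation of (\ref{Burgers_eq2})), the stated second hierarchy equations in $t$. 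This is a direct operator-commutator calculation using $[\pa_x^2,f] = f_{xx} + 2 f_x \pa_x$ and $[\pa_x^3, f] = f_{xxx} + 3 f_{xx} \pa_x + 3 f_x \pa_x^2$ for matrix-valued $f$.

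It then remains to translate the linear equations (\ref{preSylvester}) and (\ref{dX_eq}) for $\bX$. With $\bX$ a matrix of functions and $\balpha = \bbeta = 0$, (\ref{preSylvester}) becomes
\bez
   \bX_x = \bV_0 \bU_0 - \bX\,\tbP + \tbQ\,\bX \, ,
\eez
which is the first equation of the system in the proposition. Substituting this into the $\xi_1$-coefficient of (\ref{dX_eq}), expanded in the same fashion, yields (after straightforward simplification, using only $\bU_{0,x} = \bV_{0,x} = 0$) the stated equation for $\alpha \bX_y$; substituting again into the $\xi_2$-coefficient produces the $\bX_t$-equation. Compatibility of this overdetermined linear system for $\bX$ is automatic, since it is just a reformulation of the compatible pair (\ref{preSylvester})--(\ref{dX_eq}) whose solvability is guaranteed by Corollary~\ref{cor:sol_via_Riem_sys} whenever the `Riemann equations' for $\bP,\bQ$ hold.

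Finally, Corollary~\ref{cor:sol_via_Riem_sys} delivers $\phi = \phi_0 + \bU_0 \bX^{-1}\bV_0 = -\pa_x + \varphi_0 + \bU_0\bX^{-1}\bV_0$ as a solution of the Miura equation, and (\ref{Riem_to_phi_eq}) then gives (\ref{phi_eq}), which under our choice of calculus is the matrix potential KP equation (\ref{pKP}) in the variable $\varphi = \varphi_0 + \bU_0\bX^{-1}\bV_0$. The only mildly delicate step is the bookkeeping needed to strip off the $\pa_x$-operator pieces from the commutator expansions and see that each operator-valued equation in the bidifferential calculus collapses exactly to the PDE stated in the proposition; all of this is algorithmic once the ansätze $\bP = -\pa_x + \tbP$, $\bQ = -\pa_x + \tbQ$ are made.
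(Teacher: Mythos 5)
Your proposal is correct and follows essentially the same route as the paper's (very terse) proof: specializing Corollary~\ref{cor:sol_via_Riem_sys} to the calculus (\ref{KP_bidiff}) with $\bP=\tbP-\pa_x$, $\bQ=\tbQ-\pa_x$, $\balpha=\bbeta=0$, and a $\mathrm{d}$-constant seed $\phi_0=-\pa_x+\varphi_0$. The additional bookkeeping you supply (stripping the $\pa_x$-pieces to identify the Burgers hierarchy equations and the $\bX$-system, and invoking (\ref{Riem_to_phi_eq}) with (\ref{bidiff_conds}) to pass to (\ref{pKP})) is exactly the content the paper leaves implicit.
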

\begin{proof}
This is obtained from Corollary~\ref{cor:sol_via_Riem_sys}, using (\ref{KP_bidiff}) and 
setting $\bP = \tbP-\pa_x$, $\bQ = \tbQ - \pa_x$, $\balpha = \bbeta=0$.
\end{proof}

\begin{remark}
Proposition~\ref{prop:Burgers->KP} is more of structural interest than of practical use. 
Reducing the matrix Burgers hierarchy equations to heat hierarchy equations via Cole-Hopf transformations,
\bez
   && \tbP = \bPhi_x \, \bPhi^{-1} \, , \qquad \; \; \,  \alpha \bPhi_y = \bPhi_{xx} \, , \qquad 
    \bPhi_t = \bPhi_{xxx} \, , \\
   && \tbQ = - \bPsi^{-1} \bPsi_x \, , \qquad \alpha \bPsi_y = - \bPsi_{xx} \, , \quad
    \bPsi_t = \bPsi_{xxx} \, ,
\eez
in terms of ${\bU} := \bU_0 \bPhi$ and ${\bV} := \bPsi \bV_0$, the solution of the equations for $\bX$ 
can be expressed as
\bez
    \bX = \bPsi^{-1} \Big( \bC + \int_{(x_0,y_0,t_0)}^{(x,y_0,t_0)} \bV\bU \, dx
         + \alpha^{-1} \int_{(x,y_0,t_0)}^{(x,y,t_0)} (\bV \bU_x - \bV_x \, \bU) \, dy \\
         + \int_{(x,y,t_0)}^{(x,y,t)} (\bV_{xx} \, \bU - \bV_x \, \bU_x + \bV \bU_{xx}) \, dt \Big) \, \bPhi^{-1} \, ,
\eez
with a constant matrix $\bC$. This expression generalizes (\ref{Burgers_Darboux_X}). Now one only has to 
solve linear (matrix) heat hierarchy equations in order to construct solutions of the KP equation. There are  
quite a number of previous results about matrix solutions of the KP equation and its hierarchy, see \cite{Gils+Nimm07},
for example. 
\end{remark}

Of course, all this can be extended to the whole matrix KP hierarchy, see \cite{DMH08bidiff} 
for a corresponding bidifferential calculus.

\section{Matrix Davey-Stewartson system} 
\label{sec:DS} 

\subsection{Another `Riemann equation'}
\label{subsec:DS_another_Riem}
Let $\cB_0$ be the space of smooth complex functions on $\mathbb{R}^2$. We extend it to 
$\cB =\cB_0[\pa_x]$, where $\pa_x = \pa/\pa x$. On $\mathrm{Mat}(m,m,\cB)$ we define
\be
    \mathrm{d} f = [J,f] \, , \qquad
    \bd f = [\pa_y - J \pa_x,f] \, ,
     \label{DS_Riem1_bidiff}
\ee
where $J \neq I$ is a constant $m \times m$ matrix. Then (\ref{Riemann_1}) becomes
\be
    \phi_y = [J \pa_x,\phi] + [J,\phi] \, \phi
           = J \, \phi_x + [J,\phi] \, (\phi + \pa_x)   \, .  \label{DS_preRiemann_1}
\ee
This suggests to introduce a new dependent variable $\varphi$ via
\be
     \phi = J \, \varphi - \pa_x \, ,    \label{DS_phi->varphi}
\ee
in order to eliminate explicit operator terms. Assuming $J^2 = I$, (\ref{DS_preRiemann_1}) reads
\be
    \varphi_y = J \, \varphi_x + [J,\varphi] \, J \varphi \, , \label{DS_Riemann_1}
\ee
where $\varphi$ can now be restricted to be a matrix over $\cB_0$. 
We choose $J$ and decompose $\varphi$ as in (\ref{J_varphi_decomp}).
The last equation then splits into
\be
   u_y-u_x+2qr=0 \, ,
       \quad 
   v_y+v_x+2rq=0 \, ,
       \quad 
   q_y-q_x+2qv=0 \, ,
       \quad 
   r_y+r_x+2ru=0 \, .        \label{DS_Riem1}
\ee
A Cole-Hopf transformation for this system is obtained according to Section~\ref{sec:CH}. 
The explicit operator term that arises in (\ref{phi_transform}) via the substitution (\ref{DS_phi->varphi})
is eliminated by setting $\phi_0 = -\pa_x$. Then (\ref{phi_transform}) becomes
\be
  \varphi = J \, ( \phi + \pa_x) = - J \Phi \pa_x \Phi^{-1} + J \pa_x = J \Phi_x \Phi^{-1} \, ,
       \label{DSRiem1-CH}
\ee
Using (\ref{DS_Riem1_bidiff}), (\ref{CH_Phi_eq}) with $\gamma=0$ reads 
\be   
    \Phi_y - J \Phi_{x}=0 \, .  \label{DS_heat_eq}
\ee
All solutions of (\ref{DS_Riemann_1}) can be reached in this way. Writing
\bez
     \Phi = \left( \begin{array}{cc} f_1 & f_2 \\ g_1 & g_2 \end{array} \right) \, ,
\eez
(\ref{DS_heat_eq}) means that, for $i=1,2$, $f_i$ ($g_i$) only depends on $\tilde{x}$ ($\tilde{y}$), 
where   
\be
    \tilde{x} = x+y \, , \qquad \tilde{y} = x-y \, .  \label{DS_Riem1_tildexy}
\ee 
Using $\pa_{\tilde{x}} := \frac{1}{2} (\pa_x+\pa_y)$, $\pa_{\tilde{y}} := \frac{1}{2} (\pa_x-\pa_y)$, 
the system (\ref{DS_Riem1}) can actually be expressed in a more compact form. 
But in Section~\ref{subsec:RS_for_DS} we will supplement (\ref{DS_Riem1}) by further equations that are 
not conveniently expressed in terms of these new variables. Therefore we will not pass over to them, except 
in Section~\ref{subsec:DS_Riem1_CH_scalar}, where it achieves a substantial simplification.

Next we implement the Hermitian conjugation reductions (\ref{NLS_hc_red_conditions}) with 
$C = \mbox{block-diag}(C_1,C_2)$,  where $C_1$ and $C_2$ are constant anti-Hermitian matrices. 
This generalization of the naive reduction conditions $\varphi^{\dagger}=\varphi$, respectively 
$\varphi^{\dagger}=J \varphi J$, by introduction of the matrix $C$, is important in order to recover 
relevant solutions of the DS equation from its `Riemann system' (see Section~\ref{subsec:RS_for_DS}). 
The reduction conditions are equivalent to 
\be
     r = \mu \, q^\dagger \, , \qquad
     u = u^\dagger+C_1 \, , \qquad 
     v = v^\dagger+C_2 \, , \qquad \label{DS_hc_red_mu}
\ee
where $\mu=-1$ ($\mu=1$) corresponds to the focusing (defocusing) case in (\ref{NLS_hc_red_conditions}).
This reduces (\ref{DS_Riem1}) to 
\be    
     u_y-u_x+2\mu \, q q^{\dagger}=0 \, ,  \quad
     v_y+v_x+2\mu \, q^{\dagger}q=0  \, , \quad 
     q_y-q_x+2qv = 0 \, , \quad
     q_y+q_x+2u^\dagger q=0 \, .   \label{DS_Riem1_hc_red}
\ee
Using (\ref{DSRiem1-CH}), we obtain the following translation of the reductions conditions
(\ref{NLS_hc_red_conditions}), 
\be
 \varphi^\dagger = \varphi + C & \Longleftrightarrow & 
     \Phi^{\dagger}_xJ\Phi = \Phi^{\dagger}J\Phi_x+\Phi^{\dagger} C \Phi \nonumber \\
 \varphi^\dagger = J \, \varphi \, J + C & \Longleftrightarrow & 
     \Phi^{\dagger}_x\Phi = \Phi^{\dagger}\Phi_x + \Phi^{\dagger}CJ\Phi \, . \label{DS_Phi_red}
\ee
These are \emph{non}linear ordinary differential equations for $\Phi$, so that  
the reduction conditions are difficult to implement on the level of the Cole-Hopf transformation. 
However, at least in the scalar case ($m_1=m_2=1$ in (\ref{J_varphi_decomp})), this problem can 
be solved completely, as shown below.

\subsubsection{Cole-Hopf transformation in the scalar case}
\label{subsec:DS_Riem1_CH_scalar}
We use the coordinates (\ref{DS_Riem1_tildexy}). In the scalar case ($m_1=m_2=1$), writing 
\bez
   f_i(\tilde{x}) = F_i(\tilde{x}) \, e^{\imag c \, \tilde{x}} \, , \qquad 
   g_i(\tilde{y}) = G_i(\tilde{y}) \, e^{-\imag \tilde{c} \, \tilde{y}}  \, ,
\eez
with $c:=\frac{1}{2} \imag C_1$ and $\tilde{c}:=\frac{1}{2} \imag C_2$, 
reduces (\ref{DS_Phi_red}) to the Wronskian conditions
\be
   \left|\begin{array}{cc} 
       F_1 & F_2 \\ F_1' & F_2' \end{array} \right|
     = \alpha \, , \quad
  \left|\begin{array}{cc} G_1 & G_2 \\ 
       G_1' & G_2'  \end{array}\right|
     = \mu \, \alpha \, ,     \label{DS_F,G_det_cond}
\ee
where $\alpha \in \bbR \setminus \{0\}$, a prime denotes a derivative with respect 
to the argument, and the functions $F_i, G_i$ have to be real. 
(\ref{DSRiem1-CH}) leads to
\bez
       q = \alpha \, F^{-1} \,  e^{\imag c \, \tilde{x} + \imag \tilde{c} \, \tilde{y}}  \, , \quad
       u = (\ln F)_{\tilde{x}} + \imag \, c  \, , \quad
       v= - (\ln F)_{\tilde{y}} + \imag \, \tilde{c} \, ,
\eez       
where
\bez   
    F(\tilde{x},\tilde{y}) := F_1(\tilde{x}) \, G_2(\tilde{y}) - F_2(\tilde{x}) \, G_1(\tilde{y})  \, .
\eez

\begin{example}
\label{ex:DS_Riem_scalar}
Choosing
\bez
    f_i = a_i \, e^{\theta}+b_i \, e^{-\theta^\ast} \, , \quad 
    g_i = c_i \, e^{\tilde{\theta}} + d_i \, e^{-\tilde{\theta}^\ast} \, , \quad i=1,2  \, , \qquad
 \theta := \lambda \, \tilde{x} \, , \qquad
    \tilde{\theta} := \tilde{\lambda} \, \tilde{y}  \, , 
\eez
with real constants  $a_i,b_i,c_i,d_i$, and complex constants $\lambda$, $\tilde{\lambda}$, with  
$\mathrm{Im}(\lambda) = c$ and $\mathrm{Im}(\tilde{\lambda}) = -\tilde{c}$, the above Wronskian conditions
are solved with $\alpha = -2 \Re(\lambda) \, (a_1 b_2 - a_2 b_1)$. 
An asterisk denotes complex conjugation. From (\ref{DSRiem1-CH}) we obtain
\bez
    q &=& \frac{\alpha}{\Delta} \, e^{\imag \, \Im(\theta - \tilde{\theta}) } \, , \quad
    r = \frac{\beta}{\Delta} \, e^{\imag \, \Im(\tilde{\theta} - \theta) } \, , 
                     \nonumber \\
    u &=& \imag \, \Im(\lambda) + \Re(\lambda) \,  
           \frac{\Delta_1+\Delta_2}{\Delta} \, , \quad
    v = - \imag \, \Im(\tilde{\lambda}) 
        + \Re(\lambda) \,  \frac{\Delta_1-\Delta_2}{\Delta}   \, , 
\eez
where we set $\beta := -2 \Re(\tilde{\lambda}) \, (c_1 d_2 - c_2 d_1)$ and 
\bez
    \Delta_1 &:=& (a_1d_2-a_2d_1) \, e^{\Re(\theta-\tilde{\theta})}
               - (b_1c_2 - b_2c_1) \, e^{\Re(\tilde{\theta}-\theta)}
                 \, ,   \nonumber \\
    \Delta_2 &:=& (a_1c_2-a_2c_1) \, e^{\Re(\theta+\tilde{\theta})}
          - (b_1d_2 - b_2d_1) \, e^{- \Re(\theta + \tilde{\theta})} \, , \nonumber \\
    \Delta &:=& \Delta_1 + \Delta_2
               + 2 \, (b_1d_2 - b_2d_1) \, e^{- \Re(\theta + \tilde{\theta})}
               + 2 \, (b_1c_2 - b_2c_1) \, e^{\Re(\tilde{\theta}-\theta)} \, .  
\eez
This provides us with solutions of the reduction conditions (\ref{DS_Riem1_hc_red}) if 
$\beta = \mu \, \alpha$, i.e.,
\be
     \Re(\tilde{\lambda}) = \mu \, \frac{a_1b_2 - a_2b_1}{c_1d_2 - c_2d_1} \Re(\lambda) \, , 
         \label{DS_red_cond_abcd}
\ee
assuming $a_1 b_2 \neq a_2 b_1$ and $c_1 d_2 \neq c_2 d_1$. 
A solution from the above family is regular iff $\Delta$ is everywhere different from zero. 
This is so if one of the coefficients of the sum of exponentials is positive  
and all others greater or equal to zero.
\end{example}

\begin{remark}
In the scalar case, the system (\ref{DS_Riem1_hc_red}) has the following consequences,
\bez
   (\ln q)_{\tilde{x}\tilde{y}} = - \mu \, |q|^2 \, , \qquad
   \left( u_{\tilde{x}} + \mathrm{Re}(u^2) \right)_{\tilde{y}} = 0 \, , \qquad
   \left( v_{\tilde{x}} - \mathrm{Re}(u^2) \right)_{\tilde{y}} = 0 \, .
\eez
In terms of $w := \ln q$, the first becomes a Liouville equation. The last 
two equations can be integrated to first order equations. The solutions of (\ref{DS_Riem1_hc_red}) 
obtained above also solve these equations. 
\end{remark}

\subsection{`Riemann system' associated with the matrix Davey-Stewartson system}
\label{subsec:RS_for_DS}
Let now $\cB$ be the space of smooth complex functions on $\mathbb{R}^3$. 
We extend
(\ref{DS_Riem1_bidiff}) as follows,
\be
    \mathrm{d} f = [J,f] \, \xi_1 + [\pa_y + J \pa_x,f] \, \xi_2  \, , \qquad
    \bd f = [\pa_y-J \pa_x,f] \, \xi_1 -[\imag \, \pa_t + J \pa_x^2, \, f] \, \xi_2  \, .
    \label{DS_bidiff}
\ee
In terms of $\varphi$ given by (\ref{DS_phi->varphi}),
(\ref{Riemann_1}) is equivalent to (\ref{DS_Riemann_1}), where $\varphi$ is now allowed to also 
depend on $t$, together with
\be
  \imag \, \varphi_t = - J \varphi_{xx} - \varphi_y J \varphi
                       - J \, \varphi_x J \varphi - [J,\varphi] J \varphi_x \, .  \label{DSRiem2}
\ee
These two equations constitute the `Riemann system' for
\be     
  - {\imag} \, [ J, \varphi_t ] = \varphi_{yy} - J\varphi_{xx} J
    + J[ \varphi_x + J \varphi_y , J[J,\varphi]] - J[J,\varphi] \, [J,\varphi_x] \, ,
    \label{preDS}
\ee
which results from (\ref{phi_eq}). (\ref{DSRiem2}) decomposes into
which results from (\ref{phi_eq}). (\ref{DSRiem2}) decomposes into
\bez
&& \imag q_t=-q_{xx}+q_yv-u_yq+q_xv-u_xq+2qv_x \, , \\
&& \imag r_t=r_{xx}+v_yr-r_yu+r_xu-v_xr+2ru_x \, ,  \\
&& \imag u_t=-u_{xx}+q_yr-u_yu+q_xr-u_xu+2qr_x \, , \\
&& \imag v_t=v_{xx}+v_yv-r_yq+r_xq-v_xv+2rq_x \, .
\eez
Implementing the reduction conditions (\ref{DS_hc_red_mu}), this becomes
\be
&& \imag q_t=-q_{xx}+q_yv-u_yq+q_xv-u_xq+2qv_x \, , \nonumber \\ 
&& \imag u_t=-u_{xx}+\mu q_yq^{\dagger}-u_yu+\mu q_xq^{\dagger}-u_xu+2\mu qq^{\dagger}_x \, ,\label{DS_Riem2_hc_red}\\
&& \imag v_t=v_{xx}+v_yv-\mu q^{\dagger}_yq+\mu q^{\dagger}_xq-v_xv+2\mu q^{\dagger}q_x \, ,    \nonumber\\
&& (qv)_y+(qv)_x+u_xq-u_yq-u^{\dagger}q_y+u^{\dagger}q_x=0 \, . \nonumber
\ee
The last equation of (\ref{DS_Riem2_hc_red}) is a consequence of (\ref{DS_Riem1_hc_red}).
Using the change of variables $\tilde{u} := u_y+u_x$, $\tilde{v} := v_y - v_x$, 
and (\ref{J_varphi_decomp}), we decompose (\ref{preDS}) into 
\be
   2 \imag \, q_t\, + q_{xx} + q_{yy} + 2 \, ( q \tilde{v}\,+\,\tilde{u} q ) = 0 \, , &\qquad&
         \tilde{u}_x - \tilde{u}_y = 2 \, \left( (qr)_x + (qr)_y \right) \, , \nonumber \\
   -2 \imag \, r_t + r_{xx} + r_{yy} + 2 \, (r\tilde{u}+\tilde{v}r)=0 \, ,  
   &\qquad& \tilde{v}_x+\tilde{v}_y = 2 \, \left( (rq)_x - (rq)_y \right) \, . \label{matrixDS}
\ee
The reduction conditions (\ref{DS_hc_red_mu}) imply
\bez
    r = \mu \, q^\dagger \, ,
     \qquad
    \tilde{u} = \tilde{u}^\dagger \, , \qquad \tilde{v} = \tilde{v}^\dagger \, .
\eez
This reduces (\ref{matrixDS}) to
\be    
 &&  2 \imag \, q_t + q_{xx} + q_{yy} + 2( q \, \tilde{v} + \tilde{u} \, q ) = 0 \, , \nonumber   \\
 &&  \tilde{u}_x-\tilde{u}_y=2 \mu \left( (qq^\dagger)_x + (qq^\dagger)_y \right) \, , \qquad
     \tilde{v}_x+\tilde{v}_y = 2 \mu \left( (q^\dagger q)_x - (q^\dagger q)_y \right) \, , \label{trueDS}
\ee
which is a matrix version \cite{March88,Sakh94,Lezn+YYuzb97,DMH09Sigma,Gils+Macf09,Macf10} 
of the \emph{Davey-Stewartson} (DS) equation \cite{Dave+Stew74}. 
(\ref{DS_Riem1_hc_red}) and (\ref{DS_Riem2_hc_red}) constitute the associated `Riemann system', which is 
quite involved. 
Any solution of it is also a solution of the DS system (\ref{trueDS}).
According to Section~\ref{sec:CH}, the Cole-Hopf transformation in Section~\ref{subsec:DS_another_Riem} 
extends to the present `Riemann system', if we add the equation
\be
     \imag \, \Phi_t + J \Phi_{xx} = 0 \, .   \label{DS_heat_eq2}
\ee

\begin{example}
We extend Example~\ref{ex:DS_Riem_scalar}. Now the functions $f_i$ and $g_i$ also depend on $t$, and 
(\ref{DS_heat_eq2}) leads to the additional equations
\bez
     \imag f_{j,t}+f_{j,xx} = 0 \, , \qquad \imag g_{j,t} - g_{j,xx} = 0 \, .
\eez
In the solutions presented in Example~\ref{ex:DS_Riem_scalar}, we then simply have to make the substitutions
\bez
    \theta \mapsto  \lambda \, (x + y) + \imag \lambda^2 \, t \, ,
        \qquad
    \tilde{\theta} \mapsto 
        \tilde{\lambda} \, (x - y) - \imag \tilde{\lambda}^2 \, t \, .
\eez
In this way we recover solutions of the scalar DS system (\ref{trueDS}), with $\mu=-1$,  
in a similar form as presented in \cite{Matv+Sall91}. 
We have a single \emph{dromion} solution if $a_1c_2-a_2c_1>0$, $b_1d_2-b_2d_1 > 0$, 
$b_1c_2-b_2c_1 >0$, and $a_1d_2-a_2d_1>0$.  
This degenerates to a single \emph{solitoff} solution if $a_1c_2-a_2c_1 = 0$, which in 
turn degenerates to a single \emph{soliton} solution if 
$b_1d_2-b_2d_1=0$.
In all these cases, also (\ref{DS_red_cond_abcd}), with $\mu=-1$, has to hold.
\end{example}

\section{Concluding and further remarks}
\label{sec:conclusion}
In this work we explored realizations of the `Riemann equation' (\ref{Riemann_1}) (or (\ref{Riemann_2})) 
in bidifferential calculus. 
The most basic examples are the matrix Riemann equation, and a semi- and a full discretization of it. 
These integrable discretizations are easily obtained in bidifferential calculus from 
the continuous case, essentially by replacing in the expressions for $\mathrm{d}$ and $\bd$ a commutator 
with a partial derivative operator by a commutator with a shift operator. This works correspondingly for  
continuous, semi- and fully discrete matrix NLS equations \cite{DMH10NLS}.  

The semi-discrete Riemann equation (\ref{sd_Riemann_eq_1}) appeared in \cite{HLW99} (see (3.23) therein) 
as an `infinitesimal symmetry' (analog of infinitesimal Lie point symmetry) of the discrete Burgers 
equation (\ref{dBurgers}), in the scalar case. The (fully) discrete Riemann equation (\ref{discrete_Riemann_eq_h}) 
is a corresponding `finite symmetry' of the discrete Burgers equation.
Furthermore, it can be easily verified that the semi-discrete Riemann equation (\ref{sd_Riemann_eq_1})
and the discrete Riemann equation (\ref{discrete_Riemann_eq_v1}) are compatible. More generally, 
the semi-discrete Riemann hierarchy and the (fully) discrete Riemann hierarchy 
form a common hierarchy. This can be concluded from the fact that they share the same Cole-Hopf formula 
(\ref{sdRiem1_CH}) (cf. (\ref{dRiem_1_CH})), and the corresponding linear equations are compatible. 

Another point we concentrated on in this work is the implication (\ref{Riem_to_phi_eq}). 
Special cases are the relation between the Burgers and the KP hierarchy (Section~\ref{sec:B&KP}), as well as an  
observation made in case of the sdYM equation in \cite{Zenc08sdYM}, see Section~\ref{subsec:sdYM}. 
 From the bidifferential calculus generalization it is clear that behind this 
is in fact a common and general feature, so there are counterparts 
in case of other integrable equations. We demonstrated this for (matrix versions of) 
the two-dimensional Toda lattice, a variant of Hirota's bilinear difference equation, 
(2+1)-dimensional NLS and DS equations.

If a `Riemann system' involves a Riemann equation, perhaps via a 
reduction, we may expect to obtain a `breaking soliton' case in the sense, e.g., of 
Bogoyavlenskii's work (see \cite{Bogo90RMS,Bogo91III}, in particular). Here the sdYM equation 
is the prime example. In the case of the (2+1)-dimensional NLS equation, the resulting 
solutions are actually singular, rather than just `breaking'. 

In contrast to the Riemann equation, the integrable discrete versions 
possess infinite families of regular solutions that describe 
multi-kinks. 
The appearance of very much the same families of solutions of seemingly quite different, 
integrable equations, like discrete Riemann equations, Toda, Hirota-Miwa, Burgers and KP equations, 
is traced back to simple relations between the associated `Riemann equations'. 

Our results suggest that \emph{any} integrable equation, which is not already either C-integrable (so that 
there is a kind of Cole-Hopf transformation, cf. \cite{Calo+Eckh87,Calo+Xiao92JMP,Calo92JMP})
or integrable via a hodograph method, or perhaps a combination of both, contains a subset 
of solutions that are the solutions of a system of equations which is integrable in this more special sense. 

In the examples presented in this work, we may think of exchanging $\mathrm{d}$ and $\bd$ in (\ref{Riemann_1})
to get further integrable equations.
However, a simple computation, using the Leibniz rule, shows that
\be
    \bd \phi - (\mathrm{d} \phi) \, \phi = 0  \quad \Longleftrightarrow \quad
    \mathrm{d} \phi^{-1} - (\bd \phi^{-1}) \, \phi^{-1} = 0   \, ,  \label{d,bd_exchange_Riem_sym}
\ee
assuming that $\phi$ has an inverse. 
A corresponding statement also holds for (\ref{g_eq}):
\bez
    \mathrm{d} [ (\bd g) \, g^{-1} ] = 0  \quad \Longleftrightarrow \quad
    \bd [ (\mathrm{d} g^{-1}) \, g ] = 0  \, .
\eez
In contrast, in case of (\ref{phi_eq}) an exchange of $\mathrm{d}$ and $\bd$ can lead to an 
inequivalent equation (cf. Section~\ref{subsec:Hirota}). 
In particular, it may relate a member of a hierarchy with a member of a corresponding `negative' 
or `reciprocal' hierarchy, see \cite{DMH10AKNS}. (\ref{d,bd_exchange_Riem_sym}) shows that 
the corresponding `Riemann systems' are simply related via $\phi \mapsto \phi^{-1}$. 

The `linearization method' of Section~\ref{sec:CH}, if applicable to a `Riemann system', 
does not extend to the respective realizations of 
(\ref{phi_eq}) or (\ref{g_eq}). But the binary Darboux transformation theorem (see 
Section~\ref{sec:bDT}) applies to them and quickly leads to infinite 
(soliton-type) families of explicit solutions. This will be elaborated further in a follow-up work.
In the present work we demonstrated 
that the binary Darboux transformation method also makes sense for `Riemann systems'. 
Needless to say, the list of examples presented in this work can easily be extended. 

We should also emphasize the special case of Theorem~\ref{thm:main}, formulated in 
Corollary~\ref{cor:sol_via_Riem_sys}. If the seed solution satisfies $\mathrm{d} \phi_0 =0$, 
and if the bidifferential calculus extends to second order, this expresses a way to 
generate solutions of (\ref{phi_eq}) or (\ref{g_eq}) from solutions of $n \times n$ 
versions of the associated `Riemann system', for arbitrary $n \in \bbN$. This includes 
a construction of breaking 
multi-soliton-type solutions of the sdYM equation (see the related work in \cite{Zenc08sdYM}) 
and similar solutions of the (matrix) (2+1)-dimensional NLS equation. 

It was not our aim in this work to obtain \emph{new} integrable equations, but according to our 
knowledge the matrix version (\ref{dBurgers}) of the fully discrete Burgers equation is new,
and possibly also the integrable full discretization (\ref{discrete_Riemann_eq_v1}) of the 
Riemann equation (although this would be surprising). Moreover, this also concerns the 
generalization of Hirota's bilinear difference equation obtained in Section~\ref{subsec:Hirota}.

A systematic search for equations possessing a bidifferential calculus formulation, with the help of 
computer algebra, has not yet been undertaken. It will lead to further examples.

\vskip.2cm
\noindent
\textbf{Acknowledgments.} O.C. would like to thank the Mathematical Institute of the University 
of G\"ottingen for hospitality in November 2013 - April 2014, when part of this work 
has been carried out. Special thanks go to Dorothea Bahns. Since October 2014, O.C. has been supported 
via an Alexander von Humboldt fellowship for postdoctoral researchers.  
N.S. has been supported by the Marie Curie Actions Intra-European fellowship HYDRON 
(FP7-PEOPLE-2012-IEF, Project number 332136). 
The authors are also grateful to Aristophanes Dimakis, for sharing some of his insights,
and to Eugene Ferapontov and Maxim Pavlov for some very motivating discussions.

\makeatletter
\newcommand\appendix@section[1]{
  \refstepcounter{section}
  \orig@section*{Appendix \@Alph\c@section: #1}
  \addcontentsline{toc}{section}{Appendix \@Alph\c@section: #1}
}
\let\orig@section\section
\g@addto@macro\appendix{\let\section\appendix@section}
\makeatother

\begin{appendix}
\section{Volterra lattice equation}
\label{app:LV}
The alternative integrable semi-discretization
\be
   u_t = u^+ \, u - u \, u^-         \label{LV}
\ee
of the Riemann equation is known as the (Lotka-) Volterra lattice equation (see, e.g., \cite{Suri03}).
In the scalar case, for positive solutions, writing 
$u = a^2$, it becomes the (integrable) Kac-VanMoerbeke lattice equation 
$a_t = a \, [ (a^+)^2 - (a^-)^2 ]$.
(\ref{LV}) is a matrix version of the Volterra lattice equation. Though not as a realization of 
(\ref{Riemann_1}), a kind of potential version of it can be obtained as a realization of (\ref{phi_eq}). 
Let us consider the bidifferential calculus given by
\bez
    \mathrm{d} f = [\bbS^r , f ] \, \xi_1 + [\bbS,f] \, \xi_2 \, , \qquad
    \bd f = f_t \, \xi_1 + [\bbS^{1-r} , f ] \, \xi_2 \, ,
\eez
with some integer $r>1$. Setting $\phi = \varphi \, \bbS^{-r}$ in (\ref{phi_eq}) yields
\bez
   \varphi^{(1)}_t - \varphi_t =  
    (\varphi^{(r)}-\varphi)(\varphi^{(1)}-\varphi -I)
   -(\varphi^{(1)}-\varphi -I)(\varphi^{(r)}-\varphi)^{(1-r)} \, ,
\eez
where $\varphi^{(r)} = \bbS^r \varphi \bbS^{-r}$. In terms of $u := \varphi^{(1)} - \varphi - I$, 
it takes the form
\bez
    u_t =  \sum_{i=1}^{r-1} u^{(i)} \, u - u \sum_{i=1-r}^{-1} u^{(i)} \, ,
\eez
which for $r=2$ is (\ref{LV}) (cf. \cite{DMH08bidiff}). 
The associated `Riemann system', obtained from (\ref{Riemann_1}), is
\bez
    \varphi_t = (\varphi^{(r)} - \varphi) \, \varphi \, , \qquad
    \varphi^{(r)} = I + \varphi^{(r-1)} \, (I - \varphi^{-1}) 
    \, .
\eez
The first is the semi-discrete Riemann equation (\ref{subsec:sdRiem})
(where $\bbS$ is replaced by $\bbS^r$), the second a recurrence relation. 

\end{appendix}

\small

\end{document}